\documentclass[a4paper,UKenglish,cleveref, autoref, numberwithinsect, thm-restate]{lipics-v2021}

\title{Kernelization for $H$-Packing Revisited} 

\author{Tomohiro Koana}{Graduate School of Information Science and Technology, The University of Tokyo, Japan}{tomohiro.koana@gmail.com}{https://orcid.org/0000-0002-8684-0611}{Supported in part by JST CREST Grant Number JPMJCR24Q2 and JST ERATO Grant Number JPMJER2301.}

\author{Soh Kumabe}{CyberAgent, Japan}{kumabe\_soh@cyberagent.co.jp}{https://orcid.org/0000-0002-1021-8922
}{}

\authorrunning{T. Koana and S. Kumabe} 

\Copyright{Tomohiro Koana and Soh Kumabe} 

\ccsdesc[500]{Mathematics of computing~Graph algorithms}
\ccsdesc[500]{Theory of computation~Parameterized complexity and exact algorithms}

\keywords{Parameterized Complexity, Kernelization, $H$-Packing} 

\category{} 

\relatedversion{} 

\nolinenumbers 

\usepackage[utf8]{inputenc}
\usepackage{aliascnt}
\usepackage{enumerate}
\usepackage{amsmath}
\usepackage{amsfonts}
\usepackage{nicefrac}
\usepackage{amssymb}
\usepackage{amsthm}
\usepackage{mathtools}
\usepackage{tikz}
\usetikzlibrary{calc}
\usepackage{pgfplots}
\pgfplotsset{compat=newest}
\usepackage{pgfplotstable}
\usepackage{xcolor}
\usepackage{tikz}
\usepackage{thmtools}
\usepackage{todonotes}

\theoremstyle{definition}
\newtheorem{rrule}{Rule}

\crefname{rrule}{Rule}{Rules}
\Crefname{theorem}{Theorem}{Theorems}
\Crefname{lemma}{Lemma}{Lemmas}
\Crefname{proposition}{Proposition}{Propositions}
\crefname{subfigure}{Figure}{Figures}
\Crefname{subfigure}{Figure}{Figures}

\newcommand{\problemdef}[3]{%
  \par\medskip
  \textsc{#1}\par
  \textbf{Input:} #2\par
  \textbf{Question:} #3\par
  \medskip
}

\newif\ifshortversion
\providecommand{\shortversionbuild}{false}
\expandafter\ifx\csname shortversion\shortversionbuild\endcsname\relax
  \shortversiontrue
\else
  \csname shortversion\shortversionbuild\endcsname
\fi
\long\def\shortonly#1{%
  \ifshortversion
    #1%
  \fi
}
\long\def\longonly#1{%
  \ifshortversion
  \else
    #1%
  \fi
}

\newif\ifproofomittednext
\proofomittednextfalse
\newcommand{\proofomittedsymbol}{\ensuremath{\bigstar}}
\newcommand{\proofomittedmark}{\textup{(\proofomittedsymbol)}}
\newcommand{\markproofomitted}{\global\proofomittednexttrue}

\makeatletter
\def\proofomitted@headmark{%
  \ifshortversion
    \ifproofomittednext
      \nobreakspace\proofomittedmark
    \fi
  \fi
  \global\proofomittednextfalse
}
\def\@begintheorem#1#2[#3]{%
  \deferred@thm@head{\the\thm@headfont \thm@indent
    \@ifempty{#1}{\let\thmname\@gobble}{\let\thmname\@iden}%
    \@ifempty{#2}{\let\thmnumber\@gobble}{\let\thmnumber\@iden}%
    \@ifempty{#3}{\let\thmnote\@gobble}{\let\thmnote\@iden}%
    \thm@swap\swappedhead\thmhead{#1}{#2}{#3}%
    \proofomitted@headmark
    \the\thm@headpunct
    \thmheadnl
    \hskip\thm@headsep
  }%
  \ignorespaces}
\makeatother

\DeclarePairedDelimiterX{\abs}[1]{\lvert}{\rvert}{#1}
\DeclarePairedDelimiterX{\norm}[1]{\lVert}{\rVert}{#1}
\DeclarePairedDelimiterX{\ceil}[1]{\lceil}{\rceil}{#1}
\DeclarePairedDelimiterX{\angled}[1]{\langle}{\rangle}{#1}

\newcommand{\NP}{\ensuremath{\textnormal{NP}}}
\newcommand{\coNP}{\ensuremath{\textnormal{coNP}}}
\newcommand{\poly}{\ensuremath{\textnormal{poly}}}

\newcommand{\Ext}{\mathsf{Ext}}

\DeclareMathOperator{\vc}{\ensuremath{\mathsf{vc}}}

\newcommand{\diam}{\operatorname{diam}}
\newcommand{\dist}{\operatorname{dist}}

\EventEditors{Philip Bille, Seth Pettie, and Sabine Storandt}
\EventNoEds{3}
\EventLongTitle{34th Annual European Symposium on Algorithms (ESA 2026)}
\EventShortTitle{ESA 2026}
\EventAcronym{ESA}
\EventYear{2026}
\EventDate{August 31--September 4, 2026}
\EventLocation{L'Aquila, Italy}
\EventLogo{}
\SeriesVolume{388}
\ArticleNo{85}

\begin{document}

\maketitle

\begin{abstract}
\textsc{$H$-Packing} asks whether a graph $G$ contains $k$ vertex-disjoint copies of a fixed pattern graph $H$.
Via the standard reduction to \textsc{$d$-Set Packing}, one obtains generic kernels with $O(k^{|V(H)|-1})$ vertices and $O(k^{|V(H)|})$ edges.
We revisit the question of beating these bounds for specific patterns $H$.

Our main results concern subdivided stars.
Let $S_{d_1,d_2}$ denote the subdivided star with $d_1$ branches of length $1$ and $d_2$ branches of length $2$.
We obtain kernels with $O(k^2)$ vertices and $O(k^3)$ edges for $P_5=S_{0,2}$, for $S_{1,2}$, and for every $S_{d_1,1}$, kernels with $O(k^4)$ vertices and $O(k^6)$ edges for every fixed $S_{d_1,d_2}$ with $d_1\ge 1$, and a kernel with $O(k^2)$ vertices and $O(k^4)$ edges for the paw.
Our proofs proceed in two steps. First, we reduce to instances in which all but a small part of the graph is independent, or in which the graph has a small vertex cover. Second, we reduce the independent side by keeping only a bounded number of witness vertices for each subset of the small part.

On the negative side, we prove a lower bound for the line $S_{0,d}$.
For every $d\ge 3$ and every $\varepsilon>0$, \textsc{$S_{0,d}$-Packing} does not admit a compression of size $O(k^{d-\varepsilon})$ unless $\NP\subseteq \coNP/\poly$.
Thus, deleting a single vertex from the pattern may, surprisingly, make kernelization provably harder, showing that compressibility of \textsc{$H$-Packing} is not monotone under taking induced subgraphs.
\end{abstract}

\section{Introduction}

For a fixed graph $H$, the following \textsc{$H$-Packing} problem~\cite{KirkpatrickH83,Yuster07} is a fundamental problem in theoretical computer science, extensively studied in the literature on parameterized algorithms~\cite{DellM12,FellowsKNRRSTW08,FellowsHRST04,LokshtanovMS18,Moser09}.

\problemdef{$H$-Packing:}{An undirected graph $G$ and an integer $k\in \mathbb{Z}_{\geq 0}$.}{Does $G$ contain $k$ vertex-disjoint (not necessarily induced) subgraphs iso\-\hspace*{5.2em}morphic to $H$?}

One of the lines of research for this problem is \emph{kernelization}, which refers to a polynomial-time algorithm that transforms an instance into a possibly smaller equivalent instance.
For general~$H$, the standard upper bounds come from reducing \textsc{$H$-Packing} to \textsc{$d$-Set Packing}:
\problemdef{$d$-Set Packing:}{Universe $U$, set family $\mathcal F\subseteq \binom{U}{d}$, and integer $k\in \mathbb{Z}_{\geq 0}$.}{Does $\mathcal F$ contain $k$ pairwise disjoint sets?}
Fellows et al.~\cite{FellowsKNRRSTW08} provided a kernel of $O(k^d)$ sets for this problem, while Abu-Khzam~\cite{Abu-KhzamSet10} gave a kernel with $O(k^{d-1})$ elements and $O(k^d)$ sets.
By reducing \textsc{$H$-Packing} to \textsc{$d$-Set Packing} in the standard way and then kernelizing the resulting \textsc{Set Packing} instance, one obtains an upper bound of $O(k^{|V(H)|-1})$ vertices and $O(k^{|V(H)|})$ edges for \textsc{$H$-Packing}.
Indeed, one may lift the \textsc{Set Packing} kernel back to an \textsc{$H$-Packing} instance by keeping the vertices and edges that appear in the surviving copies of $H$.
For connected~$H$, the vertex bound $O(k^{|V(H)|-1})$ was obtained earlier by Moser~\cite{Moser09}.
This leads to the following question:
\begin{quote}
Can one beat the \textsc{$d$-Set Packing} kernel bound for \textsc{$H$-Packing}?
\end{quote}

Known positive answers were confined to a few patterns.
Star packing, that is, \textsc{$K_{1,d}$-Packing}, has been studied in the kernelization literature.
Prieto-Rodriguez and Sloper~\cite{PrietoS06} obtained an $O(k^2)$-vertex kernel for general \textsc{$K_{1,d}$-Packing}, and this was improved to a $O(k)$-vertex kernel \cite{ChenFSWY19,Xiao17}.
The special case $P_3$ has attracted attention: Prieto-Rodriguez and Sloper~\cite{PrietoS06} obtained a $15k$-vertex kernel, and later work improved this to the best bound of $5k$ vertices~\cite{FellowsGMN11,LiYC22,WangNFC10}.
Dell and Marx~\cite{DellM12} showed that when $H$ is a path on $\ell$ vertices, one can beat the general bound with kernels of $O(k^{2.5})$ edges for $\ell=4$ and $O(k^3)$ edges for $\ell\geq 5$.
Cerven{\'{y}}, Choudhary, and Such{\'{y}}~\cite{CervenyCS24} studied the hitting counterpart, obtaining kernels of $O(k^2)$ edges for $\ell=4,5$ and $O(k^4)$ edges for general~$\ell$.\footnote{These kernel bounds for the hitting counterpart do not seem to imply analogous bounds for the packing problem; this was also confirmed to us by Ond\v{r}ej Such\'{y} (private communication). Intuitively, preserving a small hitting set appears substantially easier than preserving many pairwise disjoint copies.}
However, beyond these examples, the list of graphs $H$ known to beat the \textsc{$d$-Set Packing} bound remained very short.

This paper gives several new affirmative answers.
At the same time, it shows that the picture is subtler than expected.
Concretely, we show that deleting a single vertex from the pattern may lead to provably worse kernelization behavior.
This is surprising, and suggests that a more systematic understanding of kernelization for \textsc{$H$-Packing} is still needed.

\longonly{
\paragraph*{Our Results.}
}
\shortonly{
\subparagraph*{Our Results.}
}
Our main focus is the subdivided star $S_{d_1,d_2}$, which has $d_1$ branches of length $1$ and $d_2$ branches of length $2$ (\Cref{fig:subdivided_star}).
In particular, $S_{1,1}=P_4$ ($4$-vertex path), $S_{0,2}=P_5$ ($5$-vertex path), and $S_{d,0}=K_{1,d}$ ($d$-leaf star).
For the case $d_2=0$ (that is, stars), linear-vertex kernels are already known~\cite{ChenFSWY19,Xiao17}.

\begin{figure}[t]
\centering
\begin{tikzpicture}[
    x=0.55cm,y=0.55cm,
    baseline=(current bounding box.center),
    every node/.style={inner sep=0pt, outer sep=0pt},
    line cap=round, line join=round,
    ink/.style={draw=black, line width=1.0pt},
    dot/.style={fill=black, draw=black},
    panelcaption/.style={font=\small, align=center}
]
\def\captionY{-2.5}

\begin{scope}[xshift=0cm]
    \def\yA{2.4}
    \def\yOne{1.2}
    \def\yTwo{0.0}
    \def\r{1.9pt}
    \def\braceGap{0.38}

    \coordinate (A) at (0.35,\yA);
    \fill[dot] (A) circle (\r);

    \foreach \name/\x in {
        P1/-2.10,
        P2/-1.40,
        P3/-0.70,
        P4/0.00,
        P5/0.70,
        P6/1.40,
        P7/2.10,
        P8/2.80
    }{
        \coordinate (\name) at (\x,\yOne);
    }

    \draw[ink] (A) -- (P1);
    \draw[ink] (A) -- (P2);
    \draw[ink] (A) -- (P4);
    \foreach \p in {P1,P2,P4}{\fill[dot] (\p) circle (\r);}
    \node[scale=0.75] at ($(P2)!0.5!(P4)$) {$\cdots$};

    \coordinate (Q1) at ($(P5)+(0,{-\yOne})$);
    \coordinate (Q2) at ($(P6)+(0,{-\yOne})$);
    \coordinate (Q3) at ($(P8)+(0,{-\yOne})$);
    \draw[ink] (A) -- (P5) -- (Q1);
    \draw[ink] (A) -- (P6) -- (Q2);
    \draw[ink] (A) -- (P8) -- (Q3);
    \foreach \p in {P5,P6,P8,Q1,Q2,Q3}{\fill[dot] (\p) circle (\r);}
    \node[scale=0.75] at (2.10,\yOne) {$\cdots$};

    \draw[ink,decorate,decoration={brace,mirror,amplitude=6pt}]
        (-2.42,\yOne-\braceGap) -- (0.18,\yOne-\braceGap)
        node[midway,yshift=-0.50cm] {$d_1$};

    \draw[ink,decorate,decoration={brace,mirror,amplitude=8pt}]
        (0.36,\yTwo-\braceGap) -- (3.08,\yTwo-\braceGap)
        node[midway,yshift=-0.60cm] {$d_2$};

    \node[panelcaption] at (0.35,\captionY) {(a) $S_{d_1,d_2}$};
\end{scope}
\phantomsubcaption\label{fig:subdivided_star}

\begin{scope}[xshift=4.2cm]
    \def\yA{2.4}
    \def\yOne{1.2}
    \def\yTwo{0.0}
    \def\r{1.9pt}
    \def\braceGap{0.38}

    \coordinate (A) at (-0.70,\yA);
    \fill[dot] (A) circle (\r);

    \foreach \name/\x in {
        P1/-2.10,
        P2/-1.40,
        P3/-0.70,
        P4/0.00,
        P5/0.70,
        P6/1.40,
        P7/2.10,
        P8/2.80
    }{
        \coordinate (\name) at (\x,\yOne);
    }

    \draw[ink] (A) -- (P1);
    \draw[ink] (A) -- (P2);
    \draw[ink] (A) -- (P4);
    \foreach \p in {P1,P2,P4}{\fill[dot] (\p) circle (\r);}
    \node[scale=0.75] at ($(P2)!0.5!(P4)$) {$\cdots$};

    \coordinate (Q1) at ($(P5)+(0,{-\yOne})$);
    \draw[ink] (A) -- (P5) -- (Q1);
    \foreach \p in {P5,Q1}{\fill[dot] (\p) circle (\r);}

    \draw[ink,decorate,decoration={brace,mirror,amplitude=6pt}]
        (-2.42,\yOne-\braceGap) -- (0.18,\yOne-\braceGap)
        node[midway,yshift=-0.50cm] {$d_1$};

    \node[panelcaption] at (-0.70,\captionY) {(b) $S_{d_1,1}$};
\end{scope}
\phantomsubcaption\label{fig:Sd1}

\begin{scope}[xshift=7.0cm]
    \def\yA{2.4}
    \def\yOne{1.2}
    \def\yTwo{0.0}
    \def\r{1.9pt}

    \foreach \name/\x in {
        P1/-1.40,
        P2/-0.70,
        P3/0.00,
        P4/0.70,
        P5/1.40
    }{
        \coordinate (\name) at (\x,\yOne);
    }

    \coordinate (A) at (0.00,\yA);
    \fill[dot] (A) circle (\r);

    \draw[ink] (A) -- (P1);
    \fill[dot] (P1) circle (\r);

    \coordinate (Q1) at ($(P3)+(0,{-\yOne})$);
    \coordinate (Q2) at ($(P5)+(0,{-\yOne})$);
    \draw[ink] (A) -- (P3) -- (Q1);
    \draw[ink] (A) -- (P5) -- (Q2);
    \foreach \p in {P3,P5,Q1,Q2}{\fill[dot] (\p) circle (\r);}

    \node[panelcaption] at (0.00,\captionY) {(c) $S_{1,2}$};
\end{scope}
\phantomsubcaption\label{fig:S12}

\begin{scope}[xshift=10.0cm]
    \def\r{1.9pt}
    \def\s{1.5}
    \pgfmathsetmacro{\h}{sqrt(3)/2*\s}

    \coordinate (A) at (0,0);
    \coordinate (B) at (\s,0);
    \coordinate (C) at ({\s/2},\h);
    \coordinate (D) at ({\s/2},{\h+\s});

    \draw[ink] (A) -- (B) -- (C) -- cycle;
    \draw[ink] (C) -- (D);

    \foreach \p in {A,B,C,D}{\fill[dot] (\p) circle (\r);}

    \node[panelcaption] at ({\s/2},\captionY) {(d) Paw};
\end{scope}
\phantomsubcaption\label{fig:paw}
\end{tikzpicture}
\caption{Pattern graphs used throughout the paper.}
\end{figure}

Our first theorem extends this line beyond stars, giving quadratic-vertex and cubic-edge kernels when $d_2=1$ and in the two additional cases $(d_1,d_2)\in\{(0,2),(1,2)\}$.

\begin{restatable}{theorem}{ShortSubdividedStarKernel}\label{thm:short-subdivided-stars}
For every fixed subdivided star $H=S_{d_1,d_2}$ with $d_2 = 1$ or $(d_1,d_2)\in\{(0,2),(1,2)\}$, \textsc{$H$-Packing} admits a kernel with $O(k^2)$ vertices and $O(k^3)$ edges.
\end{restatable}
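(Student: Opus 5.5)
We follow the two-step scheme sketched in the abstract. \emph{Step 1 (structure).} Compute greedily a maximal $H$-packing $\mathcal P$; if it has $k$ copies we answer yes, so we may assume $|V(\mathcal P)|<k|V(H)|=O(k)$. Let $X=V(\mathcal P)$ and $R=V(G)\setminus X$. By maximality $G[R]$ is $H$-free. For $H=S_{d_1,1}$, $P_5$ or $S_{1,2}$, $H$-freeness forces every component of $G[R]$ into a restricted shape. For instance, a $P_5$-free connected graph has a dominating clique or $P_3$. An $S_{d_1,1}$-free connected graph has either bounded size or is close to a star, since otherwise a high-degree vertex together with a path of length $2$ yields the pattern. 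We use this to refine $X$. We enlarge $X$ by $O(k)$ further vertices, namely the centers of large star-like components of $G[R]$. Components of $G[R]$ that are small, or attach to $X$ trivially, are handled by reduction rules: delete components with no neighbor in $X$ that contain no copy of $H$, and replace each large star component by a bounded number of leaves. After this step, all but $O(k)$ vertices lie in an independent set $I=V(G)\setminus X'$ with $|X'|=O(k)$, possibly up to components of bounded size, which we treat as single ``types''.

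\emph{Step 2 (witness reduction).} Each vertex of $I$ interacts with a copy of $H$ only through its neighborhood in $X'$. Moreover, in the patterns considered, a vertex of $I$ plays the role of a leaf or of a degree-$2$ vertex, because both $d_2\le 2$ and the length-$1$ branches are leaves. Thus only subsets $S\subseteq N(v)\cap X'$ of size at most $2$ matter. For each vertex $x\in X'$ and each pair $\{x,y\}\subseteq X'$, we mark up to $c\cdot k$ vertices of $I$ adjacent to $x$ (respectively to both $x$ and $y$), where $c$ depends on $H$. We then delete all unmarked vertices of $I$. Naively this gives $O(k^2)\cdot O(k)=O(k^3)$ vertices. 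To reach $O(k^2)$ we mark only $O(1)$ witnesses per pair, which suffices for the following reason. Each copy in a solution uses $O(1)$ vertices of $I$, and only $O(k)$ vertices of $X'$ are used overall. Therefore, when a solution uses a deleted vertex $v$ as a witness for $\{x,y\}$, some marked witness of $\{x,y\}$ is unused unless $\{x,y\}$ is heavily used. A pair can be heavily used only if $x$ or $y$ has high load, and the single vertices get the $O(k)$ budget. This gives $O(k^2)$ vertices. The edges consist of $O(k^2)$ edges inside $X'$ plus at most $|X'|$ edges for each of the $O(k^2)$ vertices in $I$, which is $O(k^3)$ in total.

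\emph{Correctness and main obstacle.} Safeness is shown by an exchange argument. Take a solution minimizing the number of deleted vertices used. Any copy using an unmarked $v\in I$ uses $v$ via a neighbor set $S$ with $|S|\le 2$, and $S$ has more marked witnesses than the solution can block, so we swap $v$ for a free one. I expect Step~1 to be the main obstacle, especially for $P_5$ and $S_{1,2}$. There, $G[R]$ may contain large cliques or complete-bipartite-like pieces, which are $P_5$-free only in very restricted forms. We must show that such dense parts either contain many disjoint copies, allowing a direct yes-answer, or can be shrunk to $O(k)$ vertices without affecting the answer. To do this I would first try a crown/expansion-lemma argument between $X$ and the components of $G[R]$, which would bound the number of relevant components by $O(k)$.
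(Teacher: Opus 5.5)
\textbf{Step~2 has a genuine gap, in two places.}

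First, your exchange argument for pairs is not valid. A pair $\{x,y\}\subseteq X'$ is used by at most one copy of a solution anyway, since $x$ lies in only one copy. So ``heavy use'' of the pair, or high load on $x$ and $y$, is not what blocks its witnesses. The real problem is that the $O(1)$ marked common neighbors of $x,y$ can all be occupied by \emph{other} copies that use them in unrelated roles.

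Here is a counterexample for $P_5$. Let $X'=\{x,y\}\cup\{a_i,b_i : i\le c\}$ and give $x$ and $y$ many private leaves. Let $v$ be adjacent exactly to $x,y$. Let $w_i$ be adjacent to $x,y,a_i$, and add paths $a_ig_ib_ih_i$ with $g_i,h_i\in I$. With $k=c+1$ this is a yes-instance: take a leaf, $x$, $v$, $y$, a leaf, together with the paths $w_ia_ig_ib_ih_i$. Now suppose your marking keeps $w_1,\dots,w_c$ for $\{x,y\}$ and only leaves for $x$ and $y$. Then $v$ is deleted, and every remaining $P_5$ contains some $w_i$, so at most $c<k$ disjoint copies survive.

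Second, your claim that vertices of $I$ play only leaf or degree-$2$ roles overlooks the center. For $S_{1,2}$, and for $S_{d_1,1}$ with $d_1\ge 2$, the center has degree at least $3$. It may lie in $I$ with all its neighbors in the cover, and then it cannot be swapped through a set of size at most~$2$.

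The paper handles both points as follows.
\begin{itemize}
\item It builds a bipartite graph between labels $(T,j)$, where $T\subseteq C$, $|T|\in\{1,2\}$ and $j\le x_{|T|}$, and the set $I$. A label is joined to $u$ iff $T\subseteq N_G(u)$.
\item It applies the expansion lemma with $q=1$. A set $\widehat I$ with $|\widehat I|\le|\widehat L|$ is kept, and every other label gets a private mate.
\item Hence the witness sets $R_T$ are pairwise disjoint. Moreover, any $u\in I\setminus\widehat I$ adjacent to all of $T$ forces $|R_T|=x_{|T|}$.
\item Each solution copy $P$ is then rebuilt as a whole inside $\bigl(V(P)\cap(C\cup Z\cup\widehat I)\bigr)\cup\bigcup_T R_T$, the union over nonempty $T\subseteq V(P)\cap C$ with $|T|\le 2$. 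These sets are disjoint for distinct copies, so no exchange argument is needed.
\item The rebuilding is $2$-replaceability. If the center lies in $I$, the new copy is centered at one of its neighbors $a$, using $d_1$ witnesses of $\{a\}$ as leaves and one witness of $\{a,a'\}$ for each length-$2$ branch. This is exactly where $d_1\ge 1$ is used; for $P_5$ it works because the center has degree $2$.
\end{itemize}

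\textbf{Step~1 is not carried out correctly for $S_{1,2}$ and $S_{d_1,1}$.} These patterns contain a vertex of degree at least $3$ (when $d_1\ge2$ in the latter case), so arbitrarily long paths and cycles are $H$-free. An $S_{d_1,1}$-free connected graph is a star or has maximum degree at most $d_1+1$; it need not have bounded size. Bounding the number of components by $O(k)$ with the expansion lemma is indeed the paper's first move. But by itself it does not leave everything outside $O(k)$ vertices independent, and ``bounded-size components treated as types'' is not available.

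The paper proceeds differently.
\begin{itemize}
\item It moves the at most six vertices of degree at least $3$ of each component (respectively, the star centers) into $C$.
\item It keeps the remaining bounded-degree part $D$ entirely, as a set $Z$ with $E(Z,I)=\emptyset$. That is, it works with an isolating set rather than a vertex cover.
\item It proves $|Z|=O(k^2)$. The irrelevant-vertex rule puts every vertex of $D$ within distance at most $4$ of $C$. A scattered-set rule then applies: if $N_D(c)$ contains $5k$ vertices pairwise at $D$-distance at least $3$ (threshold $(d_1+2)k$ for $S_{d_1,1}$), delete $c$ and decrease $k$. Together with bounded degree this gives $|N_D(c)|=O(k)$.
\end{itemize}

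Your rule that replaces a large star component by a few leaves is unjustified, since those leaves may have different neighborhoods in $X$. The paper instead moves the star center into $C$, which sends the leaves to $I$.

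Finally, your worry about large cliques for $P_5$ comes from reading $H$-freeness as induced. In the subgraph sense, every connected $P_5$-free graph has a vertex cover of size at most $3$. So an $O(k)$ vertex cover follows directly once the number of non-trivial components is $O(k)$.
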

For the family $H=S_{d_1,1}$, the hidden constants depend polynomially on $d_1$.

At a high level, the argument has two steps, both using reductions based on the expansion lemma: first, one finds a small set that captures the non-trivial part of the graph, and then one reduces the remaining independent side.

The cleanest case is $P_5=S_{0,2}$.
The key step is to obtain a vertex cover of size $O(k)$.
Let $C$ be the union of vertex sets of a maximal family of vertex-disjoint copies of $P_5$; then $|C|=O(k)$, and $C$ meets every copy of $P_5$ by maximality.
A reduction based on the expansion lemma reduces the number of non-trivial components of $G-C$ to $O(k)$.
We then obtain a linear-size vertex cover, because every connected $P_5$-free graph has a constant-size vertex cover.
Thus each non-trivial component of $G-C$ contributes only constantly many cover vertices, while all remaining vertices are isolated in $G-C$.
Consequently the resulting graph has a vertex cover of size~$O(k)$.

The second step is to reduce the remaining independent side by keeping only a bounded number of vertices for each relevant subset of the vertex cover.
For $P_5$, applying this to the linear-size vertex cover obtained in the first step yields a quadratic-vertex kernel.

For $S_{1,2}$ and $S_{d_1,1}$, the same first idea starts the proof, but the vertex-cover argument used for $P_5$ is too rigid.
To handle this, we introduce the notion of an \emph{isolating set}.
Concretely, we seek an isolating set $C$ for which $V(G)$ admits a partition $V(G)=C\uplus Z\uplus I$, where $I$ is independent and there is no edge between $I$ and $Z$.
When $Z=\emptyset$, this set $C$ is exactly a vertex cover.
Here one again applies reduction based on the expansion lemma to obtain $|C|=O(k)$, but now keeps an additional set $Z$ of size $O(k^2)$ that contains the non-trivial surviving part, leaving the rest in the independent set $I$.
The second step then reduces this $I$-side exactly as in the $P_5$ case.
This yields \Cref{thm:short-subdivided-stars}.

We next turn to arbitrary subdivided stars with $d_1 \ge 1$.

\begin{restatable}{theorem}{SubdividedStarKernel}\label{thm:general-result}
For $d_1 \ge 1$, \textsc{$S_{d_1,d_2}$-Packing} admits a kernel with $O(k^4)$ vertices and $O(k^6)$~edges.
\end{restatable}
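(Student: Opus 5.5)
The plan follows the two-step template described above: first find an isolating set, then reduce the independent side. Let $H=S_{d_1,d_2}$ with $d_1\ge 1$, $h=|V(H)|=1+d_1+2d_2$, and $\Delta=d_1+d_2$ the degree of the center.

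\textbf{Step 1: a small isolating set.} Compute a maximal family of vertex-disjoint copies of $H$. If it has at least $k$ copies, answer yes. Otherwise let $C_0$ be the union of their vertex sets, so $|C_0|<hk$ and $G-C_0$ is $H$-free. The structural fact I would use is about connected $H$-free graphs. Such a graph either has bounded maximum degree, or every high-degree vertex $v$ has a neighbourhood that mostly cannot be extended to disjoint length-$2$ branches. The reason is that $d_1\ge1$ lets surplus neighbours serve as length-$1$ branches. So a vertex with $\Omega(h)$ neighbours and $d_2$ disjoint paths of length $2$ already contains $H$.

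Hence in $G-C_0$ every vertex of degree at least $\Delta$ has all but $O(h)$ of its neighbours forming, together with it, a "pendant-like" star structure. I would apply the expansion lemma twice at two levels:
\begin{itemize}
\item once to bound, for each vertex of $C_0$, the number of components of $G-C_0$ it attaches to, reducing to $O(k)$ relevant non-trivial components;
\item once more within the star-like parts, bounding the number of pendant-like neighbours kept per cover vertex by $O(k)$.
\end{itemize}
The outcome I aim for is a partition $V(G)=C\uplus Z\uplus I$ with $|C|=O(k^2)$, $I$ independent, no $I$--$Z$ edges, and $|Z|$ polynomially bounded. Concretely, $C$ is $C_0$ plus the high-degree centers inside the surviving components, and $Z$ collects the bounded-degree remainder, of size $O(k^2)\cdot O(1)$ per center in the worst case, or $O(k^4)$ overall. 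Each expansion-lemma application needs a safeness argument: an optimal packing can be rerouted so that it uses only the kept witnesses, because any copy of $H$ touching a discarded component or leaf interacts with $C$ only through at most $h$ cover vertices.

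\textbf{Step 2: reducing $I$.} Every vertex of $I$ has its neighbourhood in $C$. In a copy of $H$, a vertex of $I$ plays the role of a leaf, a middle vertex of a length-$2$ branch, or the center. Its relevant "type" is therefore a subset of $C$ of size at most $\Delta$ (center) or at most $2$ (other roles). I would mark, for each subset $S\subseteq C$ with $|S|\le 2$, up to $h k$ vertices of $I$ adjacent to all of $S$, and delete unmarked vertices of $I$. Vertices of $I$ serving as centers need $\Delta$ neighbours in $C$ rather than only pairs. I would handle this with a separate expansion-lemma rule: if many $I$-vertices have at least $\Delta$ neighbours in $C$, their centers can be charged to $C$ and bounded by $O(|C|)$. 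The exchange argument is standard: a packing with at most $k$ copies uses at most $hk$ vertices, so some marked twin with respect to $S$ is always free. This yields $|I|=O(|C|^2\cdot k)$. To reach $O(k^4)$ vertices overall, I must actually bound the pairs more cleverly, since a naive count gives $|C|^2k=O(k^5)$. I would use pairs only for middle vertices of length-$2$ branches, which connect a $C$-vertex to a leaf. So single subsets suffice whenever the other endpoint is fresh, and pairs are needed only within $C_0$-level structure of size $O(k)$. This gives $O(k^2)\cdot k$ plus $O(k)^2\cdot k$, all within $O(k^4)$. The edge bound $O(k^6)$ follows from $|V|^{3/2}$-type counting, or simply from $|C|\cdot|V|=O(k^2\cdot k^4)$ plus edges inside $Z$.

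\textbf{Main obstacle.} The hard part is Step 1: proving that the high-degree structure of $H$-free components is captured by $O(k^2)$ cover vertices while keeping $Z$ polynomial. The safeness of replacing long branches through $Z$ also needs care, since length-$2$ branches may pass through two discarded vertices. I would first try to prove the structural lemma that in a connected $H$-free graph, deleting all vertices of degree at least $h^2$ leaves components of bounded size. This would make $Z$ consist of bounded-size pieces attached to $C$, each of which can be handled by the expansion lemma indexed by its bounded-size attachment set in $C$.
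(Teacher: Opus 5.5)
You have a genuine gap in each step.

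\textbf{Step 1.} You never actually bound the cover. The structural lemma you fall back on is false: whenever $d_1+d_2\ge 3$, an arbitrarily long path is $H$-free and has maximum degree $2$, so deleting high-degree vertices can leave an unbounded component. The real obstacle is not addressed either. Vertices of $C_0$ and of $G-C_0$ can have unboundedly many neighbours in the independent remainder, so no degree bound is available.

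The paper avoids components altogether. It takes a maximal matching $M$ in $G-C_0$, so that $C_0\cup V(M)$ is automatically a vertex cover; let $I$ be the unmatched vertices outside $C_0$ and $d=d_1+d_2$. It then bounds $|V(M)|$ using:
(i) the rule ``delete $c\in C_0$ and decrease $k$ if the edges of $G-C_0$ meeting $N(c)$ contain a matching of size $2dk$'';
(ii) $\Delta(G[V(M)])\le 2d-1$ and $|N(x)\cap V(M)|\le 2d-2$ for $x\in I$;
(iii) K\H{o}nig-type bounds on the two-hop expansions $C_0$--$I$--$V(M)$ and $V(M)$--$I$--$V(M)$;
(iv) the fact that every vertex lies within distance $4$ of $C_0$ after deleting irrelevant vertices.
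This gives a vertex cover of size $O(d^5k^2)$ and uses $d_1\ge 1$ nowhere.

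\textbf{Step 2.} This is where $d_1\ge 1$ is indispensable, and your plan fails here in two ways.

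First, marking $hk$ twins per subset of size at most $2$ costs $O(|C|^2k)=O(k^5)$, and your repair does not work. Every neighbour of a vertex of $I$ lies in $C$, so for a middle vertex $u\in I$ of a branch $c$--$u$--$\ell$, both $c$ and $\ell$ are in $C$. Pairs over the whole $O(k^2)$-size cover are therefore genuinely needed. The missing idea is to give each \emph{label} one private witness. Take $L=(C\times[d_1])\uplus\binom{C}{2}$, join a label to every $v\in I$ that sees all of its set, apply the expansion lemma with $q=1$, and keep $\widehat I$ together with the matched partners. Every label needed by a vertex of $I\setminus\widehat I$ is matched, by properties (2) and (4) of the lemma. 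Copies of a packing are disjoint on $C$, so the labels they need come from disjoint sets and their witnesses never collide. This keeps $O(|C|^2)=O(k^4)$ vertices and $O(|C|^3)=O(k^6)$ edges.

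Second, your ``charge centers to $C$'' sketch never explains how a copy centered at a deleted $v\in I$ is rebuilt, and it never uses $d_1\ge1$. The paper re-centers such a copy at some $a\in N_P(v)\subseteq C$. The $d_1$ witnesses of $\{a\}$ serve as leaves, and witnesses of pairs $\{a,a'\}$ give branches $a$--$y$--$a'$. This needs $|N_P(v)\setminus\{a\}|=d_1+d_2-1\ge d_2$, i.e.\ $d_1\ge 1$.

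Some such use of $d_1\ge1$ is unavoidable. The first stage also yields an $O(k^2)$ vertex cover for $S_{0,d}$. So a $d_1$-oblivious Step 2 with a $d$-independent bound would contradict the paper's $O(k^{d-\varepsilon})$ compression lower bound for $S_{0,d}$, unless $\NP\subseteq\coNP/\poly$.
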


The kernelization landscape for subdivided stars is summarized in \Cref{fig:subdivided-star-landscape}.

\begin{figure}[t]
\centering
\begin{tikzpicture}[
    x=0.62cm,y=0.62cm,
    every node/.style={font=\small},
    axis/.style={->, very thick, >=stealth},
    grid/.style={gray!55, line width=0.3pt},
    linearfill/.style={fill=gray!6},
    shortfill/.style={fill=gray!15},
    generalfill/.style={fill=gray!38},
    hardfill/.style={fill=gray!62},
    regionlabel/.style={
        font=\scriptsize,
        align=center,
        fill=white,
        fill opacity=0.9,
        text opacity=1,
        inner sep=1.5pt,
        rounded corners=1pt
    }
]
\pgfmathsetmacro{\xmax}{6}
\pgfmathsetmacro{\ymax}{6}
\pgfmathsetmacro{\xmaxd}{\xmax - 1}
\pgfmathsetmacro{\ymaxd}{\ymax - 1}

\foreach \x in {0,...,\xmaxd} {
    \fill[shortfill] (\x,0) rectangle (\x + 1,1);
}
\foreach \x/\y in {0/1,1/1} {
    \fill[shortfill] (\x,\y) rectangle (\x + 1,\y + 1);
}
\fill[linearfill] (0,0) rectangle (1,1);

\foreach \x in {2,...,\xmaxd} {
    \fill[generalfill] (\x,1) rectangle (\x + 1,2);
}
\foreach \x in {1,...,\xmaxd} {
    \foreach \y in {2,...,\ymaxd} {
        \fill[generalfill] (\x,\y) rectangle (\x + 1,\y + 1);
    }
}

\foreach \y in {2,...,\ymaxd} {
    \fill[hardfill] (0,\y) rectangle (1,\y + 1);
}

\draw[gray!45, very thick] (0,0) -- (6,0) -- (6,1) -- (2,1) -- (2,2) -- (0,2) -- cycle;
\draw[gray!65, very thick] (2,1) -- (6,1) -- (6,6) -- (1,6) -- (1,2) -- (2,2) -- cycle;
\draw[gray!85, very thick] (0,2) -- (1,2) -- (1,6) -- (0,6) -- cycle;

\foreach \i in {0,...,\xmaxd} {
    \node at (\i + .5, -.45) {\i};
}
\foreach \i in {0,...,\ymaxd} {
    \node at (-.45, \i + .5) {\number\numexpr\i+1\relax};
}
\foreach \i in {0,...,\xmax} {
    \draw[grid] (\i,0) -- (\i,\ymax);
}
\foreach \i in {0,...,\ymax} {
    \draw[grid] (0,\i) -- (\xmax,\i);
}
\draw[axis] (0,0) -- (\xmax + .35,0) node[right] {$d_1$};
\draw[axis] (0,0) -- (0,\ymax + .35) node[above] {$d_2$};

\node[font=\scriptsize] at (0.5,0.5) {$O(k)$};
\node[align=center] at (3.9,0.55) {$O(k^2)$};
\node[align=center] at (3.7,3.9) {$O(k^4)$};
\node[align=center,rotate=90] at (0.5,4.0) {no $O(k^{d-\epsilon})$};
\end{tikzpicture}
\caption{
Kernelization landscape for subdivided stars.
The lightest cell, corresponding to $(d_1,d_2)=(0,1)$, admits an $O(k)$-vertex kernel.
The light-gray region consists of the remaining cases with $O(k^2)$-vertex kernels.
The darker gray region consists of the additional cases with $d_1 \ge 1$ that admit $O(k^4)$-vertex kernels.
The darkest region is the line $d_1=0$, $d_2\ge 3$, for which we rule out compressions of size $O(k^{d-\epsilon})$ for every $\epsilon>0$, unless $\NP\subseteq \coNP/\poly$.
}
\label{fig:subdivided-star-landscape}
\end{figure}

The overall strategy again has two steps.
First, we reduce the instance to one with a vertex cover of size $O(k^2)$.
Unlike the cases covered by \Cref{thm:short-subdivided-stars}, this vertex cover is no longer linear in $k$.
Starting from a maximal packing, we take a maximal matching in the remaining graph.
Because the vertices outside this matching form an independent set, the packing vertices together with the matched vertices already form a vertex cover.
The key point is then to show, by a degree analysis around the matching, that after deleting vertices that belong to no copy of $H$, only $O(k^2)$ matched vertices need to be kept.
Second, we reduce the remaining part outside this vertex cover using the same general approach as in the proof of \Cref{thm:short-subdivided-stars}.

The assumption $d_1\ge 1$ in \Cref{thm:general-result} is essential for the second step.
We introduce the notion of the pattern's \emph{$r$-replaceability}, defined formally in \Cref{sec:bounded-vc-separated}.
Roughly speaking, this notion controls how efficiently the part outside the vertex cover or isolating set can be represented once a partition $V(G)=C\uplus Z\uplus I$ is given.
Our general reduction theorem shows that, from such a partition, one obtains a kernel with $O(|C|^r+|Z|)$ vertices (\Cref{thm:replaceable_kernel}).

For subdivided stars, the distinction between the cases $d_1=0$ and $d_1\ge 1$ turns out to be fundamental.
If $d_1\ge 1$, then the presence of a branch of length $1$ makes it possible to replace the part of a copy outside the cover while preserving the pattern, and $S_{d_1,d_2}$ is already $2$-replaceable, which is what ultimately yields the polynomial kernel of \Cref{thm:general-result}.
In contrast, when $d_1=0$, our framework gives only $d_2$-replaceability for $S_{0,d_2}$.
One might suspect that this is merely a limitation of our method.
The next theorem, however, shows that it reflects a genuine structural barrier:

\begin{restatable}{theorem}{SubdividedStarHardness}\label{thm:S0d-lb-result}
For any $d\geq 3$ and $\epsilon>0$, \textsc{$S_{0,d}$-Packing} does not admit a compression with size $O(k^{d-\epsilon})$ unless $\NP\subseteq \coNP/\poly$.
\end{restatable}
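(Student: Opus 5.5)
We prove the bound by a linear-parameter transformation from a problem with a known tight compression lower bound. The natural source is \textsc{Perfect $d$-Set Packing}, or more precisely \textsc{Perfect $d$-Dimensional Matching}. Dell and Marx showed that it has no compression of size $O(n^{d-\epsilon})$ unless $\NP\subseteq\coNP/\poly$, where $n$ is the universe size and the solution size is $k=n/d$. Given a $d$-partite $d$-uniform hypergraph with parts $U_1,\dots,U_d$, each of size $k$, we will build a graph $G$ and a parameter $k'=O(k)$ such that $G$ has $k'$ disjoint copies of $S_{0,d}$ if and only if the hypergraph has a perfect matching. Any compression of size $O(k'^{d-\epsilon})$ would then give one of size $O(n^{d-\epsilon})$ for the source problem.

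\textbf{Construction.} $S_{0,d}$ has a center, $d$ middle vertices, and $d$ leaves, which suggests a gadget per hyperedge. The vertices of $U_i$ serve as leaves on branch $i$. For each hyperedge $e=(u_1,\dots,u_d)$, a center $c_e$ would need private middle vertices, but that costs $d$ vertices per edge and yields too many packable copies. Instead we use shared middle vertices $m_{i,u}$, one for each element $u\in U_i$. Element $u$ becomes a pendant-like pair: $u$ is adjacent to $m_{i,u}$ only, and $m_{i,u}$ is adjacent to $c_e$ for every hyperedge $e$ containing $u$. Then $c_e$ together with $m_{i,u_i}$ and $u_i$ for $i=1,\dots,d$ forms a copy of $S_{0,d}$. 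We set $k'=k$, so the parameter is linear in $n$.

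\textbf{Correctness.} The forward direction is immediate. The backward direction is the main obstacle, because spurious copies may exist. For example, an $m$-vertex can act as a center, using several $c_e$'s as its middle vertices and further $m$'s as leaves. Also, a $c_e$ can act as a middle vertex. To control this we use a counting argument. Every copy of $S_{0,d}$ must contain a leaf-type element vertex, or else we argue that it uses too many $m$-vertices. I would first try to enforce rigidity by degree and distance: a vertex $u$ has degree $1$, so it can only be a leaf. We then attach additional pendant structure, or replace each $u$ with a short path of length fixed in advance, so that copies not of the intended form must consume more than $2d+1$ ``budget'' vertices out of the $dk$ element pairs. Since the graph would have exactly $k(2d+1)$ vertices outside the $c_e$'s counted appropriately, a perfect packing forces every copy to be of the form $c_e$ plus $d$ element pairs. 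More carefully, the vertex set other than the $c_e$'s has size $2dk$ and each copy has $2d+1$ vertices. A copy that contains two or more $c$-vertices wastes capacity only if the non-$c$ vertices are all required. I plan to force this by noting that each $u$ must be covered: $u$ has degree $1$, so covering it requires using $m_{i,u}$ as its neighbor in the copy. Whether every $u$ must be covered is exactly what the counting needs to ensure, so I would pad the instance, or adjust the construction, until the $dk$ leaves are forced to be covered.

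\textbf{Main step to secure.} The delicate step is forcing every element vertex into the packing. If pure counting fails, because copies could consist mostly of $c$-vertices and $m$-vertices, I would add to each $c_e$ a private structure that makes $c_e$ unusable except as a center. One option is a blocking gadget that is packed into its own copy whenever $e$ is unused, with $k'$ increased by $|E|-k$ accordingly. The resulting parameter must stay $O(n)$, and since $|E|$ can be as large as $n^d$, this increase is problematic. So I would aim to keep $k'=k$ and rely on the rigidity argument from the degree-$1$ leaves and the depth of $S_{0,d}$ (which has diameter $4$). Getting this case analysis right, for $d\ge3$, is where the condition $d\ge3$ should enter.
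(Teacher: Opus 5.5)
There is a genuine gap. Your construction is the right one: it coincides with the paper's, with your $m_{i,u}$, $u$, $c_e$ playing the roles of $c_u$, $i_u$, $x_F$, and with $k'=k$. However, the backward direction is never proved, and the routes you sketch for it do not work. Your claim that every copy must contain an element vertex is false. A copy centered at $c_e$ may use vertices $c_{e'}$, for other hyperedges $e'$ through the same elements, as its leaves. A copy centered at some $m_{i,u}$ may use $c_e$'s as middle vertices and other $m$-vertices as leaves. So you cannot force all $dk$ element leaves to be covered. Counting over the $2dk$ non-$c$ vertices yields nothing, because a copy may consist largely of $c$-vertices. The padding, path-replacement and blocking-gadget fallbacks are unnecessary, and as you note yourself, the blocking gadget destroys the parameter.

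The missing idea is to count on the middle side, using bipartiteness. Your graph is bipartite with parts $M=\{m_{i,u}\}$, of size exactly $dk$, and $U\cup\{c_e\}$. In a bipartite host, a copy of $S_{0,d}$ places its center and its leaves on one side and its $d$ middle vertices on the other. Hence a copy centered outside $M$ uses exactly $d$ vertices of $M$, while a copy centered in $M$ uses $d+1$: the center plus $d$ leaves. Since $k$ disjoint copies must fit into $|M|=dk$, every copy uses exactly $d$ vertices of $M$, so none is centered in $M$. None is centered at an element vertex either, since those have degree $1<d$. Therefore every center is some $c_e$. Because $\deg(c_e)=d$, its middle vertices are exactly $\{m_{i,u_i}: u_i\in e\}$, whatever its leaves happen to be. Vertex-disjointness of the copies then gives $k$ pairwise disjoint hyperedges, i.e., a perfect matching. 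With this argument $k'=k$ works as is. The hypothesis $d\ge 3$ enters only through the Dell--Marx lower bound for the source problem, not through any case analysis in the reduction.
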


\shortonly{The lower bound is by a parameter-preserving reduction from \textsc{Exact $d$-Set Packing}, which by Dell and Marx has no compression of size $O(k^{d-\epsilon})$ unless $\NP\subseteq \coNP/\poly$. For each element $u$, the construction creates an edge $c_ui_u$, and for each set $F$ it creates a vertex $x_F$ adjacent precisely to the vertices $c_u$ with $u\in F$. Hence each selected set yields a copy of $S_{0,d}$ centered at $x_F$, whose $d$ length-two branches encode the elements of $F$. Conversely, since there are exactly $dk$ vertices of the form $c_u$, any packing of $k$ such stars forces the centers to be set-vertices $x_F$ and recovers $k$ pairwise disjoint $d$-sets. Thus the parameter stays $k$ and the exponent $d$ is inherited unchanged. See the full version for the full construction and proof.}

This shows that deleting a single vertex from the pattern can make kernelization strictly harder: \textsc{$S_{1,7}$-Packing} admits a kernel with $O(k^6)$ edges by \Cref{thm:general-result}, whereas \textsc{$S_{0,7}$-Packing} does not admit a compression of size $O(k^{7-\epsilon})$ unless $\NP\subseteq \coNP/\poly$.
It also highlights the role of replaceability: $S_{1,7}$ is already $2$-replaceable, while $S_{0,7}$ is only $7$-replaceable, and this gap is reflected in their kernelization complexity.

Finally, we consider the paw (\Cref{fig:paw}), i.e., a triangle with a pendant edge.

\begin{restatable}{theorem}{PawKernel}\label{thm:paw-result}
\textsc{Paw-Packing} admits a kernel with $O(k^2)$ vertices and hence $O(k^4)$ edges.
\end{restatable}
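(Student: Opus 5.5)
We prove \Cref{thm:paw-result} with the two-step template used for subdivided stars. First we reduce to an instance with a vertex cover of size $O(k)$. Then we shrink the independent side with the replaceability machinery of \Cref{thm:replaceable_kernel}.

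\emph{Step 1: a maximal packing.} We compute greedily a maximal family of vertex-disjoint paws. If it has at least $k$ members we output a trivial yes-instance. Otherwise let $C$ be the union of its vertex sets, so $|C|<4k$, and $G-C$ is paw-free. Every connected paw-free graph is triangle-free or complete multipartite (Olariu). So each component of $G-C$ is either triangle-free or complete multipartite.

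\emph{Step 2: components containing triangles.} Consider a component $K$ of $G-C$ that contains a triangle, so $K$ is complete multipartite with at least three parts. If $K$ has at least $4$ vertices, it contains a paw whenever some part has size at least $2$. Otherwise it is a clique, and then every vertex of $C$ with a neighbour in $K$ creates a paw together with that neighbour and a triangle through it.

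The main obstacle is bounding such components. A component using no vertex of $C$ contributes only $K_4$'s, never paws. Hence a large clique component of $G-C$ with no neighbours in $C$ is irrelevant and can be deleted. Components that are triangles or cliques attached to $C$ are handled by an expansion-lemma rule. We build a bipartite graph between $C$ and the components, and keep only $O(k)$ components per vertex of $C$; the surplus is deleted safely by the usual exchange argument.

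We also need to control large clique or multipartite components that have neighbours in $C$. Any such component of size at least $5$ already contains a paw inside $G-C$, contradicting maximality. So all remaining triangle-containing components have size at most $4$ and are bounded in number. Consequently they contribute $O(k)$ vertices to a vertex cover.

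\emph{Step 3: triangle-free components.} A triangle-free component forms a paw only together with $C$. A paw needs a triangle, and any triangle meets $C$. We classify edges $uv$ of $G-C$ by the set of common neighbours of $u$ and $v$ in $C$. Only edges with at least one common $C$-neighbour matter for triangles. Pendant edges can attach anywhere, but they attach to triangle vertices, which are mostly in $C$.

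For every vertex $c\in C$, we apply the expansion lemma to the bipartite graph between $C$ and the edges of $G-C$ forming a triangle with $c$. This leaves $O(k)$ such edges per $c$, hence $O(k^2)$ edges total. That is too many for a linear vertex cover. Instead, we treat vertices of $G-C$ lying on such edges as forming the set $Z$ of an isolating partition $V(G)=C\uplus Z\uplus I$. Here the remaining vertices of $G-C$ have all triangle-relevant structure through $C$ only.

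We then reduce $Z$ further by a second expansion argument on the components of $G[Z]$. The goal is a vertex cover $C'$ of size $O(k)$ (adding $Z$ to $C$) with independent rest $I$. This is the step I expect to be hardest, since triangle-free components can be large matchings hanging off $C$. The fallback is to accept $|Z|=O(k^2)$ and use an isolating-set bound of the form $O(|C|^r+|Z|)$.

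\emph{Step 4: replaceability.} In a paw, any vertex of $I$ has all neighbours in $C$. If an $I$-vertex is the pendant vertex, it uses one $C$-neighbour. If it is a triangle vertex, it uses two $C$-neighbours, which must be adjacent. So the paw is $2$-replaceable: an $I$-vertex only needs to be matched with a witness for its at-most-$2$ relevant $C$-neighbours. For each subset $S\subseteq C$ with $|S|\le 2$, we keep $O(k)$ witnesses in $I$ adjacent to all of $S$ and delete the rest.

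\Cref{thm:replaceable_kernel} then gives $O(|C|^2+|Z|)=O(k^2)$ vertices. The edge bound $O(k^4)$ is immediate from the vertex bound.
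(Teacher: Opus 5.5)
\textbf{There is a genuine gap in Steps 3--4.}

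Your final count rests on \Cref{thm:replaceable_kernel}, whose bound $|Z|+|C|+\sum_i x_i\binom{|C|}{i}$ is $O(k^2)$ only if you have an isolating partition with $|C|=O(k)$ and $|Z|=O(k^2)$. Step 3 never produces one. Triangle-free components of $G-C$ can be arbitrarily large. For example, let $c\in C$ be adjacent to centres $u_1,\dots,u_t$ with $t<4k-2$, and to all leaves of huge stars in $G-C$ centred at the $u_l$. Since $I$ must be independent and $E(Z,I)=\emptyset$, each such star must lie entirely in $Z$ unless you move its centre (or all its leaves) into $C$. This has three consequences:
\begin{itemize}
\item ``The vertices lying on triangle edges'' do not form a valid $Z$.
\item Your expansion step ``leaving $O(k)$ such edges per $c$'' is not a defined safe rule. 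Here $G[N_{G-C}(c)]$ has unboundedly many edges but matching number only $t$.
\item Your fallback $|Z|=O(k^2)$ has no argument behind it.
\end{itemize}
If $\Theta(k)$ vertices of $C$ each see $\Theta(k)$ such stars, then moving the centres into the cover (the natural fix) gives a cover of size $\Theta(k^2)$. \Cref{thm:replaceable_kernel} with $r=2$ then yields only $O(k^4)$ vertices, because it reserves a witness for every pair of cover vertices. Keeping ``$O(k)$ witnesses per $S$'' as in your Step 4 would be worse still.

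\textbf{What the paper does instead.} It accepts a cover of size $O(k^2)$ but avoids paying for all pairs.
\begin{itemize}
\item It first applies a matching rule: delete $c$ if $G[N_{G-C}(c)]$ has a matching of size $4k-2$.
\item It then takes $X_c$ to be a vertex cover of $G[N_{G-C}(c)]$, padded to exactly $8k$ vertices when $|N_{G-C}(c)|>8k$, and sets $C^\ast:=C\cup\bigcup_c X_c$.
\item It deletes every edge with both ends outside $C^\ast$. Such an edge lies on no triangle, so it can only be the pendant edge of a paw with triangle $cbx$. That paw can be rerouted to the pendant edge $cy$ for an unused $y\in X_c$. This makes the outside independent without bounding any component.
\item The second stage indexes representatives by $C^\ast\uplus E_{\triangle}(C^\ast,I^\ast)$, i.e., only those cover edges lying on a triangle with an outside vertex. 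Such an edge not inside $C$ must join some $c\in C$ to a vertex of $X_c$, so $|E_{\triangle}|=O(k^2)$.
\item Expansion with $q=1$ then keeps at most $|C^\ast|+|E_{\triangle}|=O(k^2)$ outside vertices.
\end{itemize}

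\textbf{A separate error in Step 2.} You misapply Olariu's theorem, which concerns induced paws. Here $G-C$ has no paw \emph{subgraph}, and $K_4$ contains a paw. So every component of $G-C$ that contains a triangle is exactly a triangle.
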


This is the first example, to the best of our knowledge, of beating the general \textsc{$d$-Set Packing} bound for a pattern that contains a cycle.
The proof again follows a two-step kernelization scheme, but now in a form adapted to triangles.
Starting from a maximal paw packing, one reduces the number of triangle components outside the initial set and controls large matchings around its vertices.
This yields an equivalent instance with a vertex cover $C$ of size $O(k^2)$.
The remaining outside vertices matter only through $O(k^2)$ edges of $C$ that can participate in triangles, and a second matching-based kernelization using these edges gives the kernel in \Cref{thm:paw-result}.

\longonly{
\paragraph*{Related Work.}
}
\shortonly{
\subparagraph*{Related Work.}
}

The algorithmic study of \textsc{$H$-Packing} goes back to Hell and Kirkpatrick's work on generalized matching problems~\cite{KirkpatrickH78,KirkpatrickH83}: for fixed $H$, the problem is polynomial-time solvable when each connected component of $H$ has at most two vertices and NP-hard when $H$ has at least three vertices.
For broader background on the algorithmic and computational aspects of graph packing, we refer to Yuster's survey~\cite{Yuster07}.

From the parameterized perspective, the generic benchmark is \textsc{$d$-Set Packing}.
For that problem, the fastest known generic randomized algorithms are due to Bj{\"o}rklund, Husfeldt, Kaski, and Koivisto~\cite{BjorklundHKK17}; in particular, they solve \textsc{$3$-Set Packing} in time $O^*(3.328^{k})$ and \textsc{$4$-Set Packing} in time $O^*(7.312^k)$.\footnote{The $O^*$ notation suppresses polynomial factors.}
Deterministically, Nederlof~\cite{Nederlof25} very recently obtained an $O^*(2^{dk+o(dk)})$-time algorithm for \textsc{$d$-Set Packing}.
Via the standard reduction from \textsc{$H$-Packing} to \textsc{$d$-Set Packing}, this yields fixed-parameter algorithms for \textsc{$H$-Packing}.
For the special case $P_3$, Zehavi~\cite{Zehavi15} gave a deterministic algorithm with running time $O^*(6.75^k)$ based on Feng et al.~\cite{FengWC14}.

Concerning kernelization, generic kernels for \textsc{$H$-Packing} with $O\!\left(k^{|V(H)|-1}\right)$ vertices and $O(k^{|V(H)|})$ edges follow from \textsc{$d$-Set Packing}~\cite{Abu-KhzamSet10,FellowsKNRRSTW08}.
For connected $H$, Moser~\cite{Moser09} had earlier obtained the vertex bound $O\!\left(k^{|V(H)|-1}\right)$ directly for \textsc{$H$-Packing}.
On the lower-bound side, Dell and Marx~\cite{DellM12} showed that \textsc{$H$-Packing} does not admit a kernel of $O(k^{2-\epsilon})$ edges for any $\epsilon>0$ and any connected graph $H$ with at least three vertices unless $\NP \subseteq \coNP/\poly$.
For cliques, they further proved that \textsc{$K_d$-Packing} does not admit a kernel of $O(k^{d-1-\epsilon})$ edges for any $\epsilon>0$ and any $d\ge 3$ unless $\coNP \subseteq \NP/\poly$.

Before the present paper, the only graphs known to beat these generic upper bounds were essentially stars and paths.
For stars, the kernel size was improved from quadratic to linear~\cite{ChenFSWY19,PrietoS06,Xiao17}; for the special case $P_3$, the kernel size was later reduced further to $5k$ vertices~\cite{FellowsGMN11,LiYC22,PrietoS06,WangNFC10}.
For paths, Dell and Marx~\cite{DellM12} obtained kernels with $O(k^{2.5})$ edges for $P_4$ and $O(k^3)$ edges for $P_\ell$ with $\ell\geq 5$.


A variant that considers packing induced subgraphs instead of subgraphs, \textsc{Induced $H$-Packing}, has also been studied.
Via the standard reduction to \textsc{Set Packing}, one obtains a compression with $O(k^{|V(H)|-1})$ elements and $O(k^{|V(H)|})$ sets.
Fomin et al.~\cite{FominLLSTZ19} obtained an $O(k^{5/3})$-vertex kernel for induced $P_3$-Packing via expansion lemma, and Bessy et al.~\cite{BessyBTW23} improved this to $O(k)$ vertices via rainbow matching.
These papers also obtained improved kernels for packing oriented triangles in tournaments.

\longonly{
\paragraph*{Organization.}
\Cref{sec:preliminaries} introduces notation and the expansion lemma.
The next three sections are devoted to subdivided stars.
\Cref{sec:direct-reductions} handles the special cases $P_5$, $S_{1,2}$, and $S_{d_1,1}$ by reducing them to instances with linear-size isolating sets, while \Cref{sec:first-stage-vc} handles general subdivided stars by reducing them to instances with a quadratic-size vertex cover.
\Cref{sec:bounded-vc-separated} then develops the general reduction from isolating sets and bounded vertex cover, establishes the replaceability properties, and completes the proofs of \Cref{thm:short-subdivided-stars,thm:general-result}.
\Cref{sec:lowerbound} proves \Cref{thm:S0d-lb-result}.
Finally, \Cref{sec:paw} proves \Cref{thm:paw-result}.
}
\shortonly{
\subparagraph*{Organization.}
\Cref{sec:preliminaries} introduces notation and the expansion lemma.
The next three sections are devoted to subdivided stars.
\Cref{sec:direct-reductions} reduces the special cases $P_5$, $S_{1,2}$, and $S_{d_1,1}$ to instances with small isolating sets.
\Cref{sec:bounded-vc-separated} then develops the general reduction from isolating sets and bounded vertex cover, establishes the replaceability properties, and completes the proofs of \Cref{thm:short-subdivided-stars,thm:general-result}.
Proofs of statements marked with \proofomittedmark{} are omitted, and \Cref{thm:S0d-lb-result,thm:paw-result} are proved only in the full version.
}

\section{Preliminaries}\label{sec:preliminaries}

We assume that the reader is familiar with the basic notions of parameterized complexity; for background, we refer to Cygan et al.~\cite{Cygan2015}.

\subparagraph*{Graphs.}
In this paper, all graphs are finite, simple, and undirected.
For a graph $G$ and a vertex set $X\subseteq V(G)$, we write $G[X]$ for the induced subgraph on $X$ and $N_G(X)$ for the neighborhood of $X$ in $G$; if $X=\{x\}$, we abbreviate $N_G(\{x\})$ to $N_G(x)$.
For two disjoint vertex sets $X,Y\subseteq V(G)$, we write $G[X,Y]$ for the bipartite graph obtained from $G[X\cup Y]$ by removing all edges in $G[X]$ and $G[Y]$.
We denote a path by the sequence of its vertices, so $(v_1,\dots,v_r)$ is the path with edges $v_iv_{i+1}$ for $i\in\{1,\dots,r-1\}$.
For $D\subseteq V(G)$ and $X\subseteq V(G)$, let $N_D(X):=N_G(X)\cap D$, and for an integer $r\ge 1$, let $N_D^{\le r}(X)$ denote the set of vertices $v\in D$ such that, for some $x\in X$, there is a path $(x,\dots,v)$ of length at most $r$, or equivalently, at most $r$ edges, whose internal vertices all lie in $D$.

We write $\deg_G(v)$ for the degree of a vertex $v$ in $G$, $\Delta(G)$ for the maximum degree of~$G$, $\dist_G(u,v)$ for the distance between $u$ and $v$ in~$G$, and $\diam(G)$ for the diameter of a connected graph $G$.
In particular, every subdivided star $S_{d_1,d_2}$ has diameter at most $4$.

\subparagraph*{Kernelization.}
Two instances $(G,k)$ and $(G',k')$ of \textsc{$H$-Packing} are \emph{equivalent} if either both are yes-instances or both are no-instances.
A \emph{kernelization} of size $f(k)$ for a parameterized problem is a polynomial-time algorithm that maps each instance $(x,k)$ to an equivalent instance $(x',k')$ of the same problem whose size is bounded by a computable function $f(k)$.
More generally, a \emph{compression} of size $f(k)$ outputs an equivalent instance of some parameterized problem that can be encoded using $O(f(k))$ bits.
In this paper, for graph problems we measure the size of a reduced instance in terms of the number of vertices and edges of the reduced graph.
A reduction rule that replaces an instance $(G,k)$ by an instance $(G',k')$ is \emph{safe} if $(G,k)$ and $(G',k')$ are equivalent.

For a fixed graph $H$, we will repeatedly use the following elementary reduction rule.

\begin{rrule}\label{rule:irrelevant-element}
If a vertex or an edge of $G$ is contained in no copy of $H$, delete it.
\end{rrule}

\markproofomitted
\begin{lemma}\label{lem:irrelevant-element-safe}
\Cref{rule:irrelevant-element} is safe.
\end{lemma}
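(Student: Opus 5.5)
The plan is to show that deleting an irrelevant element $x$ (a vertex or an edge lying in no copy of $H$) preserves the existence of $k$ vertex-disjoint copies of $H$ in both directions; the parameter $k$ is unchanged. Write $G' := G - x$, meaning vertex deletion if $x$ is a vertex and edge deletion if $x$ is an edge.

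\emph{From $G'$ to $G$.} The graph $G'$ is a subgraph of $G$. Hence every subgraph of $G'$ isomorphic to $H$ is also a subgraph of $G$. A packing of $k$ vertex-disjoint copies in $G'$ is therefore a packing in $G$.

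\emph{From $G$ to $G'$.} Take $k$ vertex-disjoint subgraphs $H_1,\dots,H_k$ of $G$, each isomorphic to $H$. By assumption, $x$ lies in no copy of $H$. So if $x$ is a vertex, then $x\notin V(H_i)$ for every $i$, and if $x$ is an edge, then $x\notin E(H_i)$ for every $i$. Consequently each $H_i$ is still a subgraph of $G'$: all its vertices survive, and all its edges survive because none of them is $x$ and, in the vertex case, none is incident to $x$. Vertex-disjointness is unaffected, so $H_1,\dots,H_k$ form a packing in $G'$.

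There is essentially no obstacle. The only point needing care is the precise meaning of ``contained in'': a vertex or edge is contained in a copy of $H$ when it belongs to that copy's vertex set or edge set, respectively. With that reading, the argument is immediate.

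We also note that applying the rule exhaustively takes polynomial time for fixed $H$. Indeed, one can enumerate all $O(n^{|V(H)|})$ candidate copies and mark every vertex and edge that appears in some copy.
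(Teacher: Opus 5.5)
Your proof is correct and matches the paper's argument: one direction holds because $G'$ is a subgraph of $G$, and for the other, every copy in a packing of $G$ avoids the deleted element and therefore survives in $G'$. Your explicit remark that, when a vertex is deleted, no edge of a copy is incident to it is a detail the paper leaves implicit.
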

\longonly{
\begin{proof}
Let $G'$ be obtained from $G$ by deleting a vertex or an edge that is contained in no copy of $H$.
Every $H$-packing in $G'$ is also an $H$-packing in $G$.
Conversely, no copy of $H$ in $G$ uses the deleted vertex or edge, so every $H$-packing in $G$ is also contained in $G'$.
Hence $(G,k)$ and $(G',k)$ are equivalent.
\end{proof}
}

After exhausting the rule, we may assume that every vertex and every edge of $G$ belongs to a copy of $H$.
Since $H$ is fixed, applicability of the rule can be checked in polynomial time.

\subparagraph{Expansion lemma.}
The expansion lemma, originally introduced by Thomass{\'e}~\cite{Thomasse10} in his quadratic-kernel algorithm for \textsc{Feedback Vertex Set}, is a standard tool in kernelization.
For background and applications, we refer to the textbook on kernelization by Fomin et al.~\cite{FominLSZ19}.
In this paper, we repeatedly use the following formulation due to Fomin et al.~\cite{FominLLSTZ19}.
\begin{lemma}[Expansion lemma~\cite{FominLLSTZ19}]\label{lem:expansion}
Let $B=(L\uplus R,E)$ be a bipartite graph and $q\ge 1$ be an integer.
There is a polynomial-time algorithm that computes sets $\widehat L\subseteq L$, $\widehat R\subseteq R$, and an edge set $M\subseteq E$ such that
\begin{enumerate}
    \item $|\widehat R|\le q|\widehat L|$,
    \item every vertex of $L\setminus \widehat L$ is incident with exactly $q$ edges of $M$,
    \item every vertex of $R\setminus \widehat R$ is incident with at most one edge of $M$, and
    \item there is no edge of $B$ between $\widehat L$ and $R\setminus \widehat R$.
\end{enumerate}
\end{lemma}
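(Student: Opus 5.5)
The plan is to reduce the statement to a maximum matching computation in an auxiliary bipartite graph and then read $\widehat L$ and $\widehat R$ off the alternating-path structure. This is essentially König's theorem applied to a graph with $q$ copies of each left vertex.

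\textbf{Construction.} Build the bipartite graph $B_q$ with left side $L\times\{1,\dots,q\}$ and right side $R$. Each copy $(\ell,i)$ is adjacent to exactly $N_B(\ell)$. Compute a maximum matching $M_q$ of $B_q$ in polynomial time, for instance with Hopcroft--Karp.

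\textbf{Defining the sets.} Let $U$ be the set of copies left unmatched by $M_q$. Let $X$ be the set of vertices of $B_q$ reachable from $U$ by $M_q$-alternating paths, that is, paths that leave the left side along non-matching edges and return along matching edges. Define
\begin{itemize}
\item $\widehat L$ as the set of $\ell\in L$ that have at least one copy in $X$;
\item $\widehat R := X\cap R$;
\item $M$ as the image in $B$ of the matching edges of $M_q$ incident to copies of vertices in $L\setminus\widehat L$.
\end{itemize}

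\textbf{Verifying the four properties.}

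Property (3) holds because $M_q$ is a matching. Hence every $r\in R$ lies in at most one edge of $M_q$, and therefore in at most one edge of $M$.

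Property (2): if $\ell\notin\widehat L$, then no copy of $\ell$ lies in $U\subseteq X$. So all $q$ copies of $\ell$ are matched, and they are matched to $q$ pairwise distinct vertices of $R$. Thus $\ell$ is incident with exactly $q$ edges of $M$.

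Property (4) is the key closure argument. I first claim that if one copy $(\ell,i)$ lies in $X$, then every copy $(\ell,j)$ does.
\begin{itemize}
\item All copies share the neighborhood $N_B(\ell)$, and that neighborhood lies in $X$: any $r \in N_B(\ell)$ is reached from $(\ell,i)$ by a non-matching edge, or is the $M_q$-partner of $(\ell,i)$. In the latter case $(\ell,i)$ is matched, so it is not in $U$ and was itself reached via its partner $r$, which therefore already lies in $X$.
\item If $(\ell,j)$ is unmatched, it lies in $U\subseteq X$.
\item Otherwise $(\ell,j)$ is matched to some $r\in N_B(\ell)\subseteq X$, and it is reached from $r$ along the matching edge.
\end{itemize}
Consequently $N_B(\widehat L)\subseteq X\cap R=\widehat R$. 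So there is no edge between $\widehat L$ and $R\setminus\widehat R$.

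Property (1): every $r\in\widehat R$ is matched by $M_q$. Otherwise the alternating path from $U$ to $r$ would be augmenting, contradicting maximality. The partner of $r$ is a copy lying in $X$, hence a copy of a vertex of $\widehat L$. Distinct vertices of $\widehat R$ have distinct partners, so $|\widehat R|$ is at most the number of copies of $\widehat L$, which is $q|\widehat L|$.

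\textbf{Main obstacle.} The step needing care is the closure claim behind property (4): showing that $X$ contains either all copies of $\ell$ or none. This is what lets one pass from the copy graph $B_q$ back to $B$. The remaining properties are routine consequences of maximality. All steps run in polynomial time, since $|V(B_q)|\le q|L|+|R|$ and we may assume $q\le|R|$ (otherwise $\widehat L=L$, $\widehat R=R$ trivially works).
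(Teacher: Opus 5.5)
Your proof is correct. The paper does not prove this lemma; it imports it from Fomin et al., so there is no in-paper argument to compare against. Your argument takes a maximum matching in the $q$-fold blow-up of $L$ and then the set $X$ of vertices reachable by alternating paths from the unmatched copies. This is the standard K\H{o}nig-type proof of expansion statements, and it gives a self-contained justification. All four properties check out: the augmenting-path argument gives (1), and the neighbourhood argument gives (4).

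\textbf{The closure claim is not needed.} You flag as the main obstacle the claim that $X$ contains all copies of $\ell$ or none, but no property depends on it.
\begin{itemize}
\item Property (4) only uses your first bullet: $N_B(\ell)\subseteq X$ as soon as one copy of $\ell$ lies in $X$.
\item Properties (1) and (2) only use that right vertices of $X$ are matched and that their partners lie in $X$.
\end{itemize}

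\textbf{A stronger property worth recording.} Your construction also gives something the statement does not list: every edge of $M$ at a vertex $\ell\in L\setminus\widehat L$ ends in $R\setminus\widehat R$. Indeed, if the partner $r$ of a copy of $\ell$ were in $X$, that copy would be reached from $r$ along the matching edge, and then $\ell\in\widehat L$.

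Properties (1)--(4) alone do not imply this. Take
\[
L=\{\ell_1,\ell_2\},\quad R=\{r_1,r_2\},\quad E=\{\ell_1r_1,\ell_1r_2,\ell_2r_2\},\quad q=1,
\]
with $\widehat L=\{\ell_2\}$, $\widehat R=\{r_2\}$, and $M=\{\ell_1r_2\}$. All four properties hold, yet the $M$-edge at $\ell_1$ goes into $\widehat R$.

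The paper nevertheless relies on this property tacitly:
\begin{itemize}
\item In the proof of \Cref{thm:replaceable_kernel} it takes $m_\lambda\in I\setminus\widehat I$ and uses disjointness of the sets $R_D$.
\item In the component reduction it asserts that exactly $d$ components of $\mathcal D$ are assigned to each vertex of $A$.
\item In the paw kernel it says $M$ saturates $A$ into $R_D\subseteq I\setminus\widehat I$.
\end{itemize}
So you should state this property explicitly as part of what your proof establishes. It is precisely the version of the lemma the paper actually needs.
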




\longonly{
\section{Reducing the Special Cases to Instances with a Linear-Size Isolating Set}\label{sec:direct-reductions}
}
\shortonly{
\section{Reducing to Small Structured Instances}\label{sec:direct-reductions}
}

\longonly{
As outlined in the introduction, our goal in this section is to reduce the special cases to instances equipped with small isolating sets, which we now define formally.
}
\shortonly{
As outlined in the introduction, this section contains the first-stage reductions for subdivided stars.
We first treat the special cases $P_5$, $S_{1,2}$, and $S_{d_1,1}$.
We use the following notion.
}

\begin{definition}[Isolating set]
Let $G$ be a graph.
A set $C\subseteq V(G)$ is an \emph{isolating set} of $G$ if there is a partition
\(
V(G)=C\uplus Z\uplus I
\)
such that $I$ is an independent set and $E(Z,I)=\emptyset$.
We say that such a partition \emph{witnesses} that $C$ is an isolating set.
In particular, if $Z=\emptyset$, then $C$ is a vertex cover of $G$.
\end{definition}

\longonly{
More precisely, we prove that for each fixed graph $H\in \{S_{0,2}=P_5,S_{1,2},S_{d_1,1}\}$, every instance of \textsc{$H$-Packing} can be reduced in polynomial time to an equivalent instance equipped with an isolating set $C$ and a partition $V(G)=C\uplus Z\uplus I$ witnessing that $C$ is an isolating set, where $|C|=O(k)$ and $|Z|=O(k^2)$.
}
\shortonly{
More precisely, for each fixed graph $H\in \{S_{0,2}=P_5,S_{1,2},S_{d_1,1}\}$, we reduce \textsc{$H$-Packing} to an equivalent instance equipped with an isolating set $C$ and a witnessing partition $V(G)=C\uplus Z\uplus I$, where $|C|=O(k)$ and $|Z|=O(k^2)$.
For each fixed $S_{d_1,d_2}$, we also reduce to an equivalent instance with a vertex cover of size $O(d^5k^2)$, where $d=d_1+d_2$.
}

\longonly{
The strategy is the same in all three cases.
}
\shortonly{
We first handle the three special cases; the strategy is the same for them.
}
Fix $H=S_{d_1,d_2}$ among these three graphs and write $d:=d_1+d_2$.
We begin by computing a maximal family $\mathcal{P}$ of pairwise vertex-disjoint copies of $H$ and setting
\(
C:=\bigcup_{P\in\mathcal P}V(P).
\)
This can be done in polynomial time.
If $|\mathcal P|\ge k$, then $(G,k)$ is a yes-instance.
Hence we may assume that $|\mathcal P|<k$.
It follows that $|C|=|V(H)|\cdot |\mathcal P|<|V(H)|k\le (2d+1)k$, and maximality implies that $G-C$ is $H$-free.

\longonly{
Next, we use the expansion lemma to reduce the number of non-trivial connected components of $G-C$ (\Cref{subsec:common-component-reduction}).
The remaining work is case-specific: for each of the three graphs~$H$, we show that the surviving part of $G-C$ can be partitioned into a set $Z$ of size $O(k^2)$ and an independent set $I$.
This gives the desired partition $V(G)=C\uplus Z\uplus I$ and shows that $C$ is an isolating set.
}
\shortonly{
For these special cases, we next use the expansion lemma to reduce the number of non-trivial connected components of $G-C$ (\Cref{subsec:common-component-reduction}).
The remaining work is case-specific: for each graph~$H$, we partition the surviving part of $G-C$ into a set $Z$ of size $O(k^2)$ and an independent set $I$.
This gives the desired isolating-set partition.
The general vertex-cover reduction appears in \Cref{sec:first-stage-vc}.
}


\subsection{Non-trivial component reduction}\label{subsec:common-component-reduction}


Let $\mathcal R$ be the set of connected components of $G-C$ with at least two vertices, and let $J$ be the set of isolated vertices of $G-C$.
Then
\(
V(G)\setminus C = J \uplus \bigl(\biguplus_{R\in\mathcal R}V(R)\bigr).
\)

Define a graph $\Gamma:=(C\uplus \mathcal R,F)$ by joining $c\in C$ and $R\in\mathcal R$ whenever $c$ has a neighbor in $R$ in the graph $G$.
Applying \Cref{lem:expansion} to $\Gamma$ with $q=d$, we obtain subsets $\widehat C\subseteq C$ and $\widehat{\mathcal R}\subseteq \mathcal R$, define $A:=C\setminus \widehat C$ and $\mathcal D:=\mathcal R\setminus \widehat{\mathcal R}$, and obtain an edge set $M\subseteq F$ such that
\begin{enumerate}
    \item $|\widehat{\mathcal R}|\le d|\widehat C|$,
    \item every vertex of $A $ is incident in $M$ with exactly $d$ edges,
    \item every vertex of $\mathcal D$ is incident in $M$ with at most one edge,
    \item there is no edge of $\Gamma$ between $\widehat C$ and $\mathcal D$.
\end{enumerate}
We say that a component $R\in\mathcal D$ is \emph{assigned} to a vertex $c\in A$ if $cR\in M$.
Then the second property implies that exactly $d$ components of $\mathcal D$ are assigned to each vertex of $A$, while the third property implies that each component of $\mathcal D$ is assigned to at most one vertex of $A$.


We apply the following reduction rule.
\begin{rrule}\label{rule:component-reduction}
Delete $A\uplus \bigcup_{R\in\mathcal D}V(R)$ and decrease $k$ by $|A|$.
\end{rrule}

\begin{lemma}\label{lem:component-reduction-safe}
\Cref{rule:component-reduction} is safe.
\end{lemma}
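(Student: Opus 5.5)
We must show that $(G,k)$ has $k$ disjoint copies of $H$ iff $(G',k-|A|)$ does, where $G' := G - (A\uplus\bigcup_{R\in\mathcal D}V(R))$. We prove the two directions separately.

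\emph{From $G'$ to $G$.} Take a packing of $k-|A|$ copies in $G'$; these are also copies in $G$. For each $c\in A$ we build one more copy of $H$ using only $c$ and the $d$ components assigned to $c$. These new copies are disjoint from $V(G')$. They are also pairwise disjoint, because each component of $\mathcal D$ is assigned to at most one vertex of $A$. The construction for a fixed $c$ goes as follows. Let $R_1,\dots,R_d$ be the components assigned to $c$. Pick a neighbor $u_i\in V(R_i)$ of $c$. Since $|V(R_i)|\ge 2$ and $R_i$ is connected, $u_i$ has a neighbor $w_i\in V(R_i)$. Use $c$ as the center, $(c,u_i,w_i)$ for $d_2$ of the branches, and the edge $cu_i$ for the remaining $d_1$ branches. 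The branches lie in distinct components, so they are vertex-disjoint, and we obtain a copy of $S_{d_1,d_2}$. This direction is routine.

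\emph{From $G$ to $G'$.} Take a packing $\mathcal Q$ of $k$ copies in $G$. We claim that at most $|A|$ copies of $\mathcal Q$ meet the deleted set $A\cup\bigcup_{R\in\mathcal D}V(R)$. Removing those copies then leaves at least $k-|A|$ copies inside $G'$, which proves this direction.

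\emph{Proof of the claim.} First, at most $|A|$ copies meet $A$, since the copies are vertex-disjoint. Now consider a copy $Q$ that avoids $A$ but meets some $R\in\mathcal D$. Because $G-C$ is $H$-free, $Q$ is not contained in $R$. As $Q$ is connected and $R$ is a component of $G-C$, $Q$ must contain a vertex of $C$ adjacent to $V(R)$. By property 4 of \Cref{lem:expansion}, every $C$-neighbor of a component in $\mathcal D$ lies in $A$. Hence $Q$ meets $A$, a contradiction. So every copy meeting the deleted set meets $A$, and there are at most $|A|$ of them. Note that connectivity of $H$ and $H$-freeness of $G-C$ are what make the argument work.
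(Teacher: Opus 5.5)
Your proof is correct and matches the paper's argument essentially step for step. In the forward direction, every copy meeting the deleted set must meet $A$, which uses $H$-freeness of $G-C$, connectivity of the copy, and property~4. In the backward direction, you build one copy of $S_{d_1,d_2}$ centered at each $c\in A$ from its $d$ assigned components, and these copies are pairwise disjoint because each component is assigned to at most one vertex.
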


\begin{proof}
Define $G':=G\Bigl[\widehat C\cup J\cup \bigcup_{R\in\widehat{\mathcal R}}V(R)\Bigr]$ and $k':=k-|A|$.
Let $D:=A\cup \bigcup_{R\in \mathcal D}V(R)$ be the set of deleted vertices.
Note that $V(G')=V(G)\setminus D$.

Suppose first that $(G,k)$ is a yes-instance and let $\mathcal Q$ be a packing of $k$ copies of $H$ in $G$.
We claim that every copy in $\mathcal Q$ intersecting $D$ also intersects $A$.
Indeed, this is immediate if the copy uses a vertex of $A$.
Otherwise it uses a vertex of some component $R\in\mathcal D$.
Since $G-C$ is $H$-free, the copy cannot be contained entirely in $G-C$, and therefore it contains a vertex of $C$ adjacent to $R$.
By the fourth property, no vertex of $\widehat C$ is adjacent to a component of $\mathcal D$, so this vertex of $C$ must lie in $A$.
Hence every copy of $\mathcal Q$ intersecting $D$ contains a distinct vertex of $A$.
As the packing is vertex-disjoint, at most $|A|$ copies of $\mathcal Q$ intersect $D$.
Removing these copies leaves at least $k-|A|=k'$ copies disjoint from $D$, and therefore contained in $G'$.
Thus $(G',k')$ is a yes-instance.

Conversely, suppose that $(G',k')$ is a yes-instance, and let $\mathcal Q'$ be a packing of $k'$ copies of $H$ in $G'$.
For each vertex $c\in A$, consider the $d=d_1+d_2$ components of $\mathcal D$ assigned to~$c$.
Choose $d_1$ of them for the branches of length $1$, and in each such component choose a vertex adjacent to $c$.
In each of the remaining $d_2$ components, choose a vertex $x$ adjacent to $c$ and a neighbor $y$ of $x$ in the same component.
These choices yield a copy of $H=S_{d_1,d_2}$ centered at $c$.
Since no component of $\mathcal D$ is assigned to two different vertices of $A$, the resulting copies are pairwise vertex-disjoint and lie in $G[D]$.
These copies are disjoint from $\mathcal Q'$ because $V(G')\cap D=\emptyset$.
Together with $\mathcal Q'$, they yield $k'+|A|=k$ pairwise vertex-disjoint copies of $H$ in $G$.
Thus $(G,k)$ is a yes-instance.
\end{proof}

Let $(G',k')$ be the instance obtained by applying \Cref{rule:component-reduction}.
Replace the current instance by $(G',k')$, and rename $\widehat C$ and $\widehat{\mathcal R}$ as $C$ and $\mathcal R$.
Then $|\mathcal R|\le d|C|<d|V(H)|k$.
Moreover, if a component $R\in\mathcal R$ has no neighbor in $C$, then no copy of $H$ can use a vertex of $R$: indeed, $R\subseteq G-C$ is $H$-free and has no edge to the rest of the graph outside $R$.
Hence all such components may be deleted.

\subsection{Reduction for $P_5$}\label{subsec:direct-p5}

For $H=P_5$, \Cref{rule:component-reduction} already leaves only $O(k)$ non-trivial components outside~$C$.
The key point is that each such component has a vertex cover of constant size.

\markproofomitted
\begin{lemma}\label{lem:p5-component-vc}
Every connected $P_5$-free graph has a vertex cover of size at most $3$.
\end{lemma}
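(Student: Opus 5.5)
We argue via a longest path together with a BFS-layer structure. Let $G$ be connected and $P_5$-free, so every path in $G$ has at most $4$ vertices. If $G$ has at most two vertices, a single vertex covers it. Otherwise we split according to the length of a longest path, noting that the target bound is $3$.

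\emph{Case 1: a longest path has at most $3$ vertices.} Then $G$ contains no $P_4$. A connected $P_4$-free graph containing no $P_4$ as a subgraph (not necessarily induced) is a star or a triangle. Indeed, if two disjoint edges exist, connectivity joins them into a $P_4$; so all edges pairwise intersect, which forces a star or a triangle. A star is covered by its center, and a triangle by $2$ vertices.

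\emph{Case 2: a longest path is $Q=(v_1,v_2,v_3,v_4)$.} We claim that $\{v_2,v_3\}$ together with at most one more vertex covers $G$.

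First consider vertices outside $Q$. A vertex $x\notin V(Q)$ adjacent to $v_1$ or $v_4$ would extend $Q$ to a $P_5$. Hence every neighbor of $Q$ outside $Q$ attaches only to $v_2$ or $v_3$. Next, take an edge $xy$ with $x,y\notin V(Q)$. If $x$ were adjacent to $v_2$, then $(y,x,v_2,v_3,v_4)$ would be a $P_5$; symmetrically for $v_3$. Thus $x$ and $y$ have no neighbors in $Q$. But $G$ is connected, so some path leads from $\{x,y\}$ to $Q$, passing through a vertex $z\notin V(Q)$ adjacent to $v_2$ or $v_3$. Following that path backwards from $v_3$ or $v_2$ through $z$ and on toward $x,y$ gives at least $5$ vertices, a contradiction. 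So every edge of $G$ not inside $Q$ has an endpoint in $\{v_2,v_3\}$.

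Now consider the edges inside $Q$. Beyond $v_2,v_3$, the only edges of $G[V(Q)]$ that might be uncovered are $v_1v_4$, and also $v_1v_3$ and $v_2v_4$, though the last two already touch $\{v_2,v_3\}$. So the only possibly uncovered edge is $v_1v_4$. Adding $v_1$ handles it, giving the cover $\{v_1,v_2,v_3\}$ of size $3$.

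The main care point is the step that edges outside $Q$ are disjoint from $\{v_1,v_4\}$ and cannot both avoid $\{v_2,v_3\}$. This must be justified carefully through the connectivity argument: the attaching vertex $z$ lies off $Q$ and yields a path of length at least $5$. If $v_1v_4$ is present, the case may be refined further (for instance, $Q$ then lies on a $C_4$), but a size-$3$ cover suffices for the statement.
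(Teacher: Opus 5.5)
Your proof is correct, but it takes a different route from the paper's. The paper splits by cycle structure. If $F$ is a tree, it has diameter at most $3$ and is a star or double star, so a cover of size $2$ suffices. If $F$ contains a $4$-cycle, no outside vertex can attach to it, so $F$ has only four vertices. Otherwise $F$ contains a triangle $T$; every other vertex has a neighbor in $T$, no edge avoids $T$, and so $T$ itself is the cover.

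You instead fix a longest path $Q=(v_1,v_2,v_3,v_4)$ and show, in one uniform argument, that every edge meets $\{v_2,v_3\}$ except possibly $v_1v_4$. Your connectivity step is sound. To state it cleanly, take $z$ to be the last vertex before a path from $\{x,y\}$ first enters $Q$. Its predecessor $w\notin V(Q)$ then gives an edge $zw$ outside $Q$ with $z$ adjacent to $v_2$ or $v_3$, which your previous paragraph already excludes.

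What each route buys: yours avoids the case analysis over cycles and reads an explicit cover directly off any longest path, of size $2$ unless $v_1v_4\in E(G)$. The paper's yields a fuller structural description of connected $P_5$-free graphs (star, double star, at most four vertices, or a triangle with pendant vertices), at the cost of three separate cases.
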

\longonly{
\begin{proof}
Let $F$ be a connected $P_5$-free graph with at least two vertices.

If $F$ is a tree, then every simple path of $F$ has at most four vertices.
Thus $F$ has diameter at most $3$, and therefore $F$ is a star or a double-star.
In either case, $F$ has a vertex cover of size at most $2$.

Assume next that $F$ contains a cycle.
Since every cycle of length at least $5$ contains a $P_5$, every cycle of $F$ has length $3$ or $4$.

Suppose first that $F$ contains a $4$-cycle $C$.
If some vertex $x\notin V(C)$ had a neighbor on $C$, then $x$ together with the four vertices of $C$ would contain a $P_5$.
As $F$ is connected, $V(F)=V(C)$, and therefore $\vc(F)\le 3$.

We may therefore assume that $F$ contains a triangle $T = \{a,b,c\}$.
Every vertex outside $T$ has a neighbor in $T$; otherwise, a shortest path from that vertex to $T$ together with two edges of the triangle would contain a $P_5$.
If $xy\in E(F)$ with $x,y\notin T$, choose a neighbor $a\in T$ of $x$.
Then $(y,x,a,b,c)$ is a $P_5$, a contradiction.
Thus every edge of $F$ has at least one endpoint in $T$, so $T$ is a vertex cover of size $3$.
\end{proof}
}

\begin{lemma}\label{lem:direct-p5-vc}
Given an instance $(G,k)$ of $P_5$-Packing, one can compute in polynomial time an equivalent instance $G'$ with $\vc(G')=O(k)$.
\end{lemma}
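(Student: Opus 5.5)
The plan is to run the common first stage and then cover each surviving component of $G-C$ by \Cref{lem:p5-component-vc}. First compute a maximal family $\mathcal P$ of vertex-disjoint copies of $P_5$. If $|\mathcal P|\ge k$, output a trivial constant-size yes-instance. Otherwise set $C:=\bigcup_{P\in\mathcal P}V(P)$, so $|C|<5k$ and $G-C$ is $P_5$-free.

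Next apply \Cref{rule:component-reduction}, which is safe by \Cref{lem:component-reduction-safe}, and then delete the non-trivial components of $G-C$ that have no neighbour in the new $C$. This deletion is safe by the argument given right after that lemma, or alternatively by \Cref{lem:irrelevant-element-safe}. The result is an equivalent instance $(G',k')$ with $k'\le k$, in which the renamed $C$ still has $|C|<5k$. The remaining vertices of $G'-C$ split into the set $J$ of isolated vertices of $G'-C$ and at most $|\mathcal R|\le d|C|=2|C|<10k$ non-trivial components.

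Two facts need checking before bounding the cover.
\begin{itemize}
\item Deleting $A$ and the components in $\mathcal D$ keeps $G'-C$ an induced subgraph of the old $G-C$, so it stays $P_5$-free.
\item Its connected components are exactly the old components lying in $\widehat{\mathcal R}$, together with the isolated vertices in $J$.
\end{itemize}
Hence each non-trivial component $R\in\mathcal R$ is a connected $P_5$-free graph, and \Cref{lem:p5-component-vc} gives a vertex cover $X_R\subseteq V(R)$ with $|X_R|\le 3$.

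Now set $X:=C\cup\bigcup_{R\in\mathcal R}X_R$. Take any edge $uv$ of $G'$.
\begin{itemize}
\item If one endpoint lies in $C$, the edge is covered.
\item Otherwise both endpoints lie in $G'-C$. Vertices of $J$ have no neighbours in $G'-C$, so $u$ and $v$ lie in the same component $R\in\mathcal R$, and the edge is covered by $X_R$.
\end{itemize}
So $X$ is a vertex cover with $|X|\le |C|+3|\mathcal R|<5k+30k=O(k)$. It is computable in polynomial time, because each $X_R$ of size at most $3$ can be found by brute force.

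There is no real obstacle here. The only care needed is to make sure the bound $|\mathcal R|\le d|C|$ refers to the reduced $C$, and that the equivalence chain stays correct when $k'$ drops. If $k'\le 0$ we may output a trivial yes-instance.
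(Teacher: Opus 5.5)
Your proof is correct and follows the paper's argument essentially verbatim. You apply \Cref{rule:component-reduction} to get $|\mathcal R|\le 2|C|=O(k)$, add a cover of size at most $3$ from \Cref{lem:p5-component-vc} for each surviving non-trivial component, and note that every edge avoiding $C$ lies inside a single component of $\mathcal R$; your extra checks, that the surviving components of $G'-C$ are exactly those in $\widehat{\mathcal R}$ together with $J$, and the handling of $k'\le 0$, are details the paper leaves implicit.
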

\begin{proof}
Apply \Cref{rule:component-reduction} with $H=P_5=S_{0,2}$, and let $(G,k)$ denote the resulting equivalent instance.
Then $|\mathcal R|=O(k)$, and every component of $\mathcal R$ has a neighbor in $C$.

For each component $R\in \mathcal R$, choose by \Cref{lem:p5-component-vc} a vertex cover $X_R$ of $R$ with $|X_R|\le 3$, and set $C^+:=C\cup \bigcup_{R\in\mathcal R}X_R$.
Since $|\mathcal R|=O(k)$, we have $|C^+|=O(k)$.

We claim $C^+$ is a vertex cover of $G$.
Let $uv\in E(G)$.
If $u\in C$ or $v\in C$, then $uv$ is covered by $C\subseteq C^+$.
Otherwise, both endpoints lie in $V(G)\setminus C$ and belong to the same connected component of $G-C$.
This component cannot be trivial, since vertices of $J$ are isolated in $G-C$.
Therefore both endpoints lie in some component $R\in\mathcal R$, so $uv$ is covered by $X_R\subseteq C^+$.
Thus every edge of $G$ has an endpoint in $C^+$, so $\vc(G)\le |C^+|=O(k)$.
\end{proof}

\subsection{Reduction for $S_{1,2}$}\label{subsec:direct-s12}

For $H=S_{1,2}$, we again seek an equivalent instance together with an isolating set $C$ and a partition $V(G)=C\uplus Z\uplus I$ witnessing this, where $|C|=O(k)$ and $|Z|=O(k^2)$.

\markproofomitted
\begin{lemma}\label{lem:s12-structure}
Let $R$ be a connected $S_{1,2}$-free graph.
Then $R$ contains at most six vertices of degree at least $3$.
\end{lemma}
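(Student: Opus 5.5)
We argue by contradiction, using local structure around high-degree vertices. Write $D$ for the set of vertices of $R$ of degree at least $3$, and assume $|D|\ge 7$. The basic tool is the following observation. Let $v$ have distinct neighbours $a,b,c$, and suppose $a'\in N(a)$ and $b'\in N(b)$ are distinct vertices with $a',b'\notin\{v,a,b,c\}$. Then $(v;c;a,a';b,b')$ is a copy of $S_{1,2}$. Since $R$ is $S_{1,2}$-free, every triple of neighbours of a vertex $v\in D$ must violate this condition. The condition is about subgraphs, not induced subgraphs, so $a'$ and $b'$ are allowed to lie in $N(v)$.

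\textbf{Local structure at a high-degree vertex.} First I would extract the following from the observation. Fix $v\in D$ and let $T$ be the set of vertices at distance at least $2$ from $v$.
\begin{itemize}
\item If $\deg(v)\ge 5$, at most one neighbour $a$ of $v$ can have a neighbour other than $v$ outside a bounded set. Otherwise, with $a$, $b$ and a spare leaf $c$ available, we could always choose distinct $a'$ and $b'$.
\item Consequently either $R$ is a star-like graph around $v$ whose neighbourhood has very few edges, or all the long structure of $R$ hangs off a single neighbour $a$ of $v$.
\item In particular, at most one neighbour of $v$ has a neighbour in $T$. Moreover, if two neighbours of $v$ have neighbours in $T$, they all coincide in a single vertex $w$, and then $\deg(v)\le 4$.
\end{itemize}
This tells us that every vertex of $D$ has "one-sided" growth.

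\textbf{The global structure.} Next I would use connectivity. Take two vertices $u,w\in D$ at maximum distance. If $\dist(u,w)\ge 3$, take a shortest path from $u$ to $w$. The two extra neighbours of $u$ off this path, together with the path itself, give a copy of $S_{1,2}$ centred at $u$ or at the next vertex on the path; this uses the fact that $u$ has degree at least $3$ and that the path has length at least $3$. So all vertices of $D$ are pairwise within distance $2$. I would then split into two cases.
\begin{itemize}
\item \emph{$R$ contains a triangle or a $C_4$ through vertices of $D$.} Here the local structure forces $V(R)$ to be small. For example, a $C_5$ or a longer cycle together with a chord or a pendant edge already yields $S_{1,2}$, so every cycle has length at most $4$. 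A $K_4$ plus one more adjacent vertex needs checking by hand. Each vertex outside such a dense core must be a pendant of a single core vertex, and a core vertex with two pendant-bearing neighbours is forbidden.
\item \emph{The vertices of $D$ induce a forest-like pattern.} Here the bound follows from the tree case. In an $S_{1,2}$-free tree, each vertex of degree at least $3$ has at most one neighbour that is not a leaf, so $D$ induces a path. Two vertices of $D$ at distance $2$ already create $S_{1,2}$ in a tree. Hence $|D|\le 2$ for trees.
\end{itemize}

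\textbf{Main obstacle.} The step I expect to be hardest is the dense case. There the neighbourhoods overlap heavily, so $a'$ and $b'$ may be forced to coincide with $c$, and the observation loses power. My plan is to handle this through the size of $R$. Suppose $R$ has at least $7$ vertices with $|D|\ge 7$, all pairwise within distance $2$. Pick $v\in D$ with maximum degree. If $\deg(v)\ge 5$, the local claim already forces all but one neighbour to have their further neighbours inside $N(v)$, and a short check produces $S_{1,2}$ using two internal edges of $N(v)$ together with a remaining neighbour as the leaf. If $\Delta(R)\le 4$, then $|V(R)|\le 1+4+4\cdot 3$, and a finite case analysis on $N[v]$ and its second neighbourhood finishes the argument; the extremal example appears to be a $K_5$-like core with at most six high-degree vertices. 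I would organise this check by the number of edges inside $N(v)$.
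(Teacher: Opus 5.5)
Your proposal has a genuine gap. The two structural claims it rests on are false, and the part you yourself flag as the main obstacle is never carried out.

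\textbf{The global and local claims fail.} You claim that all vertices of $D$ are pairwise within distance $2$. Take a path $(v_1,\dots,v_t)$ with $t\ge 6$ and attach one pendant vertex to $v_2$ and one to $v_{t-1}$. This graph is connected and $S_{1,2}$-free, since each of $v_2,v_{t-1}$ has only one neighbour with a further neighbour. Yet $v_2,v_{t-1}\in D$ lie at distance $t-3$. Your argument breaks because the two extra neighbours of $u$ can be leaves and the next path vertex can have degree $2$. In the same way, two vertices of $D$ at distance $2$ in a tree do not force $S_{1,2}$: take two copies of $K_{1,3}$ whose centres share a common neighbour of degree $2$.

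The local claims are also wrong:
\begin{itemize}
\item A vertex $v$ adjacent to $a,b,c,d,e$, plus a vertex $w$ adjacent to $a$ and $b$, is $S_{1,2}$-free with $\deg(v)=5$. So ``then $\deg(v)\le 4$'' fails.
\item ``Every cycle has length at most $4$'' is contradicted by $K_5$. It is $S_{1,2}$-free because it has only five vertices, and it is precisely the extremal example you mention.
\item Even if $D$ lay within distance $2$ of $v$, $\Delta(R)\le 4$ would not bound $|V(R)|$, because degree-$2$ tails can be arbitrarily long. So the announced finite case analysis is not over a finite set of graphs, and in any case it is not performed.
\end{itemize}

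\textbf{The missing idea} is a global anchor that pins down where high-degree vertices can sit. The paper fixes a longest path $P=(v_1,\dots,v_t)$ and argues as follows.
\begin{itemize}
\item An off-path vertex cannot be adjacent to $v_1$ or $v_t$, since $P$ would extend.
\item It cannot be adjacent to $v_i$ with $3\le i\le t-2$: that gives $S_{1,2}$ centred at $v_i$ with branches $v_i-v_{i-1}-v_{i-2}$ and $v_i-v_{i+1}-v_{i+2}$, and the off-path vertex as the leaf. So off-path vertices see only $v_2$ and $v_{t-1}$.
\item No vertex is at distance $2$ from $P$, since again $P$ would extend. With connectivity, every off-path vertex therefore has degree at most $2$.
\item A chord at $v_i$ with $4\le i\le t-3$ yields $S_{1,2}$ centred at $v_i$, so these interior path vertices have degree $2$.
\end{itemize}
Hence $D\subseteq\{v_1,v_2,v_3,v_{t-2},v_{t-1},v_t\}$. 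This covers dense and sparse configurations uniformly and needs no case analysis.
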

\longonly{
\begin{proof}
Assume that $|V(R)|\ge 6$, and let $P=(v_1,v_2,\dots,v_t)$ be a longest path in $R$.

We first claim that if a vertex $x\notin V(P)$ has a neighbor on $P$, then that neighbor is $v_2$ or $v_{t-1}$.
Indeed, $x$ cannot be adjacent to $v_1$ or $v_t$, since then $P$ would not be longest.
If $x$ were adjacent to some $v_i$ with $3\le i\le t-2$, then $v_i$ together with $x$, $v_{i-1}$, $v_{i-2}$, $v_{i+1}$, and $v_{i+2}$ would form a copy of $S_{1,2}$, a contradiction.

Next we claim that no vertex is at distance $2$ from $P$.
Suppose that $x\notin V(P)$ is at distance $2$ from $P$, and let $x-y-v_i$ be a path from $x$ to $P$.
By the previous paragraph, $y$ is adjacent to either $v_2$ or $v_{t-1}$; by symmetry assume that $y$ is adjacent to $v_2$.
Then $x,y,v_2,\dots,v_t$ is longer than $P$, a contradiction.

We next show that every vertex outside $P$ has degree at most $2$.
Let $x\notin V(P)$.
Since no vertex is at distance $2$ from $P$, the vertex $x$ has a neighbor on $P$, and by the first claim every such neighbor belongs to $\{v_2,v_{t-1}\}$.
Then, by the same argument as the previous paragraph, $x$ has no neighbor outside $P$, so $\deg_R(x)\le 2$.
Consequently every vertex of degree at least $3$ lies on $P$.

Now let $i$ satisfy $4\le i\le t-3$.
We claim that $v_i$ has degree $2$.
It has no neighbor outside $P$, by the first claim.
Suppose that it has a neighbor $v_j$ on $P$ with $j\notin\{i-1,i+1\}$.
If $j=i+2$, then $v_i$ is the center of a copy of $S_{1,2}$ with branches $v_i-v_{i-1}-v_{i-2}$, $v_i-v_{i+2}-v_{i+3}$, and $v_i-v_{i+1}$.
If $j\ge i+3$, then $v_i$ is the center of a copy of $S_{1,2}$ with branches $v_i-v_{i-1}-v_{i-2}$, $v_i-v_j-v_{j-1}$, and $v_i-v_{i+1}$.
The case $j\le i-2$ is symmetric.
This contradiction proves that $\deg_R(v_i)=2$.

Therefore every vertex of degree at least $3$ belongs to $\{v_1,v_2,v_3,v_{t-2},v_{t-1},v_t\}$, so there are at most six such vertices.
\end{proof}
}

We now return to the instance produced by \Cref{rule:component-reduction} with $H=S_{1,2}$.
If that reduction certifies a yes-instance, we return the disjoint union of $k$ copies of $S_{1,2}$.
Otherwise, we have $|C|<6k$ and $|\mathcal R|<18k$.

For each component $R\in \mathcal R$, move every vertex of degree at least $3$ in $R$ to $C$.
By \Cref{lem:s12-structure}, this adds at most $6|\mathcal R|=O(k)$ vertices, so still $|C|=O(k)$.
Let $D$ be the union of all non-trivial components of $G-C$.
Then $\Delta(D)\le 2$.
Moreover, by \Cref{rule:irrelevant-element}, every vertex of $D$ belongs to some copy of $S_{1,2}$.
Since each such copy intersects $C$ and $\diam(S_{1,2})=4$, it follows that
\(
V(D)\subseteq N_D^{\le 4}(C).
\)

A vertex subset $X\subseteq D$ is \emph{scattered} if for any $x,y\in X$, $\dist(x,y)\geq 3$ within $D$. 
We apply the following reduction rule:
\begin{rrule}\label{rule:scattered_S12}
If there exists a vertex $v\in C$ and a scattered set $X\subseteq N_D(v)$ of size at least $5k$, then delete $v$ and decrease $k$ by $1$.
\end{rrule}

\begin{lemma}\label{lem:s12-scattered-safe}
\Cref{rule:scattered_S12} is safe.
\end{lemma}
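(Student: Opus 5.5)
The rule deletes $v$ and decreases $k$ by one, so the claim is that $(G,k)$ and $(G-v,k-1)$ are equivalent. One direction is immediate: a packing of $k-1$ copies in $G-v$ is also a packing in $G$, and we only need to add one more copy of $S_{1,2}$ that avoids all of them. The other direction is standard too: given $k$ disjoint copies in $G$, at most one of them contains $v$, so deleting that copy leaves $k-1$ copies in $G-v$. The whole proof therefore comes down to building one extra copy centered at $v$.

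To build it, fix a packing $\mathcal Q'$ of $k-1$ copies in $G-v$. These copies use at most $5(k-1)$ vertices. I want to choose three vertices $x,y,z\in X$ and make $v$ the center:
\begin{itemize}
\item the branch of length $1$ is $v$--$x$;
\item the two branches of length $2$ are $v$--$y$--$y'$ and $v$--$z$--$z'$, where $y'$ and $z'$ are neighbours of $y$ and $z$ inside $D$.
\end{itemize}

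The scattered property is what keeps the pieces apart. For $x\ne y$ in $X$ we have $\dist_D(x,y)\ge 3$, so the closed neighbourhoods $N_D[x]$ for $x\in X$ are pairwise disjoint. In particular $y'\ne z'$, $y'\ne x$, and so on, so the five vertices are distinct and the edges form a genuine $S_{1,2}$.

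Next I need the neighbours $y'$ and $z'$ to exist. Every $x\in X$ lies in $D$, the union of non-trivial components of $G-C$, so $x$ has a neighbour in $D$. Since $\Delta(D)\le 2$, each such $x$ has at most two neighbours in $D$.

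Counting is the main obstacle. A vertex $x\in X$ is unusable as a length-$2$ branch only if $x$ itself lies in $V(\mathcal Q')$, or if every $D$-neighbour of $x$ does. Because the sets $N_D[x]$ are disjoint, each vertex of $V(\mathcal Q')$ blocks at most one $x$. So at most $|V(\mathcal Q')|\le 5(k-1)$ elements of $X$ are blocked, where I call $x$ blocked if $N_D[x]\cap V(\mathcal Q')\ne\emptyset$.

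That leaves at least $5k-5(k-1)=5$ unblocked elements of $X$. An unblocked $x$ has its whole closed neighbourhood in $D$ outside $\mathcal Q'$. Choose three of them and pick $D$-neighbours $y'$ of $y$ and $z'$ of $z$. The new copy lies in $G$ and is disjoint from $\mathcal Q'$, since $v\notin V(G-v)$ and the other four vertices are unblocked. This gives $k$ disjoint copies in $G$.
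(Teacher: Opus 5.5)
Your construction is the same as the paper's: center $v$, with one length-$1$ branch and two length-$2$ branches taken from three elements of $X$ whose closed $D$-neighbourhoods avoid the packing. However, the counting step is wrong as written. You claim the $k-1$ copies of $\mathcal Q'$ use at most $5(k-1)$ vertices. In fact $S_{1,2}$ has six vertices, so $|V(\mathcal Q')|=6(k-1)$. With the correct count, your argument only guarantees $5k-6(k-1)=6-k$ unblocked elements of $X$. That is fewer than three once $k\ge 4$, so the argument does not go through for the threshold $5k$ in the rule.

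\emph{Missing observation.} Every copy in $\mathcal Q'$ must meet $C\setminus\{v\}$.
\begin{itemize}
\item The initial $G-C$ is $S_{1,2}$-free by maximality of the initial packing.
\item Every later step replaces $G-C$ by an induced subgraph: the component reduction, deleting irrelevant vertices, moving vertices of degree at least $3$ into $C$, and deleting $v\in C$.
\item Since $v\notin V(G-v)$, each copy in $\mathcal Q'$ therefore contains a vertex of $C\setminus\{v\}$, and so has at most five vertices in $D$.
\end{itemize}

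\emph{Corrected count.} Since $N_D[x]\subseteq D$, vertices of $V(\mathcal Q')\cap C$ block nothing. By disjointness of the sets $N_D[x]$, each vertex of $V(\mathcal Q')\cap D$ blocks at most one element of $X$. Hence at most $|V(\mathcal Q')\cap D|\le 5(k-1)$ elements of $X$ are blocked, at least $5\ge 3$ remain unblocked, and the rest of your argument then works verbatim.
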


\begin{proof}
Clearly, if $(G,k)$ is a yes-instance, $(G-v, k-1)$ is a yes-instance.
We focus on the opposite direction.
Suppose that $G-v$ contains $k-1$ vertex-disjoint copies of $S_{1,2}$; let $\mathcal Q$ be such a packing.
Every copy in $\mathcal Q$ intersects $C$.
Since $S_{1,2}$ has six vertices, each copy uses at most five vertices of $D$.
Since $X$ is scattered, the closed neighborhoods $N_D[x]$, $x\in X$, are pairwise disjoint.
Hence each copy in $\mathcal Q$ intersects at most five of these closed neighborhoods, and the whole packing intersects at most $5(k-1)$ of them.
Since $|X|\ge 5k$, there exist distinct vertices $x_1,x_2,x_3\in X$ such that $N_D[x_i]\cap V(\mathcal Q)=\emptyset$ for each $i\in\{1,2,3\}$.

Each $x_i$ lies in a non-trivial component of $D$, so it has a neighbor in $D$; choose neighbors $y_1$ of $x_1$ and $y_2$ of $x_2$.
Then $\{v,x_1,y_1,x_2,y_2,x_3\}$ contains a copy of $S_{1,2}$: use $v$ as the center, use $x_1-y_1$ and $x_2-y_2$ as the two branches of length $2$, and use $x_3$ as the branch of length $1$.
This copy is disjoint from every member of $\mathcal Q$, because $\mathcal Q$ avoids $N_D[x_1]\cup N_D[x_2]\cup N_D[x_3]$ and $v\notin V(G-v)$.
\end{proof}

While computing a maximum scattered set in $N_D(c)$ is hard in general,
for each $c\in C$ we greedily construct a maximal scattered set $X_c\subseteq N_D(c)$, and apply \Cref{rule:scattered_S12} only when $|X_c|\ge 5k$.
Once this process stops, every $c\in C$ satisfies $|X_c|<5k$.

\begin{lemma}\label{lem:direct-s12-separated}
Given an instance $(G,k)$ of \textsc{$S_{1,2}$-Packing}, one can compute in polynomial time an equivalent instance together with an isolating set $C$ and a partition $V(G)=C\uplus Z\uplus I$ witnessing that $C$ is an isolating set, such that $I$ is independent, $E(Z,I)=\emptyset$, $|C|=O(k)$, $|Z|=O(k^2)$, and $|E(C\uplus Z)|=O(k^2)$.
\end{lemma}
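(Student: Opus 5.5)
We take the instance produced by \Cref{rule:component-reduction} for $H=S_{1,2}$, moving the degree-$\ge 3$ vertices of non-trivial components into $C$. Then we exhaustively apply \Cref{rule:scattered_S12}, using greedy maximal scattered sets, together with \Cref{rule:irrelevant-element}. Each application of \Cref{rule:scattered_S12} deletes a vertex of $C$ and decreases $k$, so there are polynomially many rounds. After a deletion we do not recompute $C$: removing a vertex of $C$ keeps $G-C$ unchanged, so it stays $S_{1,2}$-free, $|C|=O(k)$ is preserved, and $D$ still has maximum degree at most $2$. Vertices of $D$ may become irrelevant; \Cref{rule:irrelevant-element} removes them, and we still have $V(D)\subseteq N_D^{\le 4}(C)$. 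At the end, every $c\in C$ has a maximal scattered set $X_c\subseteq N_D(c)$ with $|X_c|<5k$.

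Next we bound $N_D(C)$. By maximality, every $u\in N_D(c)$ satisfies $\dist_D(u,X_c)\le 2$. Since $\Delta(D)\le 2$, each $x\in X_c$ has at most $1+2+2=5$ vertices of $D$ within distance $2$. Hence $|N_D(c)|\le 5|X_c|<25k$, and $|N_D(C)|=O(k^2)$. Because $\Delta(D)\le 2$, the ball $N_D^{\le 4}(C)$ has size at most $9|N_D(C)|=O(k^2)$. Thus $|V(D)|=O(k^2)$. We set $Z:=V(D)$ and $I:=J$, the isolated vertices of $G-C$. Then $I$ is independent, every edge of $I$ goes to $C$, and $E(Z,I)=\emptyset$, since $Z$ and $I$ lie in different components of $G-C$. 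This gives the witnessing partition with $|C|=O(k)$ and $|Z|=O(k^2)$.

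It remains to bound $|E(C\uplus Z)|$. The edges inside $C$ number $O(k^2)$, since $|C|=O(k)$. The edges inside $Z$ number at most $|Z|=O(k^2)$, since $\Delta(D)\le 2$. The edges between $C$ and $Z$ number at most $\sum_{c\in C}|N_D(c)|\le |C|\cdot 25k=O(k^2)$. So $|E(C\uplus Z)|=O(k^2)$.

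The main obstacle is the interaction between the reductions. Deleting vertices by \Cref{rule:scattered_S12} and \Cref{rule:irrelevant-element} must not invalidate the invariants. These invariants are that $G-C$ is $S_{1,2}$-free, that components of $D$ have maximum degree at most $2$, that $V(D)\subseteq N_D^{\le 4}(C)$, and the bound on $|C|$. We also need the maximal scattered sets to be recomputed after each change. All these invariants are preserved under vertex deletion, with $|C|$ only decreasing and $k$ decreasing. The counting step bounding $|N_D(c)|$ via scattered sets is the crucial quantitative point.
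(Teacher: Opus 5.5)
Your proposal is correct and follows the paper's proof essentially step for step: the same bound $|N_D(c)|\le 5|X_c|<25k$ from maximality of the greedy scattered sets and $\Delta(D)\le 2$, the same ball-growth bound on $Z=V(D)\subseteq N_D^{\le 4}(C)$, the same edge count, and the same choice of $I$ as the isolated vertices of the updated $G-C$. Your factor $9$ is looser than needed but still valid: since $N_D^{\le 4}(C)=N_D^{\le 3}(N_D(C))$, radius-$3$ balls of at most $7$ vertices suffice, as in the paper. Your explicit check that the invariants survive repeated rule applications is a useful detail that the paper leaves implicit.
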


\begin{proof}
Fix $c\in C$.
The maximality of $X_c$ implies that every vertex of $N_D(c)$ has $D$-distance at most $2$ from some vertex of $X_c$.
Since $\Delta(D)\le 2$, every radius-$2$ ball in $D$ contains at most five vertices.
Hence $|N_D(c)|\le 5|X_c|<25k$.
As $|C|=O(k)$, it follows that $|N_D(C)|=O(k^2)$.

Set $Z:=V(D)$.
Every vertex of $Z$ lies in $N_D^{\le 4}(C)=N_D^{\le 3}(N_D(C))$.
Since $\Delta(D)\le 2$, every radius-$3$ ball in $D$ contains at most seven vertices, and therefore $|Z|=O(k^2)$.
Also, $|E(Z)|=O(k^2)$ because $\Delta(D)\le 2$, and
\(
|E(Z,C)|=\sum_{c\in C}|N_D(c)|=O(k^2).
\)
Since $|C|=O(k)$, we have $|E(C)|=O(k^2)$ as well, and thus $|E(C\uplus Z)|=O(k^2)$.

Let $I:=V(G)\setminus (C\cup Z)$.
Every vertex of $I$ lies in a trivial component of $G-C$, hence $I$ is independent.
Moreover, no vertex of $I$ is adjacent to a vertex of $Z$, because distinct components of $G-C$ are anticomplete.
Thus $V(G)=C\uplus Z\uplus I$ is the desired partition witnessing that $C$ is an isolating set.
\end{proof}

\shortonly{
\markproofomitted
\begin{lemma}\label{lem:direct-sd11-separated}
Given an instance $(G,k)$ of \textsc{$S_{d_1,1}$-Packing}, one can compute in polynomial time an equivalent instance together with an isolating set $C$ and a partition $V(G)=C\uplus Z\uplus I$ witnessing that $C$ is an isolating set, such that $|C|=O(k)$, $|Z|=O(k^2)$, and $|E(C\uplus Z)|=O(k^2)$, where the hidden constants depend polynomially on $d_1$.
\end{lemma}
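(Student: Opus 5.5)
I will follow the template of \Cref{lem:direct-s12-separated}. First apply \Cref{rule:component-reduction} with $H=S_{d_1,1}$ and $d=d_1+1$. If it certifies a yes-instance, output a trivial yes-instance; otherwise $|C|<(d_1+3)k$ and $|\mathcal R|\le d|C|=O(d_1^2k)$. Every surviving component $R\in\mathcal R$ is connected and $S_{d_1,1}$-free, and has a neighbor in $C$.

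The structural ingredient replacing \Cref{lem:s12-structure} is the following claim. If a connected $S_{d_1,1}$-free graph $R$ has a vertex $v$ with $\deg_R(v)\ge d_1+2$, then every neighbor $u$ of $v$ has no neighbor outside $N_R[v]$. Otherwise, if $w\notin N_R[v]$ is adjacent to $u$, then $v$ with branch $u$--$w$ and $d_1$ further neighbors of $v$ (other than $u$) forms a copy. A similar argument inside $N_R(v)$ handles the case $w\in N_R(v)$, as long as $v$ still has $d_1$ neighbors outside $\{u,w\}$. Iterating this gives two cases.
\begin{itemize}
\item Either $R$ has maximum degree at most $d_1+1$, and its high-degree structure is bounded.
\item Or $R$ consists of a high-degree center $v$ and its neighborhood, with no vertex at distance $2$. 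Edges inside $N_R(v)$ would yield a copy unless the number of vertices is small, so $R$ is essentially a star plus $O(d_1)$ extra vertices.
\end{itemize}
I plan to handle each component as follows. If $R$ has a vertex of degree at least $d_1+2$, move a set of $O(d_1)$ vertices to $C$ (the center, plus any vertices incident to edges among the leaves). The remaining leaves then become isolated in $G-C$, so they go into $I$. Otherwise $\Delta(R)\le d_1+1$, and $R$ stays in $D$ with bounded degree. After this step $|C|=O(\mathrm{poly}(d_1)k)$, and the union $D$ of the non-trivial components of $G-C$ satisfies $\Delta(D)\le d_1+1$.

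Next I adapt \Cref{rule:scattered_S12}. If $c\in C$ has a scattered set $X\subseteq N_D(c)$ with $|X|\ge |V(H)|k$, delete $c$ and decrease $k$. Safety follows as before: the closed neighborhoods $N_D[x]$ are disjoint, so some $d_1+1$ of them avoid the packing $\mathcal Q$. We use one $x$ together with a $D$-neighbor as the length-$2$ branch, and $d_1$ further $x$'s as leaves. Run this greedily with maximal scattered sets $X_c$. Afterwards, every vertex of $N_D(c)$ lies within $D$-distance $2$ of $X_c$. Balls of radius $2$ in $D$ have $O(d_1^2)$ vertices, so $|N_D(C)|=O(\mathrm{poly}(d_1)k^2)$. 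Since $\diam(S_{d_1,1})\le 3$ and \Cref{rule:irrelevant-element} has been applied, $V(D)\subseteq N_D^{\le 3}(C)$. Bounded degree then gives $|Z|:=|V(D)|=O(\mathrm{poly}(d_1)k^2)$, and $|E(Z)|$ and $|E(Z,C)|$ are $O(k^2)$ with polynomial dependence on $d_1$. Setting $I:=V(G)\setminus(C\cup Z)$, the set $I$ consists of isolated vertices of $G-C$, which gives the required partition.

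The main obstacle is the structural claim for components containing a high-degree vertex. I must verify that edges among the neighbors of the center can involve only $O(d_1)$ vertices. Then moving that bounded set to $C$ leaves the other neighbors isolated without destroying the counting. If this fails, my fallback is to bound the number of high-degree vertices per component by $O(d_1)$, mirroring \Cref{lem:s12-structure}, and move all of them to $C$.
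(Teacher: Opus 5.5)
Your proposal is correct and follows essentially the paper's proof: the same component reduction, the same star-or-$\Delta\le d_1+1$ dichotomy with high-degree centers moved into $C$, the same scattered-set rule applied to greedy maximal scattered sets, and the same radius-$2$/distance-$3$ counting with $Z:=V(D)$. The obstacle you flag does not arise. When $\deg_R(v)\ge d_1+2$, every edge $uw$ inside $N_R(v)$ still leaves $d_1$ neighbors of $v$ outside $\{u,w\}$, so such an $R$ is exactly a star, which is the paper's structural lemma, and moving only its center into $C$ suffices.
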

This is the first-stage reduction used for the $S_{d_1,1}$ part of \Cref{thm:short-subdivided-stars}; see the full version for the proof.
}

\longonly{
\subsection{Reduction for $S_{d_1,1}$}\label{subsec:direct-sd11}

Throughout this subsection, let $H:=S_{d_1,1}$ for a fixed integer $d_1\ge 1$, and set $d:=d_1+1$.

We show that every instance of \textsc{$H$-Packing} can be reduced to an equivalent instance together with an isolating set $C$ and a partition $V(G)=C\uplus Z\uplus I$ witnessing that $C$ is an isolating set, where $|C|=O(d_1^2k)$ and $|Z|=O(d_1^6k^2)$.

\begin{lemma}\label{lem:sd11-structure}
Let $R$ be a connected $S_{d_1,1}$-free graph with at least two vertices.
Then at least one of the following holds: $R$ is a star or $\Delta(R)\le d$.
\end{lemma}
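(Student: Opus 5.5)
We argue by contraposition: assume $R$ is not a star and $\Delta(R)\ge d+1=d_1+2$, and exhibit a copy of $S_{d_1,1}$ in $R$. Recall that $S_{d_1,1}$ consists of a center with $d_1$ pendant neighbors and one further neighbor that itself has a private neighbor. So it suffices to find a vertex $v$, a neighbor $x$ of $v$, a vertex $y\notin N[v]$ adjacent to $x$ or, failing that, suitable room inside $N(v)$, and $d_1$ further neighbors of $v$ distinct from $x,y$.

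Let $v$ be a vertex with $\deg_R(v)\ge d_1+2$. We split into two cases according to whether $N[v]=V(R)$.

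\emph{Case 1: some vertex lies outside $N[v]$.} Since $R$ is connected, there is a vertex $y\notin N[v]$ adjacent to some $x\in N(v)$ (take a shortest path from $v$ to a vertex outside $N[v]$). Then $v$ is the center, $x-y$ is the branch of length $2$, and, since $\deg(v)-1\ge d_1+1\ge d_1$, we can choose $d_1$ further neighbors of $v$ other than $x$ as the branches of length $1$. These choices are distinct because $y\notin N(v)$, so they form a copy of $S_{d_1,1}$.

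\emph{Case 2: $N[v]=V(R)$.} Since $R$ is not a star, some edge $xy$ has both endpoints in $N(v)$; otherwise $R$ would be the star centered at $v$. We use center $v$, branch $v-x-y$, and $d_1$ pendant neighbors chosen from $N(v)\setminus\{x,y\}$. This requires $\deg(v)-2\ge d_1$, which holds precisely because $\deg(v)\ge d_1+2=d+1$.

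The only delicate point is the counting in Case 2, which is exactly where the threshold $d+1$ is needed; in both cases the not-necessarily-induced definition of a copy means that edges among the chosen vertices cause no harm. The hypothesis that $R$ has at least two vertices only rules out degenerate trivial cases and is not otherwise needed.
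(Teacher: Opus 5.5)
Your proof is correct and is essentially the paper's argument: center the copy of $S_{d_1,1}$ at a vertex of degree at least $d+1$, use an edge from one of its neighbors to a vertex other than the center as the length-$2$ branch, and take the $d_1$ pendant branches from the remaining neighbors. The paper avoids your case split by picking any such edge $yz$ with $y\in N_R(v)$ and $z\neq v$, then discarding both $y$ and $z$ from $N_R(v)$; the threshold $d+1$ leaves enough neighbors whether or not $z\in N_R(v)$.
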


\begin{proof}
Assume that $R$ is not a star, and $x\in V(R)$ has degree at least $d+1$.
Because $R$ is not a star, there exists an edge $yz$ with $y\in N_R(x)$ and $z\neq x$. 
%
Using $x$ as the center, the edge $yz$ as the branch of length $2$, and any $d-1$ further neighbors of $x$ as the branches of length $1$ gives a copy of $S_{d_1,1}$, a contradiction.
\end{proof}

Let $D$ be the union of all non-star components $R\in\mathcal R$.
Again, a vertex subset $X\subseteq D$ is \emph{scattered} if for any $x,y\in X$, $\dist(x,y)\geq 3$ within $D$. 
We apply the following reduction rule:
\begin{rrule}\label{rule:scattered_Sd1}
If there exist a vertex $v\in C$ and a scattered set $X\subseteq N_D(v)$ of size at least $(d+1)k$, then delete $v$ and decrease $k$ by $1$.
\end{rrule}

\begin{lemma}\label{lem:sd11-scattered-safe}
\Cref{rule:scattered_Sd1} is safe.
\end{lemma}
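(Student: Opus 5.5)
The plan mirrors the proof of \Cref{lem:s12-scattered-safe}. The forward direction is immediate: deleting $v$ destroys at most one copy of a packing, so if $(G,k)$ is a yes-instance then $(G-v,k-1)$ is one as well. For the converse, let $\mathcal Q$ be a packing of $k-1$ vertex-disjoint copies of $S_{d_1,1}$ in $G-v$. Our goal is a copy centered at $v$ that avoids $V(\mathcal Q)$ entirely.

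\emph{Counting the avoided neighbors.} Since $X$ is scattered in $D$, the closed neighborhoods $N_D[x]$ for $x\in X$ are pairwise disjoint. Each copy in $\mathcal Q$ has $|V(S_{d_1,1})|=d_1+3=d+2$ vertices. Because $G-C$ is $H$-free, each copy contains at least one vertex of $C$, so it has at most $d+1$ vertices in $D$. It therefore meets at most $d+1$ of the sets $N_D[x]$, and the whole packing meets at most $(d+1)(k-1)$ of them. As $|X|\ge (d+1)k$, at least $d+1\ge d$ vertices $x_1,\dots,x_d\in X$ satisfy $N_D[x_i]\cap V(\mathcal Q)=\emptyset$.

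\emph{Building the copy.} Each $x_i$ lies in a non-star component of $G-C$, which has at least two vertices, so $x_1$ has a neighbor $y_1\in D$, and $y_1\in N_D[x_1]$ is also free. We take $v$ as the center, $v-x_1-y_1$ as the branch of length $2$, and $x_2,\dots,x_d$ as the $d_1=d-1$ branches of length $1$. These vertices are distinct: the $x_i$ are distinct, $y_1\neq x_j$ for $j\ge 2$ because $y_1\in N_D[x_1]$, which is disjoint from $N_D[x_j]$, and $v\in C$ is not in $D$. The copy avoids $\mathcal Q$ because $v\notin V(G-v)$ and all other vertices lie in the free closed neighborhoods. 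Adding it to $\mathcal Q$ gives $k$ disjoint copies in $G$.

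The only delicate point is the count per copy. Here we rely on every copy of $H$ meeting $C$ (from maximality of $\mathcal P$, preserved under \Cref{rule:component-reduction}, which only deletes vertices and so keeps $G-C$ $H$-free). Even without this fact, the bound of $d+2$ sets per copy would still leave at least one free $x$, but we need $d$ of them. Using at most $d+1$ vertices of $D$ per copy gives exactly the required slack, since $(d+1)k-(d+1)(k-1)=d+1\ge d$.
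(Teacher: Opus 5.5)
Your proof is correct and follows the paper's argument essentially verbatim. Both bound each copy of $\mathcal Q$ to at most $d+1$ vertices of $D$ because it meets $C$, use the pairwise disjoint closed neighborhoods $N_D[x]$ to find $d$ untouched $x_i$, and build the copy centered at $v$ with $v-x_1-y_1$ as the long branch; your explicit distinctness check for $y_1$ is a welcome addition.

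One aside in your last paragraph is wrong, though it plays no role in the proof. Using the bound of $d+2$ sets per copy would leave only $(d+1)k-(d+2)(k-1)=d+2-k$ free vertices, and this is negative for large $k$, so it does not guarantee even one free $x$.
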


\begin{proof}
Clearly, if $(G,k)$ is a yes-instance, $(G-v,k-1)$ is a yes-instance.
We focus on the opposite direction.
Suppose that $G-v$ contains $k-1$ pairwise vertex-disjoint copies of $H$; let $\mathcal Q$ be such a packing.
Every copy in $\mathcal Q$ intersects $C$.
Since $H$ has $d+2$ vertices, each copy therefore uses at most $d+1$ vertices of $D$.
Because $X$ is scattered, the closed neighborhoods $N_D[x]$, $x\in X$, are pairwise disjoint, and each copy in $\mathcal Q$ can intersect at most $d+1$ of them.
Consequently the whole packing intersects at most $(d+1)(k-1)$ of these neighborhoods.
Since $|X|\ge (d+1)k$, there exist distinct vertices $x_1,x_2,\dots,x_d\in X$ such that $N_D[x_i]\cap V(\mathcal Q)=\emptyset$ for every $i\in\{1,\dots,d\}$.
Each $x_i$ lies in a non-trivial component of $D$, hence has a neighbor in $D$; choose one neighbor of $x_1$ and denote it by $y_1$.
Then $\{v,x_1,y_1,x_2,\dots,x_d\}$ contains a copy of $S_{d_1,1}$: use $v$ as the center, use $x_1-y_1$ as the unique branch of length $2$, and use $x_2,\dots,x_d$ as the $d-1$ branches of length $1$.
This copy is disjoint from every member of $\mathcal Q$, because $\mathcal Q$ avoids $N_D[x_1]\cup \cdots \cup N_D[x_d]$ and $v\notin V(G-v)$.
\end{proof}

Again, we do not compute a maximum scattered set in $N_D(c)$. For each $c\in C$ we greedily construct a maximal scattered set $X_c\subseteq N_D(c)$, and apply \Cref{rule:scattered_Sd1} only when $|X_c|\ge (d+1)k$.
Once this process stops, every $c\in C$ satisfies $|X_c|<(d+1)k$.
Moreover, by \Cref{rule:irrelevant-element}, every vertex of $D$ belongs to some copy of $H$.
Since each such copy intersects $C$ and $\diam(H)=3$, it follows that $V(D)\subseteq N_D^{\le 3}(C)$.

\begin{lemma}\label{lem:direct-sd11-separated}
Given an instance $(G,k)$ of \textsc{$H$-Packing}, one can compute in polynomial time an equivalent instance together with an isolating set $C$ and a partition $V(G)=C\uplus Z\uplus I$ witnessing that $C$ is an isolating set, such that $I$ is an independent set, $E(Z,I)=\emptyset$, $|C|=O\!\left(d_1^2k\right)$, $|Z|=O\!\left(d_1^6k^2\right)$, and $|E(C\uplus Z)|=O(d_1^7k^2)$.
\end{lemma}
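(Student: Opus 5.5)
We follow the template of \Cref{lem:direct-s12-separated}. First we apply \Cref{rule:irrelevant-element} and \Cref{rule:component-reduction} with $H=S_{d_1,1}$. If the packing certifies a yes-instance, we return $k$ disjoint copies of $H$. Otherwise $|C|<(d+2)k$ and $|\mathcal R|\le d|C|=O(d^2k)$, and every component of $\mathcal R$ has a neighbor in $C$. By \Cref{lem:sd11-structure}, each $R\in\mathcal R$ is either a star or satisfies $\Delta(R)\le d$.

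The \emph{star components} are handled by moving structure into $C$. For a star component with at least two edges, we move its center into $C$; for a single-edge component, we move one endpoint into $C$. This adds at most $|\mathcal R|=O(d^2k)$ vertices, so $|C|=O(d_1^2k)$. The leaves of these stars now become isolated in $G-C$, and we put them into $I$ together with $J$.

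We then recompute $D$ as the union of the non-star components; these are exactly the non-trivial components of $G-C$ that remain, and they satisfy $\Delta(D)\le d$. Rule \ref{rule:scattered_Sd1} is proved safe for this $D$. Moving star centers into $C$ only enlarges $C$, so every copy of $H$ still meets $C$ and the argument is unchanged. The vertices $x_i$ in that argument must have a neighbor in $D$, which holds because $D$ has no trivial components. We run the greedy maximal scattered sets $X_c$ and apply the rule until $|X_c|<(d+1)k$ for all $c\in C$. Each deletion of $v$ keeps $C\setminus\{v\}$ hitting every copy, so we re-apply \Cref{rule:irrelevant-element} and the structure is preserved.

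For the bounds, fix $c\in C$. By maximality of $X_c$, every vertex of $N_D(c)$ lies within $D$-distance $2$ of $X_c$. A radius-$2$ ball in $D$ has at most $1+d+d^2$ vertices, so $|N_D(c)|=O(d^2\cdot dk)=O(d^3k)$. Hence $|N_D(C)|=O(d^3k\cdot d^2k)=O(d^5k^2)$. Since every vertex of $D$ lies in a copy of $H$ that meets $C$, and $\diam(H)=3$, we have $V(D)\subseteq N_D^{\le 3}(C)=N_D^{\le 2}(N_D(C))$. Each radius-$2$ ball contributes $O(d^2)$ vertices, so $|Z|:=|V(D)|=O(d^7k^2)$.

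This is one factor of $d$ above the claimed $O(d_1^6k^2)$. To close the gap, we will sharpen the count by charging via the $M$-free structure, or by noting that the distance-$2$ covering already accounts for vertices up to distance $3$ from $c$ through $X_c$. Concretely, every vertex of $Z$ is within distance $2+2$ of some $x\in X_c$, but paths from $c$ start inside $N_D(c)$. The cleaner count is to bound $N_D^{\le 3}(c)$ directly as a union of balls of radius $4$ around the at most $dk$ points of $X_c$. That gives $O(d^4\cdot dk)$ per $c$, hence $O(d^5k\cdot d^2k)=O(d^7k^2)$ in total again. So getting exactly $d^6$ is the step I expect to be the fiddly obstacle, and the constants may simply need adjusting; the polynomial dependence is clear either way.

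For the edges, $|E(Z)|\le d|Z|/2$ and $|E(Z,C)|=\sum_c|N_D(c)|=O(d^5k^2)$, while $|E(C)|=O(d^4k^2)$. Summing gives $|E(C\uplus Z)|=O(d^8k^2)$ with the naive count, and $O(d_1^7k^2)$ if $|Z|=O(d_1^6k^2)$.

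Finally, let $I:=V(G)\setminus(C\cup Z)$. Every vertex of $I$ is an isolated vertex of $G-C$: it is either in $J$ or a former star leaf. So $I$ is independent and $E(Z,I)=\emptyset$, and $V(G)=C\uplus Z\uplus I$ witnesses that $C$ is an isolating set, as required.
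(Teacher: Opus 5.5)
\textbf{Gap: the counting step loses a factor of $d$.} Your construction is the paper's: component reduction, moving star centers into $C$, greedy scattered sets with \Cref{rule:scattered_Sd1}, then counting $D$ via radius-$2$ balls. The problem is the final count. As you note yourself, you only obtain $|Z|=O(d^7k^2)$ and $|E(C\uplus Z)|=O(d^8k^2)$, whereas the statement asserts $O(d_1^6k^2)$ and $O(d_1^7k^2)$. Saying that ``the constants may simply need adjusting'' leaves these explicit polynomial dependencies unproved. Your suggested repairs (radius-$4$ balls around $X_c$, and so on) cannot help. Your local estimates, $|N_D(c)|=O(d^3k)$ per vertex and $O(d^2)$ vertices per radius-$2$ ball, are already the ones the paper uses. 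The factor $d$ is lost in the outer sum, where you multiply by $|C|=O(d^2k)$ after the star centers have been added.

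\textbf{The missing observation.} Keep the original packing set $C_0$, with $|C_0|<(d+2)k$, separate from the enlarged $C$.
\begin{itemize}
\item Each moved star center lies in a star component of $G-C_0$. The set $D$ is a union of other components of $G-C_0$, and distinct components are anticomplete. Hence $C\setminus C_0$ has no neighbors in $D$, so $N_D(C)=N_D(C_0)$, and you only need $|X_c|<(d+1)k$ for $c\in C_0$.
\item A copy of $H$ through $v\in D$ must meet $C_0$, because $G-C_0$ is $H$-free. A shortest path inside that copy from $v$ to $C_0$ cannot leave the component of $v$ before reaching $C_0$. So $V(D)\subseteq N_D^{\le 3}(C_0)=N_D^{\le 2}(N_D(C_0))$.
\end{itemize}
With this, $|N_D(C_0)|<(d+2)k\cdot(d^2+1)(d+1)k=O(d^4k^2)$. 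It follows that
\begin{itemize}
\item $|Z|\le (d^2+1)|N_D(C_0)|=O(d^6k^2)$ and $|E(Z)|\le d|Z|/2=O(d^7k^2)$;
\item $|E(Z,C)|=|E(Z,C_0)|=O(d^4k^2)$ and $|E(C)|=O(d^4k^2)$.
\end{itemize}
Since $d=d_1+1$, these are exactly the claimed bounds. The rest of your argument matches the paper: the structure lemma, the safety of the scattered-set rule with the enlarged $C$, and the verification that $I$ is independent with $E(Z,I)=\emptyset$.
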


\begin{proof}
If $(G,k)$ is a yes-instance, return the disjoint union of $k$ copies of $H$.
Otherwise, after applying \Cref{rule:component-reduction} with $H=S_{d_1,1}$, let $C_0:=C$.
Then $|C_0|<(d+2)k$ and $|\mathcal R|<d(d+2)k$.

For each star component $R\in \mathcal R$, move the center of $R$ to $C$.
Since there are fewer than $d(d+2)k$ components in $\mathcal R$, we obtain $|C|<|C_0|+d(d+2)k<(d+1)(d+2)k$.
Every remaining non-trivial component of $G-C$ lies in $D$, and by \Cref{lem:sd11-structure}, $\Delta(D)\le d$.

Fix a vertex $c\in C_0$.
The maximality of $X_c$ implies that every vertex of $N_D(c)$ has $D$-distance at most $2$ from some vertex of $X_c$.
Since $\Delta(D)\le d$, every radius-$2$ ball in $D$ contains at most $1+d+d(d-1)=d^2+1$ vertices.
Hence
\(
|N_D(c)|\le (d^2+1)|X_c|<(d^2+1)(d+1)k=(d^3+d^2+d+1)k.
\)

Since vertices outside $N_D^{\le 3}(C)$ were deleted and the vertices moved from $G-C_0$ into $C\setminus C_0$ are centers of star components, every vertex of $D$ lies in $N_D^{\le 3}(C_0)=N_D^{\le 2}(N_D(C_0))$.
Therefore
\(
|V(D)|\le (d^2+1)|N_D(C_0)|\le (d^2+1)|C_0|(d^3+d^2+d+1)k = O(d^6k^2).
\)

Let $Z:=V(D)$ and let $I:=V(G)\setminus (C\cup Z)$.
Then $I$ is independent: vertices outside $C\cup Z$ lie either in trivial components of $G-C_0$ or in star components whose centers were moved to $C$, and in both cases they induce an independent set.
Moreover, no vertex of $I$ is adjacent to a vertex of $Z$, because different components of $G-C_0$ are anticomplete.

Since $G[D]$ has maximum degree at most $d$, we have $\Delta(G[Z])\le d$.
Hence $|E(Z)|=O(d|Z|)=O(d^7k^2)$.
Also,
\(
|E(D,C)|=|E(D,C_0)|\le |C_0|(d^3+1)k = O(d^4k^2),
\)
because vertices of $D$ have no neighbors in $C\setminus C_0$, and
\(
|E(Z,C)|=|E(Z,C_0)|\le |C_0|\cdot |\mathcal R|\cdot (d+1)=O(d^4k^2).
\)
Thus $|E(Z,C)|=O(d^4k^2)$.
Finally, $|E(C)|=O(|C|^2)=O(d^4k^2)$, and therefore $|E(C\uplus Z)|=O(d^7k^2)$.
This yields the desired partition $V(G)=C\uplus Z\uplus I$.
\end{proof}
}

\longonly{
\section{Reducing the General Case to Instances with a Quadratic-Size Vertex Cover}\label{sec:first-stage-vc}
}
\shortonly{
\subsection{Reducing to Instances with a Quadratic-Size Vertex Cover}\label{sec:first-stage-vc}
}

\longonly{
In the previous section, for the special cases $H\in \{S_{0,2}=P_5,S_{1,2},S_{d_1,1}\}$, we obtained an isolating set $C$ of size $O(k)$ together with a set $Z$ of size $O(k^2)$.
We now turn to a generic reduction for subdivided stars.
It transforms every instance of \textsc{$S_{d_1,d_2}$-Packing} into an equivalent instance with a vertex cover of size $O(d^5k^2)$, where $d=d_1+d_2$.
Formally, we prove the following:
}

\shortonly{
Our goal in this section is to reduce the general case of \textsc{$S_{d_1,d_2}$-Packing} to an instance with quadratic-size vertex cover.
Due to space limitations, we give only a technical overview here; see the full version for a detailed account.

As before, we begin by computing a maximal family $\mathcal{P}$ of pairwise vertex-disjoint copies of $S_{d_1,d_2}$ and setting $C:=\bigcup_{P\in \mathcal{P}}V(P)$, and assume $|C|= O(dk)$, where $d=d_1+d_2$.
Let $M$ be a maximal matching in $G-C$, and let $D:=V(M)$ and $I:=V(G)\setminus (C\cup D)$.
Since $M$ is maximal, the set $I$ is independent.
The main strategy is to reduce the size of $D$ to $O_d(k^2)$ using some reduction rules; this way, $C\cup D$ has size $O_d(k^2)$, and thus, is a desired vertex cover.
By applying \Cref{rule:irrelevant-element}, we can assume that all vertices in $D$ have distance at most $\diam(S_{d_1,d_2})=4$ from $C$.
Thus, to bound $|D|$, it suffices to bound the number of $C-D$ paths of length at most $4$, whose inner vertices are in $D\cup I$, by $O_d(k^2)$.
Since $I$ is independent, vertices in $I$ cannot be adjacent on such a path; thus the possible types are
\[
    D,\ DD,\ ID,\ DDD,\ DID,\ IDD,\ DDDD,\ DDID,\ DIDD,\ IDDD,\ IDID.
\]

A seemingly natural strategy for bounding the number of such paths is to bound $\Delta(G)$.
Indeed, from the fact that $G-C$ is $S_{d_1,d_2}$-free, we can bound $|N_D(u)|$ for $u\in D$ by $O(d)$; thus the $D-D$ expansion is bounded. 
Moreover, for $c\in C$, applying a simple reduction rule bounds $|N_D(c)|$ by $O(dk)$; thus the $C-D$ expansion is also bounded.
However, an analogous analysis does not bound $C-I$ and $D-I$ expansions; thus the $\Delta(G)$ itself cannot be bounded. 
Nevertheless, using arguments based on K\H{o}nig's Theorem, we can instead bound two-hop expansions, $C-I-D$ and $D-I-D$, by $O(d^2k)$ and $O(d^2)$, respectively; this still suffices to bound the number of paths ending at $D$. 
Combining these bounds yields the desired bound $|D|=|N_{G-C}^{\leq 4}(C)\cap D|= O(d^5k^2)$.
This leads to the following.
}

\markproofomitted
\begin{lemma}\label{lem:first-stage-vc}
For every fixed subdivided star $H=S_{d_1,d_2}$, where $d:=d_1+d_2$, there exists a polynomial-time reduction that transforms every instance $(G,k)$ of \textsc{$H$-Packing} into an equivalent instance $(G',k')$ such that $k'\le k$ and $\vc(G') = O(d^5k^2)$.
\end{lemma}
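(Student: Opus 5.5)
We start from a maximal packing $\mathcal P$ of vertex-disjoint copies of $H$ and set $C:=\bigcup_{P\in\mathcal P}V(P)$. If $|\mathcal P|\ge k$ we output a trivial yes-instance; otherwise $|C|<(2d+1)k$ and $G-C$ is $H$-free. We then take a maximal matching $M$ in $G-C$ and put $D:=V(M)$ and $I:=V(G)\setminus(C\cup D)$. By maximality $I$ is independent, so $C\cup D$ is a vertex cover, and it suffices to reduce the instance until $|D|=O(d^5k^2)$. We exhaust \Cref{rule:irrelevant-element}; every copy of $H$ meets $C$ and $\diam(H)\le 4$, so every surviving vertex of $D$ is the endpoint of a path of length at most $4$ that starts in $C$ and has all inner vertices in $D\cup I$. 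Because $I$ is independent, the inner sequence of such a path is one of the eleven types listed in the overview. We will bound the number of endpoints of each type by multiplying per-step expansion bounds.

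\textbf{$D$--$D$ expansion.} Take $u\in D$ and suppose $|N_{G-C}(u)|$ is large, say at least $d+1$ neighbours in $D$. Each neighbour $w$ is matched by $M$ to a partner $w'$. If at least $d$ of the neighbours have partners outside $N[u]$ that are pairwise distinct, we obtain $H$ centred at $u$ inside $G-C$, which is a contradiction. A short counting argument on the matching edges therefore bounds $|N_D(u)|$ by $O(d)$.

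\textbf{$C$--$D$ expansion.} For $c\in C$, we look at the matching edges $xy\in M$ with $x\in N(c)$. These are pairwise disjoint 2-paths from $c$. If there are at least $(2d+1)k$ of them, then for any packing of $k-1$ copies avoiding $c$ there remain $d$ untouched matching edges, and these give a copy of $H$ centred at $c$. So the rule ``delete $c$ and decrease $k$ by $1$'' is safe, as in \Cref{rule:scattered_S12}. After exhausting it, $|N_D(c)|=O(dk)$, since each such matching edge contributes at most two neighbours.

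\textbf{Two-hop expansions $C$--$I$--$D$ and $D$--$I$--$D$.} This is the main obstacle, because the degrees into $I$ cannot be bounded. For $v\in C\cup D$, consider the bipartite graph between $N_I(v)$ and $D$. If it has a matching of size $\Omega(dk)$ (respectively $\Omega(d)$ for $v\in D$), we get many disjoint 2-paths $v$--$i$--$w$. For $v\in D$ this yields $H$ inside $G-C$ together with $v$'s own edges, a contradiction. For $v\in C$ it gives the same deletion rule as above. Otherwise K\H{o}nig's theorem provides a vertex cover of that bipartite graph of size $O(dk)$ (respectively $O(d)$). Its $D$-side directly bounds part of the reachable set $N_D(N_I(v))$. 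For its $I$-side, each of the $O(dk)$ cover vertices $i\in I$ has neighbours only in $C\cup D$, and we must bound $|N_D(i)|$ separately. We would first try to show that an $I$-vertex with many $D$-neighbours, each matched to distinct partners, yields a copy of $H$ centred at $i$ within $G-C$ (using $d$ matched pairs), which is impossible. This gives $|N_D(i)|=O(d)$ and hence two-hop bounds of $O(d^2k)$ from $C$ and $O(d^2)$ from $D$.

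\textbf{Combining.} For each of the eleven path types we multiply the relevant bounds: one $C$-step of size $O(dk)$ or $O(d^2k)$ for the first hop, then $O(d)$ per $D$--$D$ step and $O(d^2)$ per $D$--$I$--$D$ step. Summing over $c\in C$ with $|C|=O(dk)$ gives $|D|=O(d^5k^2)$; the worst type is, for instance, $IDID$ or $DDDD$. Every rule is a deletion that decreases $k$ at most, and each is checkable in polynomial time, so $k'\le k$ and $\vc(G')\le|C|+|D|=O(d^5k^2)$.
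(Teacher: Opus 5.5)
Your proposal is correct and follows the paper's proof essentially step for step. The shared ingredients are the same $C\uplus D\uplus I$ split from a maximal packing and a maximal matching, and a matching-based deletion rule for $c\in C$; the paper states this rule once, for all edges of $G-C$ touching $N_G(c)$, which covers both your $C$--$D$ and $C$--$I$--$D$ cases. The remaining steps also coincide: K\"{o}nig's theorem combined with $|N_D(i)|\le 2d-2$ for $i\in I$ bounds the two-hop expansions, and multiplying over the path types gives $O(d^4k)$ per $c$.

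There is one slip, in the $D$--$D$ step. The $d$ chosen neighbours need only lie on distinct $M$-edges other than $u$'s own; their partners may lie in $N(u)$, since copies need not be induced. The correct threshold is therefore $2d$ neighbours, giving $\Delta(G[D])\le 2d-1$, rather than $d+1$ neighbours with partners outside $N[u]$ as you stated.
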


\longonly{
\subsection{Degree analysis}

Fix $H:=S_{d_1,d_2}$ and write $d:=d_1+d_2$.
Let $\mathcal P$ be a maximal family of pairwise vertex-disjoint copies of $H$ in $G$, and set $C:=\bigcup_{P\in\mathcal P}V(P)$.
If $|\mathcal P|\ge k$, then $(G,k)$ is a yes-instance, so we may assume that $|\mathcal P|<k$.
Then $|C|<|V(H)|k\le (2d+1)k$, and maximality implies that $G-C$ is $H$-free.

Let $M$ be a maximal matching in $G-C$, and let $D:=V(M)$ and $I:=V(G)\setminus (C\cup D)$.
For each vertex $u\in D$, let $\mu(u)$ denote its mate on the matching edge of $M$.
Since $M$ is maximal, the set $I$ is independent.

We begin by showing that neither $G[D]$ nor the neighborhoods of vertices of $I$ inside $D$ can be too large, since otherwise one would already find a copy of $H$ inside $G-C$.

\begin{lemma}\label{lem:degD}
The graph $G[D]$ has maximum degree at most $2d-1$.
\end{lemma}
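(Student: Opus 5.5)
We argue by contradiction: assuming some $u\in D$ has $\deg_{G[D]}(u)\ge 2d$, we build a copy of $H=S_{d_1,d_2}$ inside $G[D]\subseteq G-C$. This is impossible, because maximality of $\mathcal P$ makes $G-C$ $H$-free. The copy is centered at $u$, and its branches are built from matching edges of $M$.

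First we sort the neighbors. Let $N:=N_D(u)\setminus\{\mu(u)\}$, so $|N|\ge 2d-1$. Every $w\in N$ has its mate $\mu(w)\in D$, and $\mu(w)\ne u$ since $u$'s mate is $\mu(u)\ne w$. The matching edges $\{w,\mu(w)\}$ for $w\in N$ may coincide: this happens exactly when $w$ and $\mu(w)$ are both in $N$. Hence $N$ splits into edges of $M$ lying entirely inside $N$ (both endpoints adjacent to $u$) and single vertices of $N$ whose mate lies outside $N\cup\{u\}$.

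Next we pick branches. Choose a collection of distinct matching edges $e_1,e_2,\dots$ that each meet $N$ and avoid $\mu(u)$. Each such edge contributes one vertex $w\in N$ adjacent to $u$ together with its mate. Together these give a length-$2$ branch $u-w-\mu(w)$, or, if we only use $w$, a length-$1$ branch.

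We need $d$ vertex-disjoint branches in total: $d_2$ long ones and $d_1$ short ones. Each branch consumes at most two vertices of $N$, namely both endpoints when a matching edge lies inside $N$. Since the edges $e_i$ are distinct matching edges, the branches are automatically disjoint. The number of distinct matching edges meeting $N$ is at least $\lceil |N|/2\rceil\ge d$. If $\mu(u)\in N$, we have already removed it, so it causes no trouble.

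Finally we assemble the copy. Take $d$ distinct matching edges meeting $N$. Use $d_2$ of them as length-$2$ branches $u-w_i-\mu(w_i)$, and from each of the remaining $d_1$ take one endpoint $w_i\in N$ as a length-$1$ branch. All these vertices are distinct and different from $u$, so we obtain a subgraph of $G-C$ isomorphic to $S_{d_1,d_2}$, a contradiction.

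The main point requiring care is the counting: ensuring at least $d$ distinct matching edges meet $N$ while excluding $u$'s own matching edge. This is why the threshold is $2d$ rather than $d+1$. If $|N|\ge 2d-1$, then the number of distinct matching edges meeting $N$ is at least $\lceil(2d-1)/2\rceil=d$, which suffices.
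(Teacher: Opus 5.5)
Your proof is correct and follows essentially the same route as the paper's. Both remove $\mu(u)$, observe that the remaining at least $2d-1$ neighbors meet at least $d$ distinct edges of $M\setminus\{u\mu(u)\}$ (each such edge contributes at most two neighbors), and use these edges to build a copy of $S_{d_1,d_2}$ centered at $u$ inside $G-C$, contradicting the maximality of $\mathcal P$.
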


\begin{proof}
Suppose for contradiction that some vertex $u\in D$ has at least $2d$ neighbors in $G[D]$, and let $u\mu(u)\in M$ be the matching edge containing $u$.
Then $N_{G[D]}(u)\setminus\{\mu(u)\}$ has at least $2d-1$ vertices.
Since each edge of $M\setminus\{u\mu(u)\}$ contributes at most two vertices to this set, there exist $d$ distinct edges $e_1,\dots,e_d\in M\setminus\{u\mu(u)\}$ such that, for each $j$, the edge $e_j$ has an endpoint $x_j$ adjacent to $u$.
Let $y_j$ be the other endpoint of $e_j$.
Because $M$ is a matching, the vertices $x_1,y_1,\dots,x_d,y_d$ are pairwise distinct and all different from $u$.

Now use $u$ as the center: for $j\le d_2$, the path $u-x_j-y_j$ forms a branch of length $2$, and for $d_2<j\le d$, the edge $u-x_j$ forms a branch of length $1$.
This yields a copy of $H$ in $G-C$, contrary to the choice of $C$.
Hence $\Delta(G[D])\le 2d-1$.
\end{proof}

The same idea also bounds how many vertices of $D$ a vertex of $I$ can see.
\begin{lemma}\label{lem:ItoD}
For every vertex $x\in I$, $|N_G(x)\cap D|\le 2d-2$.
\end{lemma}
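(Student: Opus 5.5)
The plan is to follow the proof of \Cref{lem:degD}: from a large neighborhood of $x$ in $D$, build a copy of $H$ centered at $x$ that lies entirely in $G-C$. This contradicts the maximality of $\mathcal P$, since $G-C$ is $H$-free.

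Suppose for contradiction that some $x\in I$ has at least $2d-1$ neighbors in $D$. Each edge of $M$ contributes at most two vertices to $N_G(x)\cap D$, so these neighbors meet at least $\lceil (2d-1)/2\rceil = d$ distinct edges of $M$. Pick $d$ distinct edges $e_1,\dots,e_d\in M$. For each $j$, let $x_j$ be an endpoint of $e_j$ adjacent to $x$, and let $y_j:=\mu(x_j)$ be its mate. Because $M$ is a matching and $x\in I$ lies outside $D=V(M)$, the vertices $x,x_1,y_1,\dots,x_d,y_d$ are pairwise distinct.

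Now take $x$ as the center. For $j\le d_2$, the path $x-x_j-y_j$ gives a branch of length $2$, using the matching edge $e_j=x_jy_j$. For $d_2<j\le d$, the edge $x-x_j$ gives a branch of length $1$. All these vertices lie in $D\cup I\subseteq V(G)\setminus C$, so this is a copy of $S_{d_1,d_2}$ in $G-C$, a contradiction. Hence $|N_G(x)\cap D|\le 2d-2$.

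The only point needing care is the counting. In \Cref{lem:degD} the mate $\mu(u)$ had to be discarded, which is why the threshold there was $2d$. Here $x$ is unmatched, so no neighbor is lost and $2d-1$ neighbors already suffice.
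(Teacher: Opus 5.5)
Your proof is correct and is essentially the paper's argument: you pigeonhole the $2d-1$ neighbors onto at least $d$ distinct edges of $M$, then build a copy of $S_{d_1,d_2}$ centered at $x$ inside $G-C$, contradicting the maximality of $\mathcal P$. Your remark on why the threshold drops from $2d$ in \Cref{lem:degD} to $2d-1$ here (there is no mate to discard, since $x\notin V(M)$) is also accurate.
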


\begin{proof}
Suppose for contradiction that some vertex $x\in I$ satisfies $|N_G(x)\cap D|\ge 2d-1$.
Since each edge of $M$ contributes at most two vertices to $N_G(x)\cap D$, there exist $d$ distinct edges $e_1,\dots,e_d\in M$ such that, for each $j$, the edge $e_j$ has an endpoint $v_j$ adjacent to $x$.
Let $w_j$ be the other endpoint of $e_j$.
Because $M$ is a matching, the vertices $v_1,w_1,\dots,v_d,w_d$ are pairwise distinct and all different from $x$.

Now use $x$ as the center: for $j\le d_2$, the path $x-v_j-w_j$ forms a branch of length $2$, and for $d_2<j\le d$, the edge $x-v_j$ forms a branch of length $1$.
This yields a copy of $H$ in $G-C$, contrary to the choice of $C$.
\end{proof}

We now introduce a reduction rule that controls how many matching edges can survive near a fixed vertex of $C$.
If too many such edges survive around some $c\in C$, then $c$ can be forced into one copy of $H$ and removed.
For $c\in C$, define $F_c^{\circ}:=\{e\in E(G-C): e\cap N_G(c)\neq \emptyset\}$.

\begin{rrule}\label{rule:matching}
If $G[F_c^{\circ}]$ contains a matching of size $2dk$, delete $c$ and decrease $k$ by $1$.
\end{rrule}

\begin{lemma}\label{lem:matching-rule-safe}
\Cref{rule:matching} is safe.
\end{lemma}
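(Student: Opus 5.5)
Only the direction from $(G-c,k-1)$ to $(G,k)$ needs work: if $G$ has $k$ disjoint copies, then deleting $c$ destroys at most one of them, so $(G-c,k-1)$ is a yes-instance.

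For the converse, let $\mathcal Q$ be a packing of $k-1$ vertex-disjoint copies of $H$ in $G-c$, and let $N$ be a matching of size $2dk$ in $G[F_c^{\circ}]$. We build one more copy of $H$, centred at $c$ and disjoint from $V(\mathcal Q)$. Every edge $e\in N$ lies in $G-C$ and has an endpoint in $N_G(c)$. Write $e=x_ey_e$ with $x_e\in N_G(c)$.
\begin{itemize}
\item If $e$ avoids $V(\mathcal Q)$, then $c-x_e-y_e$ is a branch of length $2$.
\item The single vertex $x_e$ can also serve as a branch of length $1$.
\end{itemize}
Since $N$ is a matching, the edges of $N$ are pairwise vertex-disjoint. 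So branches obtained from distinct edges of $N$ are disjoint, and none of them contains $c$, because $c\in C$ while every edge of $N$ lies in $G-C$. It therefore suffices to find $d$ edges of $N$ that are vertex-disjoint from $V(\mathcal Q)$. Then $d_2$ of them give the branches of length $2$, and the remaining $d_1$ give branches of length $1$ through their endpoints $x_e$. Together with $c$ this is a copy of $S_{d_1,d_2}$ disjoint from $\mathcal Q$, and adding it to $\mathcal Q$ yields $k$ disjoint copies in $G$.

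The counting step is as follows. Each copy in $\mathcal Q$ has $|V(H)|=2d_2+d_1+1\le 2d+1$ vertices. Because the edges of $N$ are disjoint, each vertex meets at most one edge of $N$, so each copy meets at most $2d+1$ edges of $N$. Hence $\mathcal Q$ meets at most $(2d+1)(k-1)$ edges of $N$, and at least
\[
2dk-(2d+1)(k-1)=2d+1-k
\]
edges of $N$ remain free. This is enough only when $k$ is small. The crude bound fails because it ignores that every copy in $\mathcal Q$ must intersect $C$.

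The refined count uses this. Since $G-C$ is $H$-free, each copy $Q\in\mathcal Q$ contains at least one vertex of $C$. All edges of $N$ lie in $G-C$, so the vertices of $Q$ in $C$ hit no edge of $N$. Thus $Q$ has at most $|V(H)|-1=d_1+2d_2\le 2d$ vertices available to hit edges of $N$, and it meets at most $2d$ of them. Over all of $\mathcal Q$ at most $2d(k-1)$ edges of $N$ are hit, leaving at least
\[
2dk-2d(k-1)=2d\ge d
\]
edges of $N$ vertex-disjoint from $V(\mathcal Q)$. Choosing any $d$ of them completes the construction. The main obstacle is this counting step: one has to notice that each copy in $\mathcal Q$ is forced to spend a vertex in $C$, which drops the per-copy count from $2d+1$ to $2d$ and makes the threshold $2dk$ sufficient.
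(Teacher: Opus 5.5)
Your proof is correct and follows the paper's argument essentially verbatim. Each copy in $\mathcal Q$ must meet $C$ because $G-C$ is $H$-free, so $\mathcal Q$ uses at most $2d(k-1)$ vertices of $G-C$ and blocks at most that many edges of $N$; the remaining $2d\ge d$ edges then yield a copy of $H$ centred at $c$. The preliminary crude count with $2d+1$ vertices per copy is unnecessary and can be dropped.
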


\begin{proof}
If $(G,k)$ is a yes-instance, then clearly $(G-c,k-1)$ is a yes-instance.
We prove the converse.

Let $N$ be a matching of size $2dk$ in $G[F_c^{\circ}]$, and assume that $(G-c,k-1)$ is a yes-instance.
Let $\mathcal Q$ be an $H$-packing of size $k-1$ in $G-c$.
Every copy in $\mathcal Q$ intersects $C$, and since $c\notin G-c$, in fact every copy intersects $C\setminus\{c\}$.
Hence each copy contains at most $|V(H)|-1\le 2d$ vertices of $G-C$, so the whole packing uses at most $2d(k-1)$ vertices of $G-C$.

Since $N$ is a matching in $G-C$, each vertex of $G-C$ lies on at most one edge of $N$.
Therefore the vertices of $V(\mathcal Q)\setminus C$ block at most $2d(k-1)$ edges of $N$.
As $|N|=2dk$, at least $2d$ edges of $N$ are disjoint from $V(\mathcal Q)$.
Choose $d$ such edges and denote them by $f_1,\dots,f_d$.

Since each edge $f_j$ belongs to $F_c^{\circ}$, it has an endpoint $x_j\in N_G(c)$; let $y_j$ be the other endpoint of $f_j$.
Because the edges $f_1,\dots,f_d$ are pairwise distinct edges of the matching $N$, the vertices $x_1,y_1,\dots,x_d,y_d$ are pairwise distinct and lie outside $V(\mathcal Q)\cup C$.

Now use $c$ as the center: for $j\le d_2$, the path $c-x_j-y_j$ forms a branch of length $2$, and for $d_2<j\le d$, the edge $c-x_j$ forms a branch of length $1$.
This yields a copy of $H$ disjoint from $\mathcal Q$, and therefore $(G,k)$ is a yes-instance.
\end{proof}

Henceforth, we may assume that \Cref{rule:matching} no longer applies to any vertex of $C$.

\subsection{Size analysis}

Our remaining task is to bound the part of $D$ that can still matter after \Cref{rule:matching} no longer applies.
By exhaustively applying \Cref{rule:irrelevant-element}, we may assume that every vertex of $D$ belongs to some copy of $H$ in $G$.
Since every copy of $H$ intersects $C$ and $\diam(H)\le 4$, it follows that every vertex of $D$ lies at distance at most $4$ from some vertex of $C$.
We therefore bound the set of vertices of $D$ that can be reached from $C$ by a path of length at most $4$.
Steps that stay inside $D$ contribute only a constant factor $O(d)$, by \Cref{lem:degD}.
The initial transitions $c-D$ and $c-I-D$ are handled separately in \Cref{lem:Ac,lem:Sc}, and after one has reached $D$, the only potentially non-constant further expansion is a step of the form $D-I-D$, which is controlled by \Cref{lem:Zu}.

We start with bounding the neighbors of $c$ that already lie in $D$.
\begin{lemma}\label{lem:Ac}
For $c\in C$, let $A_c:=N_G(c)\cap D$.
Then $|A_c|<4dk$.
\end{lemma}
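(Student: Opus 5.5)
We prove $|A_c|<4dk$ by contradiction, using only that \Cref{rule:matching} no longer applies to $c$: then $G[F_c^{\circ}]$ has no matching of size $2dk$. So suppose $|A_c|\ge 4dk$; from the matching edges of $M$ touching $A_c$ we will extract a matching of size $2dk$ inside $G[F_c^{\circ}]$.

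Every vertex $u\in A_c$ lies in $D=V(M)$, so it has a mate $\mu(u)$, and the edge $u\mu(u)\in M$ is an edge of $G-C$. This edge meets $N_G(c)$ in $u$, so $u\mu(u)\in F_c^{\circ}$. Let $M_c:=\{u\mu(u): u\in A_c\}\subseteq M$. Since $M$ is a matching, $M_c$ is a matching contained in $F_c^{\circ}$.

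Each edge of $M_c$ contains at most two vertices of $A_c$: both endpoints lie in $A_c$ exactly when $c$ is adjacent to both of them. Hence
\[
|M_c|\ge |A_c|/2\ge 2dk.
\]
Thus $G[F_c^{\circ}]$ contains a matching of size at least $2dk$, and in particular one of size exactly $2dk$ after discarding edges. Then \Cref{rule:matching} would apply to $c$, a contradiction. Therefore $|A_c|<4dk$.

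There is no real obstacle here. The only point that needs care is the factor $2$ coming from matching edges with both endpoints in $N_G(c)$, which is exactly why the bound is $4dk$ rather than $2dk$.
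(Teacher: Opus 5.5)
Your proposal is correct and matches the paper's proof: both take $M_c$ to be the edges of $M$ meeting $A_c$, note that $M_c$ is a matching inside $F_c^{\circ}$, and use $|A_c|\le 2|M_c|$ together with the fact that \Cref{rule:matching} no longer applies to $c$. The only difference is that you phrase it as a contradiction, whereas the paper argues directly.
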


\begin{proof}
Let $M_c:=\{e\in M : e\cap A_c\neq\emptyset\}$.
Then every edge of $M_c$ lies in $E(G-C)$ and has an endpoint in $N_G(c)$, so $M_c\subseteq F_c^{\circ}$.
Since $M_c$ is a matching and \Cref{rule:matching} no longer applies to~$c$, we have $|M_c|<2dk$.
Therefore $|A_c|\le 2|M_c|<4dk$.
\end{proof}

For each vertex $a\in C\cup D$, let
\(
T_a:=N_G(N_G(a)\cap I)\cap D.
\)
Thus $T_a$ is the set of vertices of $D$ that can be reached from $a$ by first taking an edge to a vertex of $I$ and then an edge back to $D$.

The next lemma handles the vertices of $D$ reached from $c$ through one intermediate vertex of $I$.
\begin{lemma}\label{lem:Sc}
For $c\in C$, let $S_c:=T_c$.
Then $|S_c|<(4d^2-2d)k$.
\end{lemma}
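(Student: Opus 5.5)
We bound $|S_c|$ with the K\H{o}nig-type argument sketched in the overview. Since \Cref{rule:matching} no longer applies to $c$, every matching in $G[F_c^{\circ}]$ has fewer than $2dk$ edges. The relevant edges are those of the bipartite graph $B_c:=G[N_G(c)\cap I,\,D]$: every such edge lies in $E(G-C)$ and has an endpoint in $N_G(c)$, so $E(B_c)\subseteq F_c^{\circ}$. Hence the maximum matching in $B_c$ has fewer than $2dk$ edges. By K\H{o}nig's Theorem, $B_c$ has a vertex cover $K=K_I\uplus K_D$ with $K_I\subseteq N_G(c)\cap I$, $K_D\subseteq D$ and $|K|<2dk$.

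Next we show that $S_c\subseteq K_D\cup N_G(K_I)\cap D$. Take $v\in S_c$. Then $v$ has a neighbour $x\in N_G(c)\cap I$, and the edge $xv$ is an edge of $B_c$. So either $v\in K_D$, or $x\in K_I$, in which case $v\in N_G(K_I)\cap D$.

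By \Cref{lem:ItoD}, each vertex of $K_I\subseteq I$ has at most $2d-2$ neighbours in $D$. Writing $a:=|K_D|$ and $b:=|K_I|$ with $a+b<2dk$, we get
\[
|S_c|\le a+(2d-2)b\le (2d-2)(a+b)<(2d-2)\cdot 2dk=(4d^2-4d)k,
\]
which is at most $(4d^2-2d)k$. The second inequality uses $2d-2\ge 1$, i.e.\ $d\ge 2$. If $d=1$, then $b$ contributes nothing and $|S_c|\le a<2k=(4d^2-2d)k$.

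The step that carries the proof is recognising that $E(B_c)\subseteq F_c^{\circ}$, so that the matching bound from \Cref{rule:matching} transfers to $B_c$. This is exactly where the definition of $F_c^{\circ}$, which only requires one endpoint in $N_G(c)$, is used, and it is what makes K\H{o}nig applicable. The only other point needing care is that $K_I$ lies inside $I$, so that the degree bound of \Cref{lem:ItoD} applies to it; the remaining estimates are routine.
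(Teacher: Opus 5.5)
Your proof is correct and takes essentially the same route as the paper: you transfer the matching bound from \Cref{rule:matching} to the bipartite graph between $N_G(c)\cap I$ and $D$ (the paper uses $S_c$ as the right side, which makes no difference), apply K\H{o}nig's theorem, and charge the vertices of $S_c$ covered from the $I$-side via the degree bound of \Cref{lem:ItoD}. Your final estimate $(4d^2-4d)k$ is slightly tighter than the paper's, which bounds $|X_c\cap S_c|<2dk$ and the $I$-side contribution $(2d-2)\cdot 2dk$ separately, and your separate treatment of $d=1$ is fine.
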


\begin{proof}
Let $L_c:=N_G(c)\cap I$.
Then $S_c=N_G(L_c)\cap D$.
Consider the bipartite graph $B_c:=G[L_c,S_c]$.
If $B_c$ contained a matching of size $2dk$, then those $2dk$ edges would form a matching of size $2dk$ in $G[F_c^{\circ}]$, because every such edge lies in $E(G-C)$ and has its endpoint in $L_c\subseteq N_G(c)$.
This would contradict the fact that \Cref{rule:matching} no longer applies to $c$.
Hence $B_c$ has no matching of size $2dk$.

Since $B_c$ is bipartite, K\H{o}nig's theorem yields a vertex cover $X_c$ of $B_c$ with $|X_c|<2dk$.
By \Cref{lem:ItoD}, every vertex $x\in L_c\subseteq I$ satisfies $\deg_{B_c}(x)\le |N_G(x)\cap D|\le 2d-2$.
Because $X_c$ meets every edge of $B_c$, we have $|S_c|\le |X_c\cap S_c|+\sum_{x\in X_c\cap L_c}\deg_{B_c}(x)<2dk+(2d-2)\cdot 2dk=(4d^2-2d)k$.
\end{proof}

Finally, we bound one further alternation of the form $D-I-D$ starting from a fixed vertex of $D$.
\begin{lemma}\label{lem:Zu}
Fix $u\in D$, and let $Z_u:=T_u$.
Then $|Z_u|\le 2d^2$.
\end{lemma}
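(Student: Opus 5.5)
The plan is to repeat the Kőnig-type argument of \Cref{lem:Sc}. The difference is where the upper bound on the matching comes from: instead of the exhausted \Cref{rule:matching}, it comes from the fact that $G-C$ is $H$-free. Let $L_u:=N_G(u)\cap I$, so that $Z_u=N_G(L_u)\cap D$. Consider the bipartite graph $B_u:=G[L_u,Z_u]$. Both sides lie outside $C$, since $L_u\subseteq I$ and $Z_u\subseteq D$.

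First I would show that $B_u$ has no matching of size $d+1$. Suppose $N$ is such a matching. At most one edge of $N$ is incident with $u$, which can happen because $u$ itself may lie in $Z_u$. Discarding that edge leaves $d$ edges $x_1w_1,\dots,x_dw_d$ with $x_j\in L_u$ and $w_j\in Z_u\setminus\{u\}$. These $2d$ vertices are pairwise distinct because $N$ is a matching, and none of them equals $u$: the $x_j$ lie in $I$, and $u\in D$. Now take $u$ as the center. For $j\le d_2$ use the path $u-x_j-w_j$ as a branch of length $2$, and for $d_2<j\le d$ use the edge $u-x_j$ as a branch of length $1$. This gives a copy of $H$ inside $G-C$, which contradicts the maximality of $\mathcal P$. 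Hence every matching of $B_u$ has at most $d$ edges.

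By Kőnig's theorem, $B_u$ then has a vertex cover $X_u$ with $|X_u|\le d$. Every vertex of $Z_u$ is incident with some edge of $B_u$ by definition. So each vertex of $Z_u$ either lies in $X_u\cap Z_u$ or is a neighbour of a vertex in $X_u\cap L_u$. By \Cref{lem:ItoD}, each $x\in L_u\subseteq I$ has $\deg_{B_u}(x)\le 2d-2$. Therefore
\[
|Z_u|\le |X_u\cap Z_u|+(2d-2)\,|X_u\cap L_u|\le (2d-2)\,|X_u| + |X_u| \le d(2d-1)\le 2d^2 .
\]

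The only delicate point I expect is the first step. One must make sure that $u$ itself can occur in $Z_u$, and that this case is removed before assembling the star. That is why the matching bound is $d$ rather than $d-1$, and it still fits within the claimed $2d^2$.
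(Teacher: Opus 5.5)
Your proof is correct and follows the paper's argument. You use $H$-freeness of $G-C$ to bound the matching number of the bipartite graph between $L_u=N_G(u)\cap I$ and $Z_u$, then apply K\H{o}nig's theorem and the degree bound of \Cref{lem:ItoD}. The only cosmetic difference is how $u$ is handled. The paper removes $u$ from the right side up front, works with $Z_u\setminus\{u\}$, gets a cover of size at most $d-1$ and degrees at most $2d-3$, and then adds $1$. You keep $u$ and absorb it into a matching bound of $d$ and a degree bound of $2d-2$. Both routes stay within $2d^2$.
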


\begin{proof}
Let $L_u:=N_G(u)\cap I$ and let $R_u:=Z_u\setminus\{u\}$.
Consider the bipartite graph $B_u:=G[L_u,R_u]$.
If $B_u$ contained a matching of size $d$, say $x_1v_1,\dots,x_dv_d$, then the vertices $x_1,\dots,x_d$ would be pairwise distinct, the vertices $v_1,\dots,v_d$ would be pairwise distinct and all different from $u$, and all these vertices would lie in $G-C$.
Using $u$ as the center, for $i\le d_2$ the path $u-x_i-v_i$ forms a branch of length $2$, and for $d_2<i\le d$ the edge $u-x_i$ forms a branch of length $1$.
This yields a copy of $H$ in $G-C$, contrary to the choice of $C$.
Hence $B_u$ has no matching of size $d$.

Again by K\H{o}nig's theorem, $B_u$ has a vertex cover $Y_u$ of size at most $d-1$.
By \Cref{lem:ItoD}, every vertex $x\in L_u$ satisfies $|N_G(x)\cap D|\le 2d-2$, and one of these neighbors is $u$.
Therefore $\deg_{B_u}(x)\le 2d-3$ for every $x\in L_u$.
Since $Y_u$ meets every edge of $B_u$, we obtain
\(
|R_u|\le |Y_u\cap R_u|+\sum_{x\in Y_u\cap L_u}\deg_{B_u}(x)\le (d-1)+(2d-3)(d-1)=2(d-1)^2.
\)
Since $Z_u\subseteq R_u\cup\{u\}$, we obtain $|Z_u|\le 1+|R_u|\le 1+2(d-1)^2\le 2d^2$.
\end{proof}

We can now combine the previous three bounds and control all vertices of $D$ that may lie at distance at most $4$ from a fixed vertex of $C$.
\begin{lemma}\label{lem:Uc}
For $c\in C$, let $U_c$ be the set of vertices $v\in D$ for which there exists a path from $c$ to $v$ of length at most $4$ whose internal vertices lie in $D\cup I$.
Then $|U_c|<88d^4k$.
\end{lemma}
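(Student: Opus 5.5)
We classify every path $P=(c,\dots,v)$ of length at most $4$ with internal vertices in $D\cup I$ and endpoint $v\in D$ by the sequence of types of its internal vertices and its endpoint. Since $I$ is independent and the endpoint lies in $D$, no two consecutive vertices of $P$ after $c$ lie in $I$. This leaves exactly the eleven types listed in the overview: $D$, $DD$, $ID$, $DDD$, $DID$, $IDD$, $DDDD$, $DDID$, $DIDD$, $IDDD$, $IDID$. We bound the endpoints of each type separately and sum the bounds.

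The bound for each type comes from composing a few elementary ``expansion operators'' on subsets of $D$. We use the following facts.
\begin{itemize}
\item A first step $c\to D$ reaches at most $|A_c|<4dk$ vertices, by \Cref{lem:Ac}.
\item A first step $c\to I\to D$ reaches at most $|S_c|<(4d^2-2d)k$ vertices, by \Cref{lem:Sc}.
\item For $W\subseteq D$, one step inside $D$ reaches at most $(2d-1)|W|$ new vertices, by \Cref{lem:degD}.
\item For $W\subseteq D$, a step $D\to I\to D$ reaches $\bigcup_{u\in W}T_u$, which has size at most $2d^2|W|$ by \Cref{lem:Zu}.
\end{itemize}
So a type is a word: it starts with either $D$ (factor $4dk$) or $ID$ (factor $(4d^2-2d)k<4d^2k$), and continues with letters $D$ (factor $2d-1<2d$) or $ID$ (factor $2d^2$). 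We bound the endpoint set by the product of the factors, always counting only vertices at the final step.

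Per-type bounds (using $d\ge1$):
\begin{center}
\begin{tabular}{lll}
$D$: $4dk$ & $DD$: $8d^2k$ & $ID$: $4d^2k$\\
$DDD$: $16d^3k$ & $DID$: $8d^3k$ & $IDD$: $8d^3k$\\
$DDDD$: $32d^4k$ & $DDID$: $16d^4k$ & $DIDD$: $16d^4k$\\
$IDDD$: $16d^4k$ & $IDID$: $8d^4k$ &
\end{tabular}
\end{center}

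Summing, the length-$4$ types contribute $(32+16+16+16+8)d^4k=88d^4k$. The shorter types then push the total slightly above $88d^4k$, so the crude bounds must be tightened. Two slacks are available. The true first-step bounds are strict and smaller: $(4d^2-2d)k$ rather than $4d^2k$, and $2d-1$ rather than $2d$. Also, in \Cref{lem:Zu} the vertex $u$ itself is counted in $T_u$ but is not new. With these exact factors, for example, $DDDD$ is at most $4dk(2d-1)^3$, and the short types are absorbed into the gap $88d^4k-(\text{exact sum of long types})$.

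The main obstacle is this constant bookkeeping. Writing the exact sum as a polynomial in $d$ and verifying it is below $88d^4k$ for all $d\ge1$ is routine but fiddly. The fallback is to check $d=1,2$ by hand and let the lower-order negative terms dominate for larger $d$. The structural part, that each step type has bounded expansion, is already supplied by \Cref{lem:degD,lem:Ac,lem:Sc,lem:Zu}. The one other point to verify is that paths whose first internal step lies in $I$ and then continue inside $D$ are correctly routed through $S_c$.
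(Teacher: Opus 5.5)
Your decomposition is the paper's: enumerate the eleven types and bound the endpoints by composing $|A_c|<4dk$, $|S_c|<(4d^2-2d)k$, $\Delta(G[D])\le 2d-1$ and $|Z_u|\le 2d^2$. The only incomplete part is the constant check you deferred. As written it is a real omission, because your rounded per-type table sums to $88d^4k$ plus positive lower-order terms and so does not prove the lemma.

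It closes with no case analysis on $d$ if you bound each geometric sum over path lengths as a whole rather than rounding each factor separately. Write $\delta=2d-1$.
\begin{itemize}
\item The types $D,DD,DDD,DDDD$ give at most $(1+\delta+\delta^2+\delta^3)|A_c|=(8d^3-8d^2+4d)|A_c|\le 8d^3|A_c|<32d^4k$.
\item The types $ID,IDD,IDDD$ give at most $(1+\delta+\delta^2)|S_c|=(4d^2-2d+1)|S_c|\le 4d^2|S_c|<16d^4k$.
\item The types $DID,DDID,IDID$ give at most $2d^2\bigl(|A_c|+\delta|A_c|+|S_c|\bigr)<2d^2(12d^2-2d)k<24d^4k$.
\item The type $DIDD$ gives at most $\delta\cdot 2d^2|A_c|<16d^4k$.
\end{itemize}
The four bounds total exactly $88d^4k$, which is how the paper organizes it. Your exact-factor sum works out to $(88d^4-60d^3+24d^2-2d)k$, and $-2d(30d^2-12d+1)<0$ for $d\ge 1$. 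So the shorter types are absorbed automatically, and the fallback of checking $d=1,2$ by hand is unnecessary.
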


\begin{proof}
By \Cref{lem:degD}, for every $X\subseteq D$ and every $r\ge 0$, $|N_D^{\le r}(X)|\le \bigl(1+(2d-1)+\cdots +(2d-1)^r\bigr)|X|$.

Since $I$ is independent, every such path alternates between vertices of $D$ and vertices of $I$ after it leaves $c$.
Thus the possible types are
\[
D,\ DD,\ ID,\ DDD,\ DID,\ IDD,\ DDDD,\ DDID,\ DIDD,\ IDDD,\ IDID.
\]
We group these types into four families.

If the path contains no vertex of $I$, then its type is one of $D,DD,DDD,DDDD$.
Its first vertex after $c$ lies in $A_c=N_G(c)\cap D$, and the rest of the path stays inside $D$.
Hence every endpoint of this form lies in $N_D^{\le 3}(A_c)$.

If the path starts with a vertex of $I$ and contains no further vertex of $I$, then its type is one of $ID,IDD,IDDD$.
The first vertex of $D$ reached after this initial step lies in $S_c$, and the remainder of the path stays inside $D$.
Hence every endpoint of this form lies in $N_D^{\le 2}(S_c)$.

Next consider the types $DID,DDID,IDID$.
Let $u$ be the vertex of $D$ immediately before the last occurrence of $I$ on the path.
Then $u\in A_c$ for type $DID$, $u\in N_D(A_c)$ for type $DDID$, and $u\in S_c$ for type $IDID$.
In each case the final $I-D$ step ends at a vertex of $Z_u$.
Hence every endpoint of these three types lies in $\bigcup_{u\in A_c\cup N_D(A_c)\cup S_c} Z_u$.

The only remaining type is $DIDD$.
Such a path has the form $c-u-x-v-w$ with $u\in A_c$, $x\in I$, and $v,w\in D$.
Since $v\in Z_u$, the endpoint $w$ lies in $N_D(Z_u)$.
Hence every endpoint of this type lies in $\bigcup_{u\in A_c} N_D(Z_u)$.

Consequently,
\[
U_c\subseteq N_D^{\le 3}(A_c)\cup N_D^{\le 2}(S_c)\cup \bigcup_{u\in A_c\cup N_D(A_c)\cup S_c} Z_u \cup \bigcup_{u\in A_c} N_D(Z_u).
\]

We now bound the size of each set on the right-hand side.
Since $1+(2d-1)+(2d-1)^2+(2d-1)^3\le 8d^3$ and $1+(2d-1)+(2d-1)^2\le 4d^2$, \Cref{lem:Ac,lem:Sc} give
\[
|N_D^{\le 3}(A_c)|\le 8d^3|A_c|<8d^3\cdot 4dk=32d^4k
\]
and
\[
|N_D^{\le 2}(S_c)|\le 4d^2|S_c|<4d^2\cdot (4d^2-2d)k<16d^4k.
\]

Also, \Cref{lem:Ac} implies that
\[
|N_D(A_c)|\le (2d-1)|A_c|<(2d-1)\cdot 4dk=(8d^2-4d)k.
\]
Hence, using \Cref{lem:Ac,lem:Sc,lem:Zu},
\[
|A_c|+|N_D(A_c)|+|S_c|<4dk+(8d^2-4d)k+(4d^2-2d)k=(12d^2-2d)k,
\]
so
\[
\left|\bigcup_{u\in A_c\cup N_D(A_c)\cup S_c} Z_u\right|
\le \sum_{u\in A_c\cup N_D(A_c)\cup S_c}|Z_u|
<(12d^2-2d)k\cdot 2d^2
<24d^4k.
\]
Finally, for every $u\in A_c$,
\[
|N_D(Z_u)|\le (2d-1)|Z_u|\le (2d-1)\cdot 2d^2<4d^3,
\]
whence
\[
\left|\bigcup_{u\in A_c} N_D(Z_u)\right|
\le \sum_{u\in A_c}|N_D(Z_u)|
<|A_c|\cdot 4d^3
<4dk\cdot 4d^3
=16d^4k.
\]
Combining the four bounds yields
\[
|U_c|<32d^4k+16d^4k+24d^4k+16d^4k=88d^4k.\qedhere
\]
\end{proof}
Since every vertex of $D$ lies in $U_c$ for some $c\in C$, it follows that $|D|=O(d^5k^2)$.
Together with \Cref{rule:irrelevant-element} and the bound $|C| = O(dk)$, \Cref{lem:Uc} yields \Cref{lem:first-stage-vc}.
}

\section{Kernelization from an Isolating Set}\label{sec:bounded-vc-separated}

The goal of this section is to reduce instances of \textsc{$H$-Packing} given together with an isolating set to instances of a bounded number of vertices, thereby completing the kernelization.

\subsection{General reduction}

Here, we give an algorithm that kernelizes the given instance $(G,k)$ with an isolating set $C$.
We work with the general $H$; we begin by defining the following \emph{$r$-replaceability}, which is a condition that depends solely on the graph $H$.

\begin{definition}
Let $H$ be a graph, let $r\ge 1$ be an integer, and let $x=(x_1,\dots,x_r)$ be a tuple of positive integers.
For a vertex cover $C_H$ of $H$, put $I_H:=V(H)\setminus C_H$.
The \emph{$x$-extension} of $H$ with respect to $C_H$ is a graph obtained from $H[C_H]$ as follows:
For every $i\in \{1,\dots,r\}$, let $\mathcal{T}_i\subseteq \tbinom{C_H}{i}$ be the family of size-$i$ subsets $T$ of $C_H$ such that there exists a vertex $u\in I_H$ with $T\subseteq N_H(u)$.
For each $T\in \mathcal{T}_i$, add $x_i$ fresh vertices adjacent exactly to the vertices of~$T$.
We say that $H$ is \emph{$r$-replaceable}, witnessed by $x$, if for every vertex cover $C_H$ of $H$, the $x$-extension contains a copy of $H$.
\end{definition}

The goal here is to prove the following.
\begin{theorem}\label{thm:replaceable_kernel}
Assume $H$ is $r$-replaceable witnessed by $x=(x_1,\dots, x_r)$.
There is a polynomial-time algorithm that, given an instance $(G,k)$ with a partition $V(G)=C\uplus I\uplus Z$, which witnesses that $C$ is an isolating set, outputs an equivalent instance $(G',k)$ such that
\[
    |V(G')|\leq |Z|+|C|+\sum_{i=1}^{r}x_i\binom{|C|}{i} = |Z|+O(|C|^r),
    \qquad |E(G')|\leq |E(G[Z\cup C])|+O(|C|^{r+1}).
\]
\end{theorem}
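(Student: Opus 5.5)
Every vertex of $I$ is adjacent only to vertices of $C$, so it is determined by its neighbourhood $N_G(v)\subseteq C$. The reduction keeps, for each small subset of $C$, a bounded number of witnesses in $I$, and deletes the rest of $I$.

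\textbf{The marking step.} For each $i\in\{1,\dots,r\}$ and each $T\in\binom{C}{i}$, mark up to $x_i$ vertices $v\in I$ with $T\subseteq N_G(v)$. If fewer than $x_i$ such vertices exist, mark all of them. We delete every unmarked vertex of $I$, keep $C\cup Z$ untouched, and keep all edges between marked vertices and $C$. The marked set has size at most $\sum_i x_i\binom{|C|}{i}$, which gives the vertex bound. For the edge bound, each marked vertex has at most $|C|$ neighbours, so the edges at marked vertices number $O(|C|^{r}\cdot |C|)=O(|C|^{r+1})$. Together with $E(G[Z\cup C])$ this gives the edge bound. Since $G'$ is an induced subgraph of $G$, every packing in $G'$ is a packing in $G$, so only the forward direction needs work.

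\textbf{The forward direction.} Take a packing $\mathcal Q$ of $k$ copies in $G$. Choose it to be a packing that uses the minimum number of unmarked vertices, and suppose some copy $Q$ uses an unmarked vertex $u\in I$. Let $C_Q:=V(Q)\setminus I$. Because $I$ is independent, the image of $C_Q$ corresponds to a vertex cover $C_H$ of $H$.

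Replaceability is applied to a possibly different vertex cover. Vertices of $Q$ in $Z$ have all their $Q$-neighbours outside $I$. So $C_H$ is the preimage of $V(Q)\cap(C\cup Z)$, and $I_H$ is the set of preimages of $Q$'s $I$-vertices, whose $Q$-neighbourhoods lie in $C$. Apply $r$-replaceability to this $C_H$. The $x$-extension contains a copy of $H$. Map its $H[C_H]$-part to the corresponding vertices of $Q$.

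Each fresh vertex attached to a set $T\in\mathcal T_i$ must be realized by a distinct vertex of $I$ whose neighbourhood contains the image of $T$. Since $T\subseteq N_H(u')$ for some $u'\in I_H$, the image of $T$ is a subset of $C$ of size $i\le r$ with at least one common $I$-neighbour in $G$, namely the image of $u'$. Therefore its marking list is nonempty. Either it contains $x_i$ marked vertices, or it contains all such vertices, and in that case every $I$-vertex of $Q$ with that neighbourhood is itself marked.

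\textbf{The main obstacle} is disjointness. The marked witnesses we choose may already be used by other copies in $\mathcal Q$, or shared across different sets $T$. My first attempt is an exchange argument that does not require every chosen witness to be free: a global re-matching. Form a bipartite graph between the demands (each fresh vertex of each extension, over all copies) and the vertices of $I$, marked or used by the packing. Then argue via Hall's condition that a system of distinct representatives exists.

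If this proves too delicate, the fallback is an iterative version. Repeatedly replace a single unmarked vertex $u$ in $Q$ by an unused marked vertex with a superset neighbourhood on $N_Q(u)$, where $N_Q(u)$ has size at most $r$ after the extension reshapes $Q$. Such a vertex exists unless all marked vertices for that set are occupied. In that case, swap roles with the occupying copy, so that the number of unmarked vertices used strictly decreases. This contradicts the minimal choice of $\mathcal Q$.

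I expect to spend most of the effort making this counting precise, namely checking that $x_i$ witnesses per $i$-set suffice against conflicts both within the same copy and from other copies.
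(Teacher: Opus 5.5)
\textbf{The gap is in the marking rule itself, not in the exchange argument.} You keep only $x_i$ witnesses per set $T$, a constant, and a single vertex of $I$ may be marked for several different sets. So the witnesses of $T$ can all be occupied by other copies of the packing, and no substitute survives. Here is a concrete failure. Take $H=K_2$, which is $1$-replaceable witnessed by $x=(1)$, and let $C=\{c_1,c_2\}$, $I=\{v_1,v_2\}$, $Z=\emptyset$, with edges $c_1v_1,c_2v_1,c_2v_2$ and $k=2$. Your rule may mark $v_1$ both for $\{c_1\}$ and for $\{c_2\}$ and then delete $v_2$. Now $G$ has two disjoint edges, but $G'$ is the path $c_1v_1c_2$ and has only one. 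The reduced instance is a genuine no-instance, so neither the Hall-type re-matching nor the swap argument can rescue it. A better choice (marking $v_2$ for $\{c_2\}$) would work here, but the choices must then be coordinated across all sets $T$, and that is exactly what a matching does.

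\textbf{The missing idea: witnesses must be private to labels.} The paper forms the bipartite graph between $L=\biguplus_{i}\binom{C}{i}\times[x_i]$ and $I$, where $\lambda v$ is an edge iff $D(\lambda)\subseteq N_G(v)$. It applies the expansion lemma with $q=1$. It keeps $C\cup Z\cup\widehat I$ together with the matched mates $R$, and $|\widehat I|+|R|\le|L|$ gives the same vertex bound as yours.
\begin{itemize}
\item The crucial property is the absence of edges between $\widehat L$ and $I\setminus\widehat I$. Suppose a deleted vertex $u\in I\setminus\widehat I$ satisfies $T\subseteq N_G(u)$. Then none of the labels $(T,j)$ lies in $\widehat L$, so all $x_i$ of them are matched. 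This yields a set $R_T$ of exactly $x_i$ witnesses, and the sets $R_T$ are pairwise disjoint for distinct $T$.
\item For each copy $P$, the paper uses the vertex cover $V(P)\cap(C\cup Z\cup\widehat I)$, so only deleted vertices are replaced. It realizes the $x$-extension inside $\Ext(P)$, which consists of the kept vertices of $P$ together with the sets $R_T$ for $T\subseteq V(P)\cap C$.
\item Distinct copies have disjoint $C$-parts, so the sets $\Ext(P)$ are automatically pairwise disjoint. No minimality or exchange argument is needed.
\end{itemize}

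\textbf{Your fallback also does not fit the definition.} Replacing one vertex at a time is not what replaceability provides. It only guarantees some copy of $H$ somewhere in the whole $x$-extension, and that copy may be re-centered. This happens in Case~2 of the $S_{d_1,d_2}$ lemma, where the center lies in $I$ and has degree $d_1+d_2>r$. So a copy has to be replaced as a whole, not vertex by vertex.
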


Let $(G,k)$ be an instance of \textsc{$H$-Packing} with a partition $V(G)=C\uplus I\uplus Z$, which witnesses that $C$ is an isolating set.
Let
\[
L:=\biguplus_{i=1}^r \bigl(\tbinom{C}{i}\times [x_i]\bigr),
\]
where $[x]=\{1,\dots,x\}$.
For a vertex $\lambda=(D,j)\in \tbinom{C}{i}\times [x_i]$, write $D(\lambda):=D$.
Define the bipartite graph $\mathcal B=(L\uplus I,F)$ by letting $\lambda v\in F$ if and only if $D(\lambda)\subseteq N_G(v)$.

Apply \Cref{lem:expansion} to $\mathcal B$ with $q=1$.
We obtain subsets $\widehat L\subseteq L$, $\widehat I\subseteq I$, and an edge set $M\subseteq F$ such that:
\begin{enumerate}
    \item $|\widehat I|\le |\widehat L|$,
    \item every vertex of $L\setminus \widehat L$ is incident with exactly one edge of $M$,
    \item every vertex of $I\setminus \widehat I$ is incident with at most one edge of $M$,
    \item there is no edge of $\mathcal B$ between $\widehat L$ and $I\setminus \widehat I$.
\end{enumerate}
For every $\lambda\in L\setminus \widehat L$, let $m_\lambda\in I\setminus \widehat I$ be the unique vertex satisfying $\lambda m_\lambda\in M$.
For every $D\in \bigcup_{i=1}^r \tbinom{C}{i}$, define
\[
R_D:=\{m_\lambda : \lambda\in L\setminus \widehat L\text{ and }D(\lambda)=D\}.
\]
Thus $R_D$ is the set of such vertices $m_\lambda$ whose left endpoint $\lambda$ satisfies $D(\lambda)=D$.
Let $R:=\bigcup_{D\in \bigcup_{i=1}^r \tbinom{C}{i}} R_D$.
We apply the following rule.

\begin{rrule}\label{rule:replaceable-isolating}
Replace $(G,k)$ by $(G',k)$, where $G':=G[C\cup Z\cup \widehat I\cup R]$.
\end{rrule}

\begin{lemma}\label{lem:replaceable-isolating-size}
The graph $G'$ from \Cref{rule:replaceable-isolating} satisfies
\[
|V(G')|\le |Z|+|C|+\sum_{i=1}^r x_i\binom{|C|}{i}, 
\qquad |E(G')|\leq |E(G[Z\cup C])|+|C|\sum_{i=1}^r x_i\binom{|C|}{i}.
\]
\end{lemma}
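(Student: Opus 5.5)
The plan is to bound $|\widehat I\cup R|$ by $|L|$ and then count edges by their type relative to the partition $V(G')=C\uplus Z\uplus(\widehat I\cup R)$.

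\emph{Vertex bound.} We have $V(G')=C\cup Z\cup \widehat I\cup R$, so it suffices to show $|\widehat I|+|R|\le |L|$. By the first property of \Cref{lem:expansion} with $q=1$, $|\widehat I|\le|\widehat L|$. The set $R$ is the image of the map $\lambda\mapsto m_\lambda$ on $L\setminus\widehat L$, so $|R|\le |L\setminus\widehat L|$. Adding these gives $|\widehat I|+|R|\le|\widehat L|+|L\setminus\widehat L|=|L|$. Finally,
\[
|L|=\sum_{i=1}^r x_i\binom{|C|}{i},
\]
since $L$ is the disjoint union of the sets $\binom{C}{i}\times[x_i]$. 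This is the claimed vertex bound.

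\emph{Edge bound.} We split $E(G')$ into edges with both endpoints in $Z\cup C$ and edges with at least one endpoint in $\widehat I\cup R\subseteq I$. The first class has size at most $|E(G[Z\cup C])|$, because $G'$ is an induced subgraph of $G$. For the second class, we use the isolating-set partition: $I$ is independent and $E(Z,I)=\emptyset$. Hence every edge incident to a vertex $v\in\widehat I\cup R$ has its other endpoint in $C$, so each such $v$ has at most $|C|$ incident edges in $G'$. Therefore the second class contains at most
\[
|C|\cdot|\widehat I\cup R|\le |C|\sum_{i=1}^r x_i\binom{|C|}{i}
\]
edges, and summing the two classes gives the stated edge bound.

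The only step needing care is the inequality $|\widehat I|+|R|\le|L|$. It holds because $R\subseteq I\setminus\widehat I$ is indexed injectively by $L\setminus \widehat L$, while $\widehat I$ is charged to $\widehat L$. Everything else follows directly from the definitions.
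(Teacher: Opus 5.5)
Your proposal is correct and follows the paper's argument. For the vertex bound you charge $\widehat I$ to $\widehat L$ and $R$ to $L\setminus\widehat L$, as the paper does (you use $|R|\le|L\setminus\widehat L|$ where the paper states equality, and the inequality suffices). Your edge count, in which every edge leaving $Z\cup C$ joins a vertex of $I$ to a vertex of $C$, spells out the step the paper only calls ``straightforward.''
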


\begin{proof}
By the first property, $|\widehat I|\le |\widehat L|$.
Also, $|R|=|L\setminus \widehat L|$.
Therefore $|\widehat I|+|R|\le |\widehat L|+|L\setminus \widehat L|=|L|$.
Since $|L|=\sum_{i=1}^r x_i\binom{|C|}{i}$, we obtain
\[
|V(G')|=|Z|+|C|+|\widehat I|+|R|\le |Z|+|C|+\sum_{i=1}^r x_i\binom{|C|}{i}.
\]
The bound of $|E(G')|$ is straightforward.
\end{proof}

Now we prove that the reduction is safe.
Since $G'\subseteq G$, it is clear that if $(G',k)$ is a yes-instance, then $(G,k)$ is also a yes-instance.
Thus, it remains to prove the converse.
The idea is to show that each copy $P$ in a packing of $G$ can be replaced by a copy contained in a set $\Ext(P)\subseteq V(G')$ (\Cref{lem:Ext-contains-copy}), defined later, and that these sets are pairwise disjoint for different copies (\Cref{lem:Ext-disjoint}).
Fix a packing $\mathcal P$ of $k$ copies of $H$ in $G$.
For each $P\in \mathcal{P}$, define
\[
    \mathsf{Ext}(P):=(V(P)\cap (C\cup Z\cup \widehat{I}))\cup \bigcup_{D\in \bigcup_{i=1}^{r}\binom{V(P)\cap C}{i}}R_D.
\]

\begin{lemma}\label{lem:Ext-disjoint}
The sets $\Ext(P)$, for $P\in \mathcal P$, are pairwise disjoint.
\end{lemma}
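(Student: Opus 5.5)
The plan is to split $\Ext(P)$ into its two constituent parts and check disjointness pairwise. Let $P\neq P'$ be two copies in $\mathcal P$. Write
\[
\Ext(P)=A_P\cup B_P,\qquad A_P:=V(P)\cap(C\cup Z\cup\widehat I),\qquad B_P:=\bigcup_{D}R_D,
\]
where in $B_P$ the set $D$ ranges over the subsets of $V(P)\cap C$ of size between $1$ and $r$. We must show that $A_P\cup B_P$ and $A_{P'}\cup B_{P'}$ are disjoint. This reduces to three kinds of intersections, handled as follows.

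\begin{enumerate}
\item $A_P\cap A_{P'}=\emptyset$. This holds because $V(P)\cap V(P')=\emptyset$, as $\mathcal P$ is a packing.
\item $B_P\cap B_{P'}=\emptyset$. Every vertex $m\in R$ equals $m_\lambda$ for exactly one $\lambda\in L\setminus\widehat L$, since each vertex of $I\setminus\widehat I$ is incident with at most one edge of $M$ (property 3). Hence $m$ lies in exactly one set $R_D$, namely $D=D(\lambda)$. If $m\in B_P\cap B_{P'}$, then this $D$ would be a nonempty subset of both $V(P)\cap C$ and $V(P')\cap C$. That contradicts the disjointness of $V(P)$ and $V(P')$.
\item $A_P\cap B_{P'}=\emptyset$; this is the only subtle case. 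Take $m\in B_{P'}$. Then $m=m_\lambda\in I\setminus\widehat I$, so $m\notin C\cup Z\cup\widehat I$. Since $A_P\subseteq C\cup Z\cup\widehat I$, we get $m\notin A_P$.
\end{enumerate}

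Combining the three cases shows that $\Ext(P)\cap\Ext(P')=\emptyset$.

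The main point that needs care is the one in case 2: we need the fact that $m_\lambda$ is determined by $\lambda$ and vice versa. That is, each $m\in R$ comes from a unique $\lambda$, so the sets $R_D$ are pairwise disjoint for distinct $D$. This follows directly from properties 2 and 3 of \Cref{lem:expansion}. With that in hand, the rest of the argument is bookkeeping using the vertex-disjointness of the packing.
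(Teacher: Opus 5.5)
Your proof is correct and follows essentially the same route as the paper. The parts $V(P)\cap(C\cup Z\cup\widehat I)$ are disjoint because $\mathcal P$ is a packing. The $R$-parts are disjoint because each vertex of $R$ is the mate of a unique $\lambda$, so the sets $R_D$ are pairwise disjoint, and a shared $D$ would be a common nonempty subset of $V(P)\cap C$ and $V(P')\cap C$. Your explicit case~3 (that $R\subseteq I\setminus\widehat I$ avoids $C\cup Z\cup\widehat I$) spells out a step the paper leaves implicit when it says it only remains to consider vertices of $R$.
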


\begin{proof}
Let $P,Q\in \mathcal P$ be distinct.
Since the copies in $\mathcal P$ are vertex-disjoint, we immediately have that $V(P)\cap (C\cup Z\cup \widehat I)$ and $V(Q)\cap (C\cup Z\cup \widehat I)$ are
disjoint.
Thus, it remains to show that no vertex of $R$ belongs to both $\Ext(P)$ and $\Ext(Q)$.

First, observe that the sets $R_D$ are pairwise disjoint.
Indeed, if some vertex $r\in R_D\cap R_{D'}$, then $r=m_\lambda=m_{\lambda'}$ for two vertices $\lambda,\lambda'\in L\setminus \widehat L$ with $D(\lambda)=D$ and $D(\lambda')=D'$.
But then the vertex $r\in I\setminus \widehat I$ is incident with two edges of $M$, namely $\lambda r$ and $\lambda' r$, contradicting \textup{(E3)}.
Hence $D=D'$ and $R_D\cap R_{D'}=\emptyset$ for $D\neq D'$.

Now suppose for contradiction that some vertex $r$ belongs to both $\Ext(P)$ and $\Ext(Q)$.
Then $r\in R_D$ for some nonempty set $D\subseteq V(P)\cap C$, and also $r\in R_{D'}$ for some nonempty set $D'\subseteq V(Q)\cap C$.
By pairwise disjointness of the mate sets, we have $D=D'$.
Therefore, $D\subseteq V(P)\cap C\cap V(Q)\cap C$,
which is impossible because $P$ and $Q$ are vertex-disjoint and $D\neq \emptyset$.
Thus $\Ext(P)$ and $\Ext(Q)$ are disjoint.
\end{proof}

\begin{lemma}\label{lem:Ext-contains-copy}
For every $P\in \mathcal P$, the set $\Ext(P)$ contains a copy of $H$.
\end{lemma}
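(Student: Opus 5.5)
The plan is to use the $r$-replaceability of $H$ applied to the vertex cover that $P$ induces on its own copy. Fix $P\in\mathcal P$ together with an isomorphism $\phi\colon H\to P$.

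\emph{Step 1: choose a vertex cover.} Set $C_H:=\phi^{-1}(V(P)\cap(C\cup Z\cup\widehat I))$ and $I_H:=V(H)\setminus C_H$, so that $\phi(I_H)=V(P)\cap(I\setminus\widehat I)$. Since $I$ is independent, $C_H$ is a vertex cover of $H$. Because $E(Z,I)=\emptyset$, every vertex $u\in I_H$ has $N_H(u)\subseteq\phi^{-1}(C)$. By replaceability, the $x$-extension $H^x$ of $H$ with respect to $C_H$ contains a copy of $H$. It therefore suffices to embed $H^x$ into $G[\Ext(P)]$.

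\emph{Step 2: build the embedding.} Map $H[C_H]$ to $V(P)\cap(C\cup Z\cup\widehat I)\subseteq\Ext(P)$ via $\phi$; the edges are preserved because $P$ is a subgraph of $G$. Now fix $T\in\mathcal T_i$ with $i\le r$, witnessed by some $u\in I_H$ with $T\subseteq N_H(u)$. Put $D:=\phi(T)\subseteq C\cap V(P)$, so $|D|=i$. The fresh vertices for $T$ are to be mapped injectively into $R_D$, and $R_D\subseteq\Ext(P)$ because $D\in\binom{V(P)\cap C}{i}$. Each $m_\lambda\in R_D$ satisfies $D\subseteq N_G(m_\lambda)$ by the definition of $\mathcal B$, so the required adjacencies hold. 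Different $T$ give different $D$, the sets $R_D$ are pairwise disjoint (as shown in the proof of \Cref{lem:Ext-disjoint}), and $R\subseteq I\setminus\widehat I$ is disjoint from $V(P)\cap(C\cup Z\cup\widehat I)$. Hence the map is injective. Only edges of $H^x$ need to be preserved, since we are looking for a (not necessarily induced) subgraph, so the embedding is valid.

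\emph{Step 3 (main obstacle): show that $|R_D|\ge x_i$ for every relevant $D$.} This is where the expansion lemma is used. Consider the $x_i$ vertices $\lambda=(D,j)$ with $j\in[x_i]$. The vertex $v:=\phi(u)\in I\setminus\widehat I$ is adjacent in $G$ to all of $D$, so $\lambda v\in F$ for each such $\lambda$. By property (4), $\lambda\notin\widehat L$, since no vertex of $\widehat L$ has an edge to $I\setminus\widehat I$. Property (2) then gives each such $\lambda$ a mate $m_\lambda$. By property (3), distinct $\lambda$ have distinct mates. So $|R_D|=x_i$.

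The point to check carefully is that $v$ really lies in $I\setminus\widehat I$, which holds by the definition of $I_H$. This is exactly why the vertices of $P$ inside $\widehat I$ are placed into $C_H$ and kept as they are, rather than being replaced. With Steps 1 to 3 in place, $H^x$ embeds into $G[\Ext(P)]$, and hence $\Ext(P)$ contains a copy of $H$.
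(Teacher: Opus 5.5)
Your proposal is correct and follows essentially the same route as the paper. Both take the vertex cover $V(P)\cap(C\cup Z\cup\widehat I)$ of the copy. Both use property (4) of the expansion lemma to show that every label $(D,j)$ with $D=\phi(T)$ is matched, and property (3) to get distinct mates, so that the $x$-extension embeds into $G[\Ext(P)]$. Your explicit injectivity check is a point the paper leaves implicit: the sets $R_D$ are pairwise disjoint and $R\subseteq I\setminus\widehat I$ avoids the kept vertices of $P$.
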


\begin{proof}
Fix $P\in \mathcal P$.
Put $C_P:=V(P)\cap (C\cup Z\cup \widehat I)$ and $I_P:=V(P)\setminus C_P=V(P)\cap (I\setminus \widehat I)$.
Then $C_P$ is a vertex cover of $P$ because every vertex of $I_P$ lies in $I$ and $I$ is independent.

Consider the $x$-extension of the copy $P$ with respect to the vertex cover $C_P$.
For every $i\in\{1,\dots,r\}$, let $\mathcal T_i\subseteq \binom{C_P}{i}$ be the family of size-$i$ subsets $T$ of $C_P$ such that there exists a vertex $u\in I_P$ with $T\subseteq N_P(u)$.
We claim that this $x$-extension is a subgraph of $G[\Ext(P)]$.

First, the base graph $P[C_P]$ is a subgraph of $G[\Ext(P)]$, because $C_P\subseteq \Ext(P)$.
It remains to realize the fresh vertices of the extension.
Fix $i\in\{1,\dots,r\}$ and a set $T\in \mathcal T_i$.
By definition, there exists a vertex $u\in I_P\subseteq I\setminus \widehat I$ with $T\subseteq N_P(u)$.
Since $u\in I$, every neighbor of $u$ in $P$ lies in $C$.
Consequently, $T\subseteq V(P)\cap C$.
For each $j\in [x_i]$, the left vertex $\lambda=(T,j)$ belongs to $L$ and is adjacent to $u$ in $\mathcal B$, because $T\subseteq N_G(u)$.
Since $u\in I\setminus \widehat I$, Property~\textup{(E4)} implies that none of these vertices $(T,j)$ lies in $\widehat L$.
Therefore, all $x_i$ vertices $(T,j)$ have mates, and the corresponding mates are pairwise distinct by Property~\textup{(E3)}.
Hence $|R_T|=x_i$.
Moreover, each vertex $r\in R_T$ is adjacent in $G$ to all vertices of $T$, because $r=m_\lambda$ for some $\lambda=(T,j)$ and $\lambda r\in F$ implies $T\subseteq N_G(r)$.
Since $T\subseteq V(P)\cap C$, we have $R_T\subseteq \Ext(P)$.
Thus, the $x_i$ fresh vertices required for $T$ in the $x$-extension are present inside $\Ext(P)$.

We have proved that the $x$-extension of $P$ with respect to $C_P$ is a subgraph of $G[\Ext(P)]$.
Because $P$ is a copy of $H$ and $H$ is $r$-replaceable, witnessed by $x$, this $x$-extension contains a copy of $H$.
Therefore, $\Ext(P)$ contains a copy of $H$.
\end{proof}

\begin{proof}[Proof of \Cref{thm:replaceable_kernel}]
By Lemma~\ref{lem:Ext-contains-copy}, for every $P\in \mathcal P$ we may choose a copy $P^{\star}$ of $H$ inside $G[\Ext(P)]$.
By Lemma~\ref{lem:Ext-disjoint}, the sets $\Ext(P)$ are pairwise disjoint, so the chosen copies $P^{\star}$ are pairwise vertex-disjoint.
Since every $\Ext(P)$ is contained in $V(G')$, this yields a packing of $k$ copies of $H$ in $G'$.
Thus $(G,k)$ is a yes-instance only if $(G',k)$ is a yes-instance.
Together with the trivial reverse implication, this proves that $(G,k)$ and $(G',k)$ are equivalent.
The construction is polynomial-time, and the size bound follows from Lemma~\ref{lem:replaceable-isolating-size}.
\end{proof}

%
%

\subsection{Replaceability Analysis}

In this subsection, we state that $P_5$ and $S_{d_1,d_2}$ with $d_1\ge 1$ are $2$-replaceable, whereas $S_{0,d}$ is not $\ell$-replaceable for any $\ell<d$.
We begin with $P_5$.

\markproofomitted
\begin{lemma}\label{lem:P5_replaceable}
The graph $P_5$ is $2$-replaceable, witnessed by $(1,1)$.
\end{lemma}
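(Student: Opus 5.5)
The plan is to avoid enumerating vertex covers and instead use a general observation: the $x$-extension reproduces $H$ whenever the vertices of $I_H$ have pairwise distinct, nonempty neighbourhoods of size at most $r$. For $P_5$ with $r=2$ and $x=(1,1)$, both conditions hold for every vertex cover.

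Fix an arbitrary vertex cover $C_H$ of $H=P_5=(v_1,\dots,v_5)$ and put $I_H:=V(H)\setminus C_H$. Since $C_H$ is a vertex cover, $I_H$ is independent. Hence every edge of $H$ either lies in $H[C_H]$ or joins some $u\in I_H$ to a vertex of $N_H(u)\subseteq C_H$. For each $u\in I_H$, set $T_u:=N_H(u)$.

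Three checks are needed.
\begin{enumerate}
\item $1\le |T_u|\le 2$. This holds because $P_5$ is connected and has maximum degree $2$. So $T_u\in\mathcal T_{|T_u|}$, witnessed by $u$ itself.
\item The extension contains a fresh vertex $f_u$ adjacent exactly to the vertices of $T_u$. This holds because $x_1=x_2=1$.
\item The map $u\mapsto T_u$ is injective on $I_H$. This follows from a direct inspection of neighbourhoods in $P_5$: $N(v_1)=\{v_2\}$, $N(v_2)=\{v_1,v_3\}$, $N(v_3)=\{v_2,v_4\}$, $N(v_4)=\{v_3,v_5\}$, $N(v_5)=\{v_4\}$. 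These five sets are pairwise distinct, so no two vertices of $P_5$ share a neighbourhood at all.
\end{enumerate}

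Consequently the vertices $f_u$, for $u\in I_H$, are pairwise distinct fresh vertices. The map that is the identity on $C_H$ and sends $u\mapsto f_u$ is then an injective homomorphism from $H$ into the extension. It preserves edges inside $C_H$ because the extension contains $H[C_H]$. It preserves each edge $uw$ with $w\in T_u$ because $f_u$ is adjacent to all of $T_u$. So the $(1,1)$-extension contains a copy of $P_5$. Since $C_H$ was arbitrary, $P_5$ is $2$-replaceable, witnessed by $(1,1)$.

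The only real obstacle is a subtlety with duplicates. Distinct vertices of $I_H$ could request the same set $T$, yet the extension adds only $x_{|T|}$ fresh vertices per set. Condition 3 is precisely what rules this out for $P_5$. Condition 1 is what makes $r=2$ suffice, since $\Delta(P_5)=2$. I would state this small sufficient criterion explicitly, namely \emph{distinct neighbourhoods of size at most $r$, each $x_i\ge 1$}, since it may be reused for other patterns.
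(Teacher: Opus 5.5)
Your proposal is correct and follows essentially the same route as the paper's proof: for each $u\in I_H$, the trace $N_H(u)$ is a one- or two-element subset of $C_H$, and these traces are pairwise distinct in $P_5$. Hence each $u$ can be replaced by its own fresh vertex, and the copy of $P_5$ is reproduced. The only difference is cosmetic: you check distinctness of the traces by listing the five neighbourhoods directly, whereas the paper argues structurally (two leaves cannot share a neighbour, and a repeated two-element neighbourhood would create a cycle).
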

\longonly{
\begin{proof}
Let $H$ be a copy of $P_5$, let $C_H$ be an arbitrary vertex cover of $H$, and let $I_H:=V(H)\setminus C_H$.
Consider the $(1,1)$-extension $H^{\mathrm{ext}}$ of $H$ with respect to $C_H$.
We show that $H^{\mathrm{ext}}$ contains a copy of $H$.

If $I_H=\emptyset$, then $H[C_H]=H$, so there is nothing to prove.
Assume therefore that $I_H\neq\emptyset$.
Since $C_H$ is a vertex cover, the set $I_H$ is independent.
Hence, every vertex $u\in I_H$ has all its neighbors in $C_H$, and therefore $N_H(u)$ is either a one-element subset or a two-element subset of $C_H$.
By the definition of the $(1,1)$-extension, for every $u\in I_H$ there exists a fresh vertex $u'$ in $H^{\mathrm{ext}}$ adjacent exactly to the vertices of $N_H(u)$.

We claim that the sets $N_H(u)$ are pairwise distinct for distinct vertices $u\in I_H$.
Indeed, if two distinct vertices of the path had the same one-element neighborhood, then the path would have two distinct leaves adjacent to the same vertex, which is impossible in $P_5$.
If two distinct vertices had the same two-element neighborhood, then the path would contain a cycle, again impossible.
Thus, the vertices $u'$ chosen for distinct $u\in I_H$ are pairwise distinct.

Now keep every vertex of $C_H$, and for each vertex $u\in I_H$ replace $u$ by the corresponding fresh vertex $u'$.
Since $u'$ is adjacent exactly to $N_H(u)$, every edge of $H$ incident with $u$ is reproduced after the replacement.
Therefore, the graph induced by $C_H$ together with all chosen vertices $u'$ is isomorphic to $H=P_5$.
This proves that $P_5$ is $2$-replaceable, witnessed by $(1,1)$.
\end{proof}
}

We extend this to all subdivided stars with at least one branch of length~$1$.
The presence of such a branch is what makes this case $2$-replaceable.

\begin{lemma}\label{lem:Sd1d2_replaceable}
Let $d_1\ge 1$ and $d_2\ge 0$.
Then the subdivided star $S_{d_1,d_2}$ is $2$-replaceable, witnessed by $(d_1,1)$.
\end{lemma}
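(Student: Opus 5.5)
We prove the lemma by a direct case analysis on the vertex cover $C_H$ of $H=S_{d_1,d_2}$. Label $H$ as follows: the center is $a$, the leaves of the length-$1$ branches are $\ell_1,\dots,\ell_{d_1}$, and the length-$2$ branches are $a-b_j-c_j$ for $j\in\{1,\dots,d_2\}$. Fix a vertex cover $C_H$, put $I_H:=V(H)\setminus C_H$, and let $H^{\mathrm{ext}}$ denote the $(d_1,1)$-extension. Thus every singleton $T\in\mathcal T_1$ receives $d_1$ fresh vertices adjacent exactly to $T$, and every pair $T\in\mathcal T_2$ receives one fresh vertex. The case split is on whether $a\in C_H$.

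\emph{Case $a\in C_H$.} Here we mimic the proof of \Cref{lem:P5_replaceable}: keep $C_H$ and replace each $u\in I_H$ by a fresh vertex adjacent exactly to $N_H(u)$. Since $I_H$ is independent, the possible neighborhoods are as follows.
\begin{itemize}
\item For each $\ell_i\in I_H$, $N_H(\ell_i)=\{a\}$. There are at most $d_1$ such leaves, and $\{a\}$ has exactly $d_1$ fresh vertices.
\item For each $c_j\in I_H$, $N_H(c_j)=\{b_j\}$. This singleton has $d_1\ge 1$ fresh vertices, and only $c_j$ needs one.
\item For each $b_j\in I_H$, $N_H(b_j)=\{a,c_j\}$. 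This pair has one fresh vertex, and only $b_j$ needs it.
\end{itemize}
These neighborhoods are distinct sets, except that the $\ell_i$ all share $\{a\}$, which is covered by the $d_1$ copies. Hence the chosen fresh vertices are pairwise distinct. Replacing each $u\in I_H$ by its fresh vertex reproduces every edge of $H$, so $H^{\mathrm{ext}}$ contains $H$.

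\emph{Case $a\notin C_H$.} Then $a\in I_H$, so every neighbor of $a$ lies in $C_H$, i.e.\ $N_H(a)=\{\ell_1,\dots,\ell_{d_1},b_1,\dots,b_{d_2}\}\subseteq C_H$. The center itself cannot be simulated, because no fresh vertex sees all of $N_H(a)$. Instead we build a new star centered at $\ell_1$, which exists since $d_1\ge 1$; this is exactly where the assumption is used. Since $a$ witnesses membership, every singleton $\{y\}$ with $y\in N_H(a)$ lies in $\mathcal T_1$, and every pair in $\binom{N_H(a)}{2}$ lies in $\mathcal T_2$. We use these fresh vertices as follows.
\begin{itemize}
\item The $d_1$ fresh vertices attached to $\{\ell_1\}$ serve as the $d_1$ length-$1$ branches at $\ell_1$.
\item For each $j\le d_2$, let $p_j$ be the fresh vertex attached to $\{\ell_1,b_j\}$. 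Then $\ell_1-p_j-b_j$ is a length-$2$ branch.
\end{itemize}
All these vertices are pairwise distinct: the fresh vertices are distinct by construction, and $\ell_1,b_1,\dots,b_{d_2}$ are distinct original vertices. Thus they form a copy of $S_{d_1,d_2}$ in $H^{\mathrm{ext}}$. This construction also covers $d_2=0$.

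The only step requiring an idea is the second case, where the center is outside the cover and must be relocated to $\ell_1$. After that, the remaining work is checking that the multiplicities $x_1=d_1$ and $x_2=1$ suffice and that the chosen fresh vertices are distinct, which the explicit lists above settle.
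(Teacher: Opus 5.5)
Your proof is correct and follows the paper's argument. When the center lies in $C_H$, you replace each $u\in I_H$ by a fresh vertex with trace $N_H(u)$, noting that only $\{a\}$ repeats and does so at most $d_1$ times. When the center lies outside $C_H$, you re-center the star at a neighbor of the center; this is the paper's construction with its arbitrary $a\in N_H(c)$ specialized to $\ell_1$ and its $a'_j$ specialized to $b_j$. The only blemish is your aside that the center ``cannot be simulated'': for $d_1+d_2\le 2$ the set $N_H(a)$ does receive a fresh vertex. That remark is never used, so nothing in the proof depends on it.
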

\begin{proof}
Let $H=S_{d_1,d_2}$, let $c$ be the center of $H$, let $C_H$ be an arbitrary vertex cover of $H$, and let $I_H:=V(H)\setminus C_H$.
Consider the $(d_1,1)$-extension $H^{\mathrm{ext}}$ of $H$ with respect to $C_H$.
We prove that $H^{\mathrm{ext}}$ contains a copy of $H$.

\proofsubparagraph{Case 1: $c\in C_H$.}
In this case, every vertex of $I_H$ is a non-center.
For each $u\in I_H$, put $D_u:=N_H(u)$.
Since $I_H$ is independent, every vertex of $I_H$ has all neighbors in $C_H$.
As $u$ is a non-center of a subdivided star, the set $D_u$ is either a one-element subset of $C_H$ or a two-element subset of $C_H$.

We now examine which traces $D_u$ can repeat.
A singleton trace can repeat only if it is $\{c\}$, because the only vertices of $H$ with neighborhood $\{c\}$ are the leaves on branches of length $1$.
This happens at most $d_1$ times.
Every other singleton trace is of the form $\{b\}$, where $b$ is the internal vertex of a branch of length $2$, and such a trace occurs exactly once.
Similarly, every two-element trace is of the form $\{c,\ell\}$, where $\ell$ is the leaf of a branch of length $2$, and this trace also occurs exactly once.
Thus, the only repeated trace is $\{c\}$, and it occurs at most $d_1$ times.

By the definition of the $(d_1,1)$-extension, for every singleton trace $D$ there are $d_1$ fresh vertices adjacent exactly to $D$, and for every two-element trace $D$ there is one fresh vertex adjacent exactly to $D$.
Therefore, we can choose, for every vertex $u\in I_H$, a distinct fresh vertex $u'$ adjacent exactly to $D_u=N_H(u)$.
Keep all vertices of $C_H$, and replace every $u\in I_H$ by $u'$.
Since $u'$ is adjacent exactly to $N_H(u)$, every edge of $H$ incident with $u$ is reproduced.
Hence, the resulting graph is isomorphic to $H$.

\proofsubparagraph{Case 2: $c\in I_H$.}
Then all neighbors of $c$ lie in $C_H$ because $C_H$ is a vertex cover.
Let $A:=N_H(c)$.
Since $H=S_{d_1,d_2}$, we have $|A|=d_1+d_2$.
Fix any vertex $a\in A$.
Because $\{a\}\subseteq N_H(c)$, the set $\{a\}$ belongs to $\mathcal T_1$.
Hence the $(d_1,1)$-extension contains $d_1$ fresh vertices adjacent exactly to $a$.
Denote them by $x_1,\dots,x_{d_1}$.
Now let $a'\in A\setminus\{a\}$.
Since $\{a,a'\}\subseteq N_H(c)$, the pair $\{a,a'\}$ belongs to $\mathcal T_2$.
Therefore, the extension contains a fresh vertex $y_{a'}$ adjacent exactly to $a$ and $a'$.
Choose any $d_2$ vertices
$a'_1,\dots,a'_{d_2}\in A\setminus\{a\}$
(which is possible because $|A\setminus\{a\}|=d_1+d_2-1\ge d_2$, as $d_1\ge 1$).
Then the subgraph of $H^{\mathrm{ext}}$ on the vertex set
$\{a\}\cup \{x_1,\dots,x_{d_1}\}\cup \{y_{a'_1},\dots,y_{a'_{d_2}}\}\cup \{a'_1,\dots,a'_{d_2}\}$
is isomorphic to $S_{d_1,d_2}$: the center is $a$, the vertices $x_1,\dots,x_{d_1}$ form the $d_1$ branches of length $1$, and for each $j\in\{1,\dots,d_2\}$ the path $a-y_{a'_j}-a'_j$ is a branch of length $2$.

In both cases, the extension contains a copy of $H$.
Therefore $S_{d_1,d_2}$ is $2$-replaceable, witnessed by $(d_1,1)$.
\end{proof}

We now turn to the line $d_1=0$ and show that this phenomenon disappears.
In that case the replaceability parameter cannot be pushed below~$d$.

\begin{proposition}\label{prop:S0d-not-replaceable}
For $d\ge 3$, $S_{0,d}$ is not $\ell$-replaceable for any integer $\ell<d$.
\end{proposition}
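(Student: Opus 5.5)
To show that $S_{0,d}$ is not $\ell$-replaceable for $\ell<d$, it suffices to exhibit, for every tuple $x=(x_1,\dots,x_\ell)$ of positive integers, one vertex cover $C_H$ of $H=S_{0,d}$ whose $x$-extension contains no copy of $H$. I choose $C_H$ to be the set of the $d$ middle vertices $a_1,\dots,a_d$ (the neighbors of the center $c$). This is a vertex cover, since every edge of $S_{0,d}$ is incident to some $a_j$. Then $I_H$ consists of the center $c$, with $N_H(c)=\{a_1,\dots,a_d\}$, and the $d$ leaves $b_j$, with $N_H(b_j)=\{a_j\}$. The base graph $H[C_H]$ is edgeless.

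Next I describe the $x$-extension explicitly. Since $\ell<d$, the family $\mathcal T_i$ for $i\le \ell$ consists of all $i$-subsets of $\{a_1,\dots,a_d\}$, because each of them lies in $N_H(c)$. The extension is therefore a bipartite graph. One side is $A=\{a_1,\dots,a_d\}$. The other side is a set of fresh vertices, each adjacent to a set of at most $\ell\le d-1$ vertices of $A$. In particular, every fresh vertex has degree at most $d-1$.

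Finally I show that this graph contains no copy of $S_{0,d}$, by ruling out every possible center. The center of a copy needs degree at least $d$, so it cannot be a fresh vertex. Suppose the center is some $a\in A$. Its $d$ branches are paths $a-y_j-z_j$ with distinct $y_j,z_j$. All neighbors of $a$ are fresh vertices, so each $y_j$ is fresh and each $z_j$ lies in $A\setminus\{a\}$. The $z_j$ must be $d$ distinct vertices, but $|A\setminus\{a\}|=d-1$, a contradiction.

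The only subtle point is making sure the extension contains no other edges, for instance among the vertices of $A$ or between fresh vertices. This follows from the definition: $H[C_H]$ has no edges because $A$ is independent in $S_{0,d}$, and fresh vertices are adjacent only to vertices of $C_H$. The argument works for every choice of $x$, which proves the proposition. I do not expect any real obstacle; the key is choosing this particular vertex cover, which forces the center of any copy into $A$. Note that the counting contradiction only uses $|A|=d$; the hypothesis $d\ge 3$ matches the range of the lower bound.
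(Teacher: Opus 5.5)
Your proof is correct and matches the paper's argument: the same vertex cover (the $d$ neighbors of the center), the same observation that fresh vertices have degree at most $\ell<d$ and the extension is bipartite, and the same pigeonhole on the $d-1$ remaining vertices of $C_H$.
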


\begin{proof}
Fix an integer $\ell<d$, and let $x=(x_1,\dots,x_\ell)$ be any tuple of positive integers.
We prove that $x$ does not witness $\ell$-replaceability of $H$.

Let $c$ be the center of $H$, let $b_1,\dots,b_d$ be the $d$ neighbors of $c$, and let $a_1,\dots,a_d$ be the leaves, where each $a_i$ is adjacent only to $b_i$.
Consider the vertex cover $C_H:=\{b_1,\dots,b_d\}$.
Then $H[C_H]$ is an independent set on $d$ vertices.
Let $X$ be the $x$-extension of $H$ with respect to $C_H$.
By definition of $x$-extension, 1) every added vertex of $X$ has degree at most $\ell$, and 2) the graph $X$ is bipartite with bipartition $(C_H, V(X)\setminus C_H)$.

Suppose for contradiction that $X$ contains a copy $F$ of $S_{0,d}$, and let $z$ be the center of $F$.
From 1) and $\ell < d$, we have $z\in C_H$.
By 2), every neighbor of $z$ in $F$ lies outside $C_H$.
Since $F$ has no branches of length $1$, each such neighbor must in turn have another neighbor in $F$, distinct from $z$.
Again by 2), this second vertex must belong to $C_H\setminus\{z\}$.
Thus each of the $d$ branches of length $2$ in $F$ ends in a distinct vertex of $C_H\setminus\{z\}$.
But $|C_H\setminus\{z\}|=d-1$,
so there are not enough vertices in $C_H\setminus\{z\}$ to realize all $d$ branches, which is a contradiction.
\end{proof}

\shortonly{
\subsection{Deriving \Cref{thm:short-subdivided-stars,thm:general-result}.}
\Cref{thm:short-subdivided-stars,thm:general-result} are obtained by pairing the first-stage reductions with \Cref{thm:replaceable_kernel}.
For $P_5$, \Cref{lem:direct-p5-vc} gives a vertex cover of size $O(k)$; for $S_{1,2}$ and $S_{d_1,1}$, the first stage gives an isolating set $C$ of size $O(k)$ and a set $Z$ of size $O(k^2)$.
Together with the $2$-replaceability results in \Cref{lem:P5_replaceable,lem:Sd1d2_replaceable}, this yields the claimed bounds.
For \Cref{thm:general-result}, the first stage gives a vertex cover of size $O(d^5k^2)$ for \textsc{$S_{d_1,d_2}$-Packing} with $d_1\ge 1$.
Combining this with \Cref{thm:replaceable_kernel} and \Cref{lem:Sd1d2_replaceable} yields a kernel with $O(d^{10}k^4)$ vertices and $O(d^{15}k^6)$ edges.
}

\longonly{
\subsection{Proofs of Main Theorems}\label{sec:consequences}

Here, we derive the explicit kernel bounds for subdivided stars stated in the introduction.
We begin with the three special cases $P_5$, $S_{1,2}$, and $S_{d_1,1}$; together they imply \Cref{thm:short-subdivided-stars}.
\ShortSubdividedStarKernel*

The theorem follows immediately from the following three propositions.

\begin{proposition}\label{thm:P5-result}
\textsc{$P_5$-Packing} admits a polynomial kernel with $O(k^2)$ vertices and $O(k^3)$ edges.
\end{proposition}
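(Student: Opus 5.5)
The plan is to combine the first-stage reduction of \Cref{lem:direct-p5-vc} with the general isolating-set reduction of \Cref{thm:replaceable_kernel}, instantiated with $r=2$ and $x=(1,1)$ via \Cref{lem:P5_replaceable}.

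First, we run \Cref{lem:direct-p5-vc} on $(G,k)$. If the maximal packing already certifies a yes-instance, we output a constant-size yes-instance, for example $k$ disjoint copies of $P_5$ when $k$ is constant; otherwise we simply output a trivial yes-instance. Otherwise we obtain, in polynomial time, an equivalent instance $(G,k')$ with $k'\le k$ and an explicitly computed vertex cover $C^+$ of size $O(k)$. Note that the construction in the proof of \Cref{lem:direct-p5-vc} yields $C^+$ itself, not just a bound on $\vc(G)$. Each $X_R$ is computable in constant time per component, because every component of $\mathcal R$ is $P_5$-free and has a vertex cover of size at most $3$, which brute force finds in polynomial time.

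Second, we set $C:=C^+$, $Z:=\emptyset$ and $I:=V(G)\setminus C^+$. This partition witnesses that $C$ is an isolating set, since $I$ is independent because $C^+$ is a vertex cover, and $E(Z,I)=\emptyset$ holds trivially. We then apply \Cref{thm:replaceable_kernel} with $r=2$ and witness $x=(1,1)$, which is valid by \Cref{lem:P5_replaceable}. This yields an equivalent instance $(G',k')$ with
\[
|V(G')|\le |C|+|C|+\binom{|C|}{2}=O(k^2).
\]
For the edge bound, \Cref{thm:replaceable_kernel} as stated gives $|E(G[C])|+O(|C|^{3})=O(k^3)$. A sharper count from \Cref{lem:replaceable-isolating-size} shows that each added vertex of $\widehat I\cup R$ lies in $I$ and hence has neighbors only in $C$, so it contributes at most $|C|$ edges. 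Together with $|E(G[C])|\le\binom{|C|}{2}$, this gives $O(k)\cdot O(k^2)=O(k^3)$ edges.

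The main point needing care is the bookkeeping. We must check that the parameter $k'$ produced by \Cref{rule:component-reduction} is carried through unchanged, since \Cref{thm:replaceable_kernel} keeps the parameter fixed. We must also check that the vertex cover is computed explicitly rather than merely shown to exist. Both facts follow directly from the proofs above, so no new combinatorial argument is needed. If the pattern $P_5$ is used under a $(1,1)$-witness, the kernel size constants are explicit: $|C|<5k+3\cdot 2\cdot 5k=35k$.
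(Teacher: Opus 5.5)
Your proposal is correct and follows the paper's proof: run \Cref{lem:direct-p5-vc} to obtain an explicit vertex cover $C$ of size $O(k)$, take $Z=\emptyset$, and apply \Cref{thm:replaceable_kernel} with the $(1,1)$-witness from \Cref{lem:P5_replaceable}, which gives $2|C|+\binom{|C|}{2}=O(k^2)$ vertices and $O(|C|^3)=O(k^3)$ edges. One small slip: brute-force search for each $X_R$ takes polynomial time per component, not constant time, but polynomial time is all the argument needs.
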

\begin{proof}
\Cref{lem:direct-p5-vc} reduces an arbitrary instance $(G,k)$ to an equivalent instance together with a vertex cover $C$ of size $O(k)$. Since $P_5$ is $2$-replaceable witnessed by $(1,1)$ from \Cref{lem:P5_replaceable}, \Cref{thm:replaceable_kernel} reduces it to an equivalent instance on $2|C|+\binom{|C|}{2}=O(k^2)$ vertices and $|C|^2+|C|\cdot \left(2|C|+\binom{|C|}{2}\right)=O(k^3)$ edges.
\end{proof}

\begin{proposition}\label{thm:S12-result}
\textsc{$S_{1,2}$-Packing} admits a polynomial kernel with $O(k^2)$ vertices and $O(k^3)$ edges.
\end{proposition}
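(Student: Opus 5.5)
The plan is to combine the first-stage reduction of \Cref{lem:direct-s12-separated} with the general kernel of \Cref{thm:replaceable_kernel}, exactly parallel to the proof of \Cref{thm:P5-result}.

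\textbf{First stage.} Given $(G,k)$, apply \Cref{lem:direct-s12-separated}. This yields an equivalent instance $(G_1,k_1)$ with $k_1\le k$, together with a partition $V(G_1)=C\uplus Z\uplus I$ witnessing that $C$ is an isolating set. It satisfies
\[
|C|=O(k),\qquad |Z|=O(k^2),\qquad |E(G_1[C\uplus Z])|=O(k^2).
\]
If that lemma instead certifies a yes-instance, we output the disjoint union of $k$ copies of $S_{1,2}$, which is of constant size per copy and hence fine.

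\textbf{Second stage.} By \Cref{lem:Sd1d2_replaceable} with $d_1=1$ and $d_2=2$, the graph $S_{1,2}$ is $2$-replaceable, witnessed by $(1,1)$. Feeding the partition into \Cref{thm:replaceable_kernel} with $r=2$ and $x=(1,1)$ gives an equivalent instance $(G',k_1)$. Its vertex count is at most
\[
|Z|+|C|+|C|+\binom{|C|}{2}=O(k^2)+O(k)+O(k^2)=O(k^2).
\]
For edges, I will use the finer bound of \Cref{lem:replaceable-isolating-size}:
\[
|E(G_1[Z\cup C])|+|C|\left(|C|+\binom{|C|}{2}\right)=O(k^2)+O(k)\cdot O(k^2)=O(k^3).
\]
Here the $O(k^2)$ edge bound inside $C\uplus Z$ from the first stage is essential: a naive bound on $E(G[Z\cup C])$ would be quadratic in $|Z|$, giving $O(k^4)$.

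\textbf{Obstacle.} The only delicate point is this edge count, and it is already handled by \Cref{lem:direct-s12-separated}. Everything else is bookkeeping. Both stages run in polynomial time, and equivalence follows from the safeness of each stage.
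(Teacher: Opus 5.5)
Your proposal is correct and follows the paper's own proof. Both apply \Cref{lem:direct-s12-separated}, then feed the partition into \Cref{thm:replaceable_kernel} using the $(1,1)$-witness from \Cref{lem:Sd1d2_replaceable}. Both rely on the $O(k^2)$ bound on $|E(C\uplus Z)|$ to keep the edge count at $O(k^3)$.
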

\begin{proof}
\Cref{lem:direct-s12-separated} reduces an arbitrary instance $(G,k)$ to an equivalent instance together with an isolating set $C$ and a partition $V(G)=C\uplus Z\uplus I$ witnessing that $C$ is an isolating set, where $|C|=O(k)$ and $|Z|=O(k^2)$.
Since $S_{1,2}$ is $2$-replaceable witnessed by $(1,1)$ from \Cref{lem:Sd1d2_replaceable}, \Cref{thm:replaceable_kernel} with $(d_1,d_2)=(1,2)$ reduces it to an equivalent instance on $|Z|+2|C|+\binom{|C|}{2}=O(k^2)$ vertices and $|E(C\uplus Z)|+|C|\cdot \left(2|C|+\binom{|C|}{2}\right)=O(k^3)$ edges.
\end{proof}

\begin{proposition}\label{thm:Sd1-result}
For every integer $d_1\geq 1$, \textsc{$S_{d_1,1}$-Packing} admits a polynomial kernel with $O(d_1^6k^2)$ vertices and $O(d_1^7k^2+d_1^6k^3)$ edges.
\end{proposition}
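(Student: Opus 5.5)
The plan is to chain the first-stage reduction of \Cref{lem:direct-sd11-separated} with the general reduction of \Cref{thm:replaceable_kernel}, exactly as in the proofs of \Cref{thm:P5-result,thm:S12-result}, while tracking the dependence on $d_1$ explicitly.

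First, apply \Cref{lem:direct-sd11-separated} to the given instance $(G,k)$. This yields in polynomial time an equivalent instance together with a partition $V(G)=C\uplus Z\uplus I$ witnessing that $C$ is an isolating set. It satisfies $|C|=O(d_1^2k)$, $|Z|=O(d_1^6k^2)$ and $|E(C\uplus Z)|=O(d_1^7k^2)$. Since $d=d_1+1=O(d_1)$, all bounds stated there in terms of $d$ translate directly into powers of $d_1$. If that lemma already certifies a yes-instance, we output $k$ disjoint copies of $S_{d_1,1}$, which has size $O(d_1k)$.

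Second, by \Cref{lem:Sd1d2_replaceable} with $d_2=1$, the graph $S_{d_1,1}$ is $2$-replaceable, witnessed by $(d_1,1)$. Plugging this into \Cref{thm:replaceable_kernel} gives an equivalent instance $(G',k)$ with
\[
|V(G')|\le |Z|+|C|+d_1|C|+\tbinom{|C|}{2}.
\]
The terms are $O(d_1^6k^2)$, $O(d_1^2k)$, $O(d_1^3k)$ and $O(d_1^4k^2)$, so the total is $O(d_1^6k^2)$.

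For edges, \Cref{lem:replaceable-isolating-size} gives
\[
|E(G')|\le |E(G[Z\cup C])|+|C|\bigl(d_1|C|+\tbinom{|C|}{2}\bigr).
\]
The first term is $O(d_1^7k^2)$. The second is $O(d_1^2k)\cdot O(d_1^4k^2)=O(d_1^6k^3)$, which dominates the lower-order $O(d_1^5k^2)$ piece. Together these give the claimed $O(d_1^7k^2+d_1^6k^3)$.

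The only step needing care is the edge bound. We should use the sharper bound $|E(C\uplus Z)|=O(d_1^7k^2)$ from \Cref{lem:direct-sd11-separated}, not the trivial $|Z|^2$. We also need the bookkeeping that the edges incident to the retained vertices of $I$ go only to $C$, and that each such vertex has at most $|C|$ neighbors. That accounts for the factor $|C|$ in \Cref{lem:replaceable-isolating-size}. Everything else is direct substitution, and polynomial running time follows since both stages run in polynomial time.
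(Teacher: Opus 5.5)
Your proposal is correct and follows the paper's proof: it chains \Cref{lem:direct-sd11-separated} with \Cref{thm:replaceable_kernel}, using the $2$-replaceability of $S_{d_1,1}$ witnessed by $(d_1,1)$ from \Cref{lem:Sd1d2_replaceable}, and substitutes $|C|=O(d_1^2k)$, $|Z|=O(d_1^6k^2)$ and $|E(C\uplus Z)|=O(d_1^7k^2)$ into the size bounds. Your edge term $|C|\bigl(d_1|C|+\binom{|C|}{2}\bigr)$ is the exact instantiation of \Cref{lem:replaceable-isolating-size}; the paper writes $2|C|$ in place of $d_1|C|$, but this changes only a lower-order $O(d_1^5k^2)$ term.
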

\begin{proof}
Set $d:=d_1+1$.
\Cref{lem:direct-sd11-separated} reduces an arbitrary instance $(G,k)$ to an equivalent instance together with an isolating set $C$ and a partition $V(G)=C\uplus Z\uplus I$ witnessing that $C$ is an isolating set, where $|C|=O\!\left(d_1^2k\right)$ and $|Z|=O\!\left(d_1^6k^2\right)$.
Since $S_{d_1,1}$ is $2$-replaceable witnessed by $(d_1,1)$ from \Cref{lem:Sd1d2_replaceable},
\Cref{thm:replaceable_kernel} applied to the pattern $S_{d_1,1}$ reduces it to an equivalent instance on $|Z|+(d_1+1)|C|+\binom{|C|}{2}=O\!\left(d_1^6k^2\right)$ vertices and $|E(C\uplus Z)|+|C|\cdot \left(2|C|+\binom{|C|}{2}\right)=O\!\left(d_1^7k^2+d_1^6k^3\right)$ edges.
\end{proof}

We now turn to the general subdivided-star case.

\SubdividedStarKernel*
\begin{proof}
\Cref{lem:first-stage-vc} reduces an arbitrary instance $(G,k)$ to an equivalent instance together with a vertex cover $C$ of size at most $O\!\left(d^5k^2\right)$.
Since $S_{d_1,d_2}$ is $2$-replaceable witnessed by $(d_1,1)$ from \Cref{lem:Sd1d2_replaceable},
\Cref{thm:replaceable_kernel} reduces it to an equivalent instance on $(d_1+1)|C|+\binom{|C|}{2}=O\!\left(d^{10}k^4\right)$ vertices and $|C|^2+|C|\cdot \left(2|C|+\binom{|C|}{2}\right)=O\!\left(d^{15}k^6\right)$ edges.
\end{proof}
}

This argument does not extend to the case $d_1=0$.
The first step, namely \Cref{lem:first-stage-vc}, still applies and reduces \textsc{$S_{0,d_2}$-Packing} to an equivalent instance with a vertex cover of size $O(k^2)$.
However, the second step no longer improves the generic bound: by \Cref{prop:S0d-not-replaceable}, the graph $S_{0,d_2}$ is not $\ell$-replaceable for any $\ell<d_2$, so \Cref{thm:replaceable_kernel} yields at best an $O(k^{2d_2})$-vertex kernel, which is no better than the generic \textsc{$d$-Set Packing} bound.

\longonly{
\section{Lower Bound for \textsc{$S_{0,d}$-Packing}}\label{sec:lowerbound}

As explained at the end of the previous section, the two-step approach that yields polynomial kernels for subdivided stars with $d_1\ge 1$ does not improve the generic bound for $S_{0,d}$.
The main result of this section shows that this is not merely a limitation of the method.

\SubdividedStarHardness*

To prove it, fix an integer $d\ge 3$ and let $H:=S_{0,d}$.
We give a parameter-preserving reduction from the following \textsc{Exact $d$-Set Packing}.

\problemdef{Exact $d$-Set Packing:}{Integers $k\ge 0$ and $d\ge 1$, a set $U$ of size $dk$, and a family $\mathcal F\subseteq \binom{U}{d}$.}{Is there a pairwise disjoint subfamily $\mathcal F'\subseteq \mathcal F$ of size $k$?}
Since every set in $\mathcal F$ has size exactly $d$ and $|U|=dk$, such a subfamily exists if and only if it is a partition of $U$.
We will use the following lower bound for this problem.

\begin{lemma}[\cite{DellM12}]
\label{lem:source-lb}
For every fixed integer $d\ge 3$ and $\varepsilon>0$, \textsc{Exact $d$-Set Packing} does not admit a compression of size $O(k^{d-\varepsilon})$ unless $\mathrm{NP}\subseteq \mathrm{coNP}/\mathrm{poly}$.
\end{lemma}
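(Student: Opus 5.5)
The plan is to derive the statement from the lower bound of Dell and Marx~\cite{DellM12} for \textsc{Perfect $d$-Dimensional Matching}, using a trivial embedding. In that problem the universe is $U=U_1\uplus\dots\uplus U_d$ with $|U_j|=k$ for every $j$. The family $\mathcal F\subseteq U_1\times\dots\times U_d$ consists of $d$-sets meeting each part exactly once, and the question is whether $\mathcal F$ contains $k$ pairwise disjoint sets, that is, a perfect matching. Dell and Marx prove that for every $d\ge 3$ and $\varepsilon>0$ this problem has no compression of size $O(k^{d-\varepsilon})$ unless $\NP\subseteq\coNP/\poly$. Their argument uses the complementary witness lemma (weak OR-composition) of Dell and van Melkebeek.

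The reduction itself is immediate. An instance of \textsc{Perfect $d$-Dimensional Matching} is already an instance of \textsc{Exact $d$-Set Packing}:
\begin{itemize}
\item the universe has size $dk$;
\item every member of $\mathcal F$ is a $d$-subset of $U$;
\item the question is the same, namely whether $k$ pairwise disjoint sets exist.
\end{itemize}
We simply forget the $d$-partite structure, and the parameter $k$ is unchanged. Now suppose \textsc{Exact $d$-Set Packing} had a compression of size $O(k^{d-\varepsilon})$. Composing it with this identity map would give a compression of the same size for \textsc{Perfect $d$-Dimensional Matching}, contradicting the Dell--Marx bound.

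If one wants to reprove the underlying bound rather than cite it, I would follow Dell and Marx. Start from $t$ instances of an \NP-hard problem, for instance \textsc{Perfect $d$-Dimensional Matching} itself or \textsc{3-Dimensional Matching} padded suitably, each on $n$ elements. Combine them into a single perfect-matching instance with $k\approx t^{1/d}\,\poly(n)$. The perfect matching exists if and only if at least one input instance is a yes-instance. The combination arranges the $t$ instances on a $d$-dimensional grid of side $t^{1/d}$ and adds gadget sets that "absorb" all unused coordinates except one selected cell.

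A compression of size $O(k^{d-\varepsilon})=O(t^{1-\varepsilon/d}\poly(n))$ would then be an OR-compression of sublinear size in $t$. By the complementary witness lemma, this places the problem in $\coNP/\poly$.

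The main obstacle in such a reproof is the gadget design. The packing must be forced to leave exactly the elements of one chosen instance uncovered while using only $t^{1/d}\poly(n)$ elements in total. For the present paper, however, the lemma is simply taken from~\cite{DellM12}.
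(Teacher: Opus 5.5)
Your proposal is correct and matches the paper: the paper gives no proof and imports this bound directly from Dell and Marx~\cite{DellM12}. Your extra observation is a valid bridge in case the cited theorem is phrased for the $d$-partite variant. A \textsc{Perfect $d$-Dimensional Matching} instance is literally an \textsc{Exact $d$-Set Packing} instance with the same $k$, so the bound transfers through the identity map, and the sketched reproof via OR-composition is not needed.
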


We now reduce \textsc{Exact $d$-Set Packing} to \textsc{$S_{0,d}$-Packing}.
For every element $u\in U$, create two vertices $c_u$ and $i_u$, and let
\(
C:=\{c_u:u\in U\}
\)
and
\(
I:=\{i_u:u\in U\}.
\)
For every set $F\in \mathcal F$, create one vertex $x_F$, and let
\(
X:=\{x_F:F\in \mathcal F\}.
\)
Make $C$, $I$, and $X$ independent.
For every $u\in U$, add the edge $c_ui_u$, so that these edges form a perfect matching between $C$ and $I$.
For every $F\in \mathcal F$ and every $u\in F$, add the edge $x_Fc_u$.
This defines the graph $G$.
\begin{lemma}\label{lem:reduction}
$(U,\mathcal F,k)$ is a yes-instance of \textsc{Exact $d$-Set Packing} if and only if $(G,k)$ is a yes-instance of \textsc{$S_{0,d}$-Packing}.
\end{lemma}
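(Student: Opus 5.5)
The forward direction is direct. Given a pairwise disjoint subfamily $\mathcal F'=\{F_1,\dots,F_k\}\subseteq\mathcal F$, I take for each $F_j$ the vertex $x_{F_j}$ as a center and, for every $u\in F_j$, the path $x_{F_j}-c_u-i_u$ as a branch of length $2$. This gives a copy of $S_{0,d}$ because $|F_j|=d$. Copies coming from distinct sets are vertex-disjoint, since the sets are disjoint and the centers are distinct vertices of $X$.

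For the converse, I fix a packing of $k$ vertex-disjoint copies of $S_{0,d}$ in $G$. The plan is a counting argument on $C$, which has exactly $dk$ vertices. The key structural claim is that every copy contains at least $d$ vertices of $C$. To prove it, I look at where the center $z$ of a copy can lie.
\begin{itemize}
\item $z\in I$ is impossible: then $\deg_G(z)=1<d$.
\item $z\in X$: every neighbor of $z$ lies in $C$, so the $d$ middle vertices of the branches are $d$ distinct vertices of $C$.
\item $z\in C$: each branch $z-y-w$ has its middle vertex $y$ in $I\cup X$ and its end $w$ in $C\setminus\{z\}$. This holds because the only neighbor of $i_u$ is $c_u$, so a branch through some $i_u$ cannot be extended and only $X$-middles occur; the neighbors of $x_F$ all lie in $C$. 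Hence the copy contains $z$ together with $d$ distinct ends in $C$, that is, $d+1$ vertices of $C$.
\end{itemize}

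Since $|C|=dk$, the claim forces each copy to contain exactly $d$ vertices of $C$, so no copy is centered in $C$. Every copy is therefore centered at some $x_F$, and its middle vertices are exactly $N_G(x_F)=\{c_u:u\in F\}$, because $x_F$ has degree $d$. Copies with distinct centers $x_F,x_{F'}$ are disjoint, so $F\cap F'=\emptyset$. The $k$ centers thus give $k$ pairwise disjoint sets of $\mathcal F$.

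The only point that needs care is the case $z\in C$. It uses the fact that $i_u$ is a leaf, so branches through $I$ cannot reach length $2$. This in turn uses that $C$, $I$, and $X$ are independent, and that the only edges are the matching $c_ui_u$ and the incidence edges $x_Fc_u$. I would check this case explicitly; the rest is counting. Combined with \Cref{lem:source-lb}, and since the parameter is preserved, this yields \Cref{thm:S0d-lb-result}.
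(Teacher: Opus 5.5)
Your proposal is correct and follows essentially the same route as the paper. Both use the same forward construction, the same case analysis on the center giving at least $d$ vertices of $C$ per copy (and $d+1$ when the center lies in $C$), and the same count against $|C|=dk$ to force all centers into $X$. The only minor difference is that the paper handles the center-in-$C$ case directly via the bipartition $C\uplus(I\uplus X)$, while you argue it from the explicit adjacencies.
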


\begin{proof}
Assume first that $(U,\mathcal F,k)$ is a yes-instance.
Let $\mathcal F'=\{F_1,\dots,F_k\}\subseteq \mathcal F$ be a family of $k$ pairwise disjoint sets.
For each $j\in [k]$, consider the vertex $x_{F_j}\in X$.
Since $F_j\in \binom{U}{d}$, the vertex $x_{F_j}$ is adjacent to exactly the $d$ vertices $\{c_u:u\in F_j\}\subseteq C$.
Together with the matched vertices $\{i_u:u\in F_j\}\subseteq I$, these vertices form a copy of $S_{0,d}$ centered at $x_{F_j}$, with branches $x_{F_j}-c_u-i_u$ for $u\in F_j$.
Because the sets $F_1,\dots,F_k$ are pairwise disjoint, the resulting $k$ copies of $S_{0,d}$ are pairwise vertex-disjoint.
Hence $(G,k)$ is a yes-instance.

Assume now that $(G,k)$ is a yes-instance, and let $P_1,\dots,P_k$ be pairwise vertex-disjoint copies of $S_{0,d}$ in $G$.
Since $G$ is bipartite with bipartition $C\uplus (I\uplus X)$, for every $j\in [k]$, one of the following holds.
\begin{itemize}
    \item If the center of $P_j$ lies in $I\uplus X$, then the $d$ internal vertices of its branches lie in $C$.
    Hence $P_j$ uses exactly $d$ vertices of $C$.
    \item If the center of $P_j$ lies in $C$, then the $d$ leaves of its branches lie in $C$.
    Hence $P_j$ uses exactly $d+1$ vertices of $C$.
\end{itemize}
Therefore, every copy $P_j$ uses at least $d$ vertices of $C$.
Since the copies are vertex-disjoint and $|C|=dk$, this forces every $P_j$ to use exactly $d$ vertices of $C$.
In particular, no copy $P_j$ can have its center in $C$.
Also, no copy can have its center in $I$, because every vertex of $I$ has degree $1$ in $G$, while the center of $S_{0,d}$ has degree $d\ge 3$.
Hence, the center of every copy $P_j$ lies in $X$.

Fix $j\in [k]$ and let the center of $P_j$ be $x_F$ for some $F\in \mathcal F$.
The vertex $x_F$ is adjacent exactly to the vertices $\{c_u:u\in F\}\subseteq C$, and it has degree exactly $d$.
Since $P_j$ is a copy of $S_{0,d}$ centered at $x_F$, the $d$ internal vertices of its branches are therefore exactly the vertices $\{c_u:u\in F\}$.
Thus, every copy $P_j$ determines one set $F_j\in \mathcal F$.

Because the copies $P_1,\dots,P_k$ are vertex-disjoint, the corresponding subsets of $C$ are pairwise disjoint.
Hence, the sets $F_1,\dots,F_k$ are pairwise disjoint members of $\mathcal F$.
Since there are $k$ of them and $|U|=dk$, they form a solution to \textsc{Exact $d$-Set Packing}.
\end{proof}

\begin{proof}[Proof of \Cref{thm:S0d-lb-result}]
The reduction from \textsc{Exact $d$-Set Packing} to \textsc{$S_{0,d}$-Packing} given above is polynomial-time and preserves the parameter $k$ by \Cref{lem:reduction}.
Hence any compression for \textsc{$S_{0,d}$-Packing} of size $O(k^{d-\varepsilon})$ would immediately yield a compression of the same size for \textsc{Exact $d$-Set Packing}.
This contradicts \Cref{lem:source-lb}.
\end{proof}
}

\longonly{
\section{Paw-Packing}\label{sec:paw}

\newcommand{\Etriangle}{E_{\triangle}}

In this section, we prove the following theorem.

\PawKernel*

The argument follows the same broad two-step pattern as for subdivided stars.
In the first step, we reduce an arbitrary instance of \textsc{Paw-Packing} to an equivalent one with a vertex cover $C$ of size $O(k^2)$ and an independent set $I:=V(G)\setminus C$.
The crucial additional point is to control the set
\[
\Etriangle(C,I):=\{uv\in E(G[C])\, \colon\, \exists x\in I\text{ with }ux,vx\in E(G)\},
\]
that is, the edges of the cover that can lie on triangles with a vertex of $I$. 
We ensure that $|\Etriangle(C,I)|=O(k^2)$.
In the second step, we reduce instances by keeping only $O(|C|+|\Etriangle(C,I)|)=O(k^2)$ representative vertices on the independent side.

\subsection{Reducing to Instances with a Quadratic-Size Vertex Cover}\label{subsec:paw-stage1}

Let $\mathcal P$ be a maximal family of pairwise vertex-disjoint paws in $G$, and put
\(
C:=\bigcup_{P\in\mathcal P}V(P).
\)
If $|\mathcal P|\ge k$, then $(G,k)$ is already a yes-instance.
Hence we may assume that $|\mathcal P|<k$.
Since each paw has four vertices, this gives $|C|=4|\mathcal P|<4k$, and maximality implies that $G-C$ is paw-free.

The next lemma describes the connected components of the paw-free graph $G-C$.
\begin{lemma}\label{lem:paw-components}
Every connected component of the paw-free graph $G-C$ is either a triangle or triangle-free.
\end{lemma}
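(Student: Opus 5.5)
Fix a connected component $R$ of $G-C$; since $G-C$ is paw-free, so is $R$. If $R$ is triangle-free, there is nothing to prove. So assume $R$ contains a triangle $T=\{a,b,c\}$, and we aim to show that $V(R)=T$. The plan is to prove that no vertex outside $T$ is adjacent to $T$, and then conclude by connectivity.

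Suppose for contradiction that some vertex $v\in V(R)\setminus T$ has a neighbor in $T$. We split into cases according to $|N(v)\cap T|$.
\begin{itemize}
\item If $|N(v)\cap T|=1$, say $N(v)\cap T=\{a\}$, then $T$ together with the pendant edge $av$ is a paw.
\item If $|N(v)\cap T|=2$, say $N(v)\cap T=\{a,b\}$, then $\{v,a,b\}$ is a triangle, and $ac$ is a pendant edge at $a$. Here $c$ is distinct from $v,a,b$, so this is a (not necessarily induced) paw.
\item If $|N(v)\cap T|=3$, then $\{v,a,b\}$ is again a triangle with pendant edge $ac$, giving a paw.
\end{itemize}
In fact, a single argument covers all cases. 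Any vertex $v\notin T$ adjacent to some $t\in T$ yields a paw, namely the triangle $T$ with the pendant edge $tv$. Since copies of $H$ need not be induced, the extra edges from $v$ to the rest of $T$ are irrelevant.

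Therefore no vertex of $R\setminus T$ has a neighbor in $T$. Because $R$ is connected, any vertex of $R$ outside $T$ would give a path to $T$, and the first vertex of that path outside $T$ would be adjacent to $T$. Hence $V(R)=T$, and the edges of $R$ are exactly the three edges of $T$, so $R$ is a triangle.

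There is no real obstacle here. The only point needing care is that the paw copies are not required to be induced, which is what makes the single pendant-edge argument sufficient.
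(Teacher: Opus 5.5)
Your proposal is correct and follows essentially the same argument as the paper: a vertex outside $T$ adjacent to some $t\in T$ yields a paw (triangle $T$ plus the pendant edge from $t$), so by connectivity $R=T$. Your case split on $|N(v)\cap T|$ is not needed, as you note yourself; the single pendant-edge argument is exactly what the paper uses.
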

\begin{proof}
Let $R$ be a connected component of $G-C$, and suppose that $R$ contains a triangle $T$.
Then no vertex of $R\setminus V(T)$ can be adjacent to a vertex of $T$, because otherwise $T$ together with such a vertex would contain a paw.
Since $R$ is connected, this implies that $R=T$.
\end{proof}

\subsubsection{Reducing triangle components}
We first reduce the number of triangle components of $G-C$ by applying the expansion lemma, and then move the surviving ones into the vertex cover.

Let $\mathcal R_\triangle$ be the family of connected components of $G-C$ that are triangles.
We can safely delete all isolated triangles; thus, we can assume each $R\in \mathcal R$ has a neighbor in $C$.

Define the auxiliary bipartite graph $\mathcal{B}_{\triangle}:=(C\uplus \mathcal R_\triangle,F)$, where a vertex $c\in C$ is adjacent to a component $T\in\mathcal R_\triangle$ if and only if some vertex of $T$ is adjacent to $c$ in $G$.
Apply \Cref{lem:expansion} to $\mathcal{B}_{\triangle}$ with $q=1$.
We obtain subsets $\widehat C\subseteq C$ and $\widehat{\mathcal R}_\triangle\subseteq \mathcal R_\triangle$, and an edge set $M\subseteq F$ such that
\begin{enumerate}
    \item $|\widehat{\mathcal R}_\triangle|\le |\widehat C|$,
    \item every vertex of $C\setminus \widehat C$ is incident in $M$ with exactly one edge,
    \item every vertex of $\mathcal R_\triangle\setminus \widehat{\mathcal R}_\triangle$ is incident in $M$ with at most one edge,
    \item there is no edge of $\mathcal{B}_{\triangle}$ between $\widehat C$ and $\mathcal R_\triangle\setminus \widehat{\mathcal R}_\triangle$.
\end{enumerate}
Set $A:=C\setminus \widehat C$ and $\mathcal D:=\mathcal R_\triangle\setminus \widehat{\mathcal R}_\triangle$.
We apply the following reduction rule.

\begin{rrule}\label{rule:paw-triangle}
If $A\neq\emptyset$, then delete $A\uplus \bigcup_{T\in\mathcal D}V(T)$ and decrease $k$ by $|A|$.
\end{rrule}

\begin{lemma}\label{lem:paw-triangle-safe}
\Cref{rule:paw-triangle} is safe.
\end{lemma}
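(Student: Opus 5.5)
We mirror the proof of \Cref{lem:component-reduction-safe}. Let $D:=A\cup\bigcup_{T\in\mathcal D}V(T)$ be the deleted set, $G':=G-D$ and $k':=k-|A|$.

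\textbf{Forward direction.} Let $\mathcal Q$ be a packing of $k$ paws in $G$. We claim that every paw of $\mathcal Q$ meeting $D$ contains a vertex of $A$, so that at most $|A|$ paws are lost.
\begin{itemize}
\item If the paw meets some triangle $T\in\mathcal D$ but avoids $A$, then it is not contained in $G-C$, since $G-C$ is paw-free.
\item Since $T$ is a whole component of $G-C$, the paw is connected, and it leaves $T$, it must use an edge from $T$ to a vertex of $C$.
\item By property~4, no vertex of $\widehat C$ is adjacent to a component of $\mathcal D$, so that vertex of $C$ lies in $A$.
\end{itemize}
Hence each paw meeting $D$ contains a distinct vertex of $A$. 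Discarding these paws leaves at least $k'$ paws inside $G'$.

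\textbf{Backward direction.} Let $\mathcal Q'$ be a packing of $k'$ paws in $G'$. For each $c\in A$, property~2 gives a triangle $T_c\in\mathcal D$ with $cT_c\in M$. By property~3, the triangles $T_c$ for distinct $c\in A$ are distinct, and they are vertex-disjoint because they are different components. By definition of $\mathcal B_\triangle$, $c$ has a neighbor $t\in V(T_c)$. Then $T_c$ together with the pendant edge $ct$ is a paw on $\{c\}\cup V(T_c)\subseteq D$.

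These $|A|$ paws are pairwise disjoint and lie in $D$, so they are disjoint from $\mathcal Q'\subseteq G'$. Together with $\mathcal Q'$ they give $k'+|A|=k$ disjoint paws in $G$.

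The only delicate point is the forward claim that a paw touching a triangle of $\mathcal D$ must hit $A$. It relies on $G-C$ being paw-free (maximality of $\mathcal P$) and on property~4. The backward direction is immediate, because $q=1$ already suffices: one triangle plus its attachment vertex forms a paw.
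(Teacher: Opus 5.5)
Your proof is correct and takes essentially the same approach as the paper's. One direction uses property~4: a paw meeting a triangle component of the paw-free graph $G-C$ must contain a vertex of $C$ adjacent to it, hence a distinct vertex of $A$. The other direction builds, for each $c\in A$, a paw from $c$ and its matched triangle $T_c$ inside the deleted set; you spell out this construction explicitly, whereas the paper only asserts that the deleted side contains $|A|$ disjoint paws.
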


\begin{proof}
Let $G':=G-\Bigl(A\uplus \bigcup_{T\in\mathcal D}V(T)\Bigr)$ and $k':=k-|A|$.
Suppose first that $(G',k')$ is a yes-instance.
The deleted side contains $|A|$ pairwise vertex-disjoint paws.
Together with any packing of $k'$ paws in $G'$, these yield a packing of $k'+|A|=k$ paws in $G$.

Conversely, suppose that $(G,k)$ is a yes-instance and let $\mathcal Q$ be a packing of $k$ paws in $G$.
If a paw $Q\in\mathcal Q$ uses a vertex of some triangle component $T\in\mathcal D$, $Q$ contains a vertex of $C$ adjacent to $T$.
By the fourth property, no vertex of $\widehat C$ is adjacent to a component of $\mathcal D$, and therefore this vertex of $C$ belongs to $A$.
Thus, every paw crossing $A\uplus \bigcup_{T\in\mathcal D}V(T)$ also crosses $A$.
Since the packing is vertex-disjoint, at most $|A|$ paws of $\mathcal Q$ cross the deleted side.
Removing them leaves at least $k-|A|=k'$ paws in $G'$.
Hence $(G',k')$ is a yes-instance.
\end{proof}

If $|A|\ge k$, we return the disjoint union of $k$ paws.
Otherwise, if $A\neq\emptyset$, we apply \Cref{rule:paw-triangle}.
Consequently, $|\mathcal R_\triangle|=|\widehat{\mathcal R}_\triangle|\le |\widehat C_0|=|C_0|<4k$ by the first property.
Move all surviving triangle components into $C$.
Then $G-C$ is triangle-free by \Cref{lem:paw-components} and $|C|\le 4k+3|\mathcal R_\triangle|<16k$.

\subsubsection{Reducing non-triangle components}
It remains to control the triangle-free part of $G-C$.
The argument has three steps.
First, we exhaust a reduction rule showing that a sufficiently large matching in some neighborhood $N_{G-C}(c)$ already yields one additional paw.
Second, for each $c\in C$, we choose a small set $X_c\subseteq N_{G-C}(c)$ in a specific way: if $N_{G-C}(c)$ is already small we keep it entirely, and otherwise we keep a small vertex cover of $N_{G-C}(c)$.
Third, after enlarging the vertex cover by all sets $X_c$, we delete the remaining edges outside the cover.

\begin{rrule}\label{rule:paw-matching}
If there exists $c\in C$ such that $N_{G-C}(c)$ contains a matching of size at least $4k-2$, then delete $c$ and decrease $k$ by $1$.
\end{rrule}

\begin{lemma}\label{lem:paw-matching-safe}
\Cref{rule:paw-matching} is safe.
\end{lemma}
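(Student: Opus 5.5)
The plan is a direct exchange argument in the style of the proofs of \Cref{lem:s12-scattered-safe} and \Cref{lem:matching-rule-safe}. We read ``$N_{G-C}(c)$ contains a matching'' as a matching $N$ in $G[N_{G-C}(c)]$. Thus every edge $uv\in N$ has both endpoints adjacent to $c$, so $\{c,u,v\}$ is a triangle.

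The forward direction is immediate. Given a packing of $k$ paws in $G$, discard the (at most one) paw that contains $c$. The remaining $k-1$ paws lie in $G-c$, so $(G-c,k-1)$ is a yes-instance.

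For the converse, let $\mathcal Q$ be a packing of $k-1$ paws in $G-c$.
\begin{itemize}
\item Since $G-C$ is paw-free (by maximality of $\mathcal P$, and this is preserved by the earlier rules), every paw in $\mathcal Q$ meets $C$. It cannot meet $C$ at $c$, so it meets $C\setminus\{c\}$. Hence it uses at most $3$ vertices of $G-C$.
\item So $\mathcal Q$ uses at most $3(k-1)$ vertices of $G-C$. Each such vertex lies on at most one edge of the matching $N$, so at most $3k-3$ edges of $N$ meet $V(\mathcal Q)$.
\item Since $|N|\ge 4k-2$, at least $k+1\ge 2$ edges of $N$ are disjoint from $V(\mathcal Q)$, and they also avoid $c$ and $C$. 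Pick two of them, $uv$ and $u'v'$.
\item The vertices $c,u,v,u'$ are pairwise distinct. The set $\{c,u,v\}$ is a triangle, and $u'$ is adjacent to $c$ because $u'\in N_{G-C}(c)$. Together they form a paw: the triangle $cuv$ with pendant edge $cu'$.
\item This paw is disjoint from every member of $\mathcal Q$, so adding it to $\mathcal Q$ gives $k$ vertex-disjoint paws in $G$.
\end{itemize}

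The only point needing care is the counting, namely that enough matching edges survive. It uses only that each paw of $\mathcal Q$ has at least one vertex in $C\setminus\{c\}$, which gives slack $(4k-2)-(3k-3)=k+1\ge 2$. The case $k=1$ is consistent: then $\mathcal Q=\emptyset$, and a matching of size $2$ in $N_{G-C}(c)$ directly yields a paw. If $k=0$ the instance is trivially yes and the rule need not be applied.
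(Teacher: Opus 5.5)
Your proof is correct and follows the paper's argument: find two edges of the matching in $N_{G-C}(c)$ that avoid the packing, and form the paw from the triangle $cuv$ with pendant edge $cu'$. The only difference is the count. You use the invariant that $G-C$ stays paw-free to bound the number of blocked edges by $3(k-1)$, whereas the paper just uses that the $k-1$ paws have at most $4(k-1)$ vertices in total, which already leaves $(4k-2)-4(k-1)=2$ free edges, so your extra invariant, while valid, is not needed.
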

\begin{proof}
Let $c\in C$, and let $M$ be a matching of size at least $4k-2$ in $N_{G-C}(c)$.
Clearly, if $(G,k)$ is a yes-instance, then $(G-c,k-1)$ is a yes-instance.
We focus on the opposite direction and assume $(G-c,k-1)$ is a yes-instance.

Let $\mathcal Q$ be a packing of $k-1$ paws in $G-c$.
Since each paw has four vertices, the packing uses at most $4(k-1)$ vertices.
Because the edges of $M$ are pairwise vertex-disjoint, at most $4(k-1)$ edges of $M$ intersect $V(\mathcal Q)$.
Hence at least $(4k-2)-4(k-1)=2$ edges of $M$ remain unused in $\mathcal{Q}$.
Choose two such edges, say $xy$ and $zw$.
Then $x,y,z\in N_{G-C}(c)$, and therefore the four vertices $c,x,y,z$ contain a paw: use the triangle on $c,x,y$ and the edge $cz$ as the pendant edge.
This paw is disjoint from every paw of $\mathcal Q$.
Hence $(G,k)$ is a yes-instance.
\end{proof}

Now, for every $c\in C$, $N_{G-C}(c)$ does not have a matching of size $4k-2$.
Hence, $N_{G-C}(c)$ has a vertex cover of size at most $2\cdot (4k-2)<8k$.
For each $c\in C$, define a set $X_c\subseteq N_{G-C}(c)$ as follows:
\begin{itemize}
    \item If $|N_{G-C}(c)|\leq 8k$, set $X_c:=N_{G-C}(c)$.
    \item If $|N_{G-C}(c)| > 8k$, set $X_c$ by a vertex cover of $N_{G-C}(c)$ of size $8k$.
\end{itemize}
Now define
\(
C^{\ast}:=C\cup \bigcup_{c\in C}X_c
\)
and
\(
I^{\ast}:=V(G)\setminus C^{\ast}.
\)
Let $G'$ be obtained from $G$ by deleting every edge with both endpoints in $I^{\ast}$.
We first show that no deleted edge lies in a triangle.
\begin{lemma}\label{lem:paw-no-triangle-on-deleted-edge}
If $yz\in E(G)$ with $y,z\in I^{\ast}$, then $yz$ lies on no triangle of $G$.
\end{lemma}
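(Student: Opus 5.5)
The plan is a short case analysis on where the third vertex of a hypothetical triangle can lie. It uses two facts: $G-C$ is triangle-free after the triangle components were moved into $C$, and the sets $X_c$ were chosen so that every edge of $G[N_{G-C}(c)]$ is hit by $C^\ast$.

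Suppose for contradiction that $yz\in E(G)$ with $y,z\in I^{\ast}$ lies on a triangle $\{y,z,w\}$ of $G$. Since $C\subseteq C^{\ast}$, neither $y$ nor $z$ lies in $C$, so $y,z\in V(G-C)$.

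\emph{Case 1: $w\notin C$.} Then the whole triangle lies in $G-C$. But after the triangle components were moved into $C$, the graph $G-C$ is triangle-free by \Cref{lem:paw-components}. Removing vertices via \Cref{rule:paw-matching} preserves this, so this case is impossible.

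\emph{Case 2: $w=c\in C$.} Then $y,z\in N_{G-C}(c)$, and $yz$ is an edge of the induced graph $G[N_{G-C}(c)]$. We look at how $X_c$ was defined.
\begin{itemize}
\item If $|N_{G-C}(c)|\le 8k$, then $X_c=N_{G-C}(c)$ contains $y$.
\item Otherwise, $X_c$ is a vertex cover of $G[N_{G-C}(c)]$, so it contains $y$ or $z$.
\end{itemize}
Either way, one of $y,z$ lies in $X_c\subseteq C^{\ast}$, contradicting $y,z\in I^{\ast}$.

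The only point needing care is bookkeeping rather than combinatorics. The sets $X_c$, $C^\ast$ and $I^\ast$ must be defined only after \Cref{rule:paw-matching} has been exhausted, with respect to the current $C$ (the one enlarged by the surviving triangle components). The phrase ``$N_{G-C}(c)$ has a vertex cover'' must be read as a vertex cover of the induced subgraph $G[N_{G-C}(c)]$; it exists with size $<8k$ because a maximal matching there has fewer than $4k-2$ edges. With these conventions fixed, the two cases above are exhaustive and the lemma follows.
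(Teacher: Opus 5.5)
Your proof is correct and follows the same approach as the paper: triangle-freeness of $G-C$ forces the third vertex of the triangle into $C$, and then $X_c$ must contain $y$ or $z$, since in both branches of its definition it is a vertex cover of $G[N_{G-C}(c)]$. Your separate handling of the case $X_c=N_{G-C}(c)$, and your remark about defining $X_c$, $C^\ast$, $I^\ast$ after exhausting \Cref{rule:paw-matching}, just make explicit what the paper's one-line argument leaves implicit.
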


\begin{proof}
Assume $yzc$ is a triangle.
Since $G-C$ is triangle-free, $c\in C$.
Since $X_c$ is a vertex cover of $N_{G-C}(c)$, either $y$ or $z$ lies in $X_c$, which contradicts $y,z\in I^{\ast}$.
\end{proof}

We now prove that deleting all edges inside $I^{\ast}$ is safe.
The point is that, by the previous lemma, such an edge cannot lie on a triangle.
Hence if a paw uses a deleted edge at all, that edge can only serve as its pendant edge, and we will show that such a paw can be replaced by another paw.

\begin{lemma}\label{lem:paw-delete-Istar-safe}
The instance $(G',k)$ is equivalent to $(G,k)$.
\end{lemma}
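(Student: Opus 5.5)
The plan is a replacement argument. Since $G'$ is a subgraph of $G$, every paw packing in $G'$ is also a paw packing in $G$, so only the converse needs work. Let $\mathcal Q$ be a packing of $k$ paws in $G$. I will modify the paws one at a time so that none of them uses an edge with both endpoints in $I^{\ast}$. Throughout, the modified family must stay vertex-disjoint.

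\emph{Step 1: where a deleted edge can sit.} Take a paw $Q\in\mathcal Q$ that uses an edge $yz$ with $y,z\in I^{\ast}$. By \Cref{lem:paw-no-triangle-on-deleted-edge}, $yz$ is not a triangle edge of $Q$, so it is the pendant edge. Exactly one endpoint, say $y$, lies on the triangle $\{y,a,b\}$ of $Q$, and $z$ is the pendant leaf. Consequently $Q$ contains only one edge inside $I^{\ast}$. Since $G-C$ is triangle-free after the surviving triangle components were moved into $C$, the triangle meets $C$; as $y\in I^{\ast}$ and $C\subseteq C^{\ast}$, we may assume $a\in C$. Also $y\notin C$, so $y\in N_{G-C}(a)$.

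\emph{Step 2: the centre $a$ has many spare neighbours.} We have $y\in N_{G-C}(a)$ but $y\notin X_a\subseteq C^{\ast}$. By the definition of $X_a$, this can only happen when $|N_{G-C}(a)|>8k$, in which case $X_a$ is a vertex cover of size $8k$. So $a$ has more than $8k$ neighbours in $G-C$.

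\emph{Step 3: the replacement paw.} Replace $Q$ by the paw with the same triangle $\{y,a,b\}$ and the new pendant edge $aw$. Here $w\in N_{G-C}(a)\setminus\{y,b\}$ is chosen outside $V(\mathcal Q)$ and outside all replacement vertices chosen earlier.
\begin{itemize}
\item Such a $w$ exists: $|V(\mathcal Q)|\le 4k$, at most $k$ replacement vertices are chosen in total, and $4k+k+2<8k+1$ once $k\ge 1$.
\item The new paw lives in $G'$: the triangle edges lie on a triangle, so by \Cref{lem:paw-no-triangle-on-deleted-edge} they are not deleted, and $aw$ has the endpoint $a\in C\subseteq C^{\ast}$.
\item The family stays vertex-disjoint: the new paw only drops $z$ and adds the fresh vertex $w$.
\end{itemize}
Each $a$ belongs to a single paw, so the replacements never compete for the same centre. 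After treating every offending paw we obtain $k$ disjoint paws in $G'$, which proves the equivalence.

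The main obstacle is Step 2. The key observation is that $y\notin X_a$ forces the large case $|N_{G-C}(a)|>8k$. This threshold is exactly what guarantees an unused pendant neighbour of $a$ even after accounting for the whole packing.
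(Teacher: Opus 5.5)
Your proof is correct and is essentially the paper's argument. A deleted edge can only be the pendant edge; the triangle contains some $a\in C$ whose triangle-neighbour lies outside $X_a$, which forces $|N_{G-C}(a)|>8k$; and the pendant leaf is then swapped for an unused neighbour of $a$. The paper instead takes a packing minimizing the number of bad paws and picks the new leaf from $X_c$ (using $|X_c|=8k$), whereas you process offending paws one by one and pick from all of $N_{G-C}(a)$, which is equally valid because the new edge is incident to $a\in C\subseteq C^{\ast}$.
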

\begin{proof}
If $(G',k)$ is a yes-instance, then $(G,k)$ is a yes-instance because $G'$ is a spanning subgraph of $G$.
For the converse direction, suppose that $(G,k)$ is a yes-instance.
Call a paw in $G$ \emph{good} if all its edges are present in $G'$, and \emph{bad} otherwise.

Let $\mathcal Q$ be a packing of $k$ paws in $G$ with the minimum number of bad paws, and assume for contradiction that $\mathcal Q$ contains a bad paw $P$.
Then $P$ contains an edge with both endpoints in $I^{\ast}$.
By \Cref{lem:paw-no-triangle-on-deleted-edge}, such an edge cannot lie on the triangle of $P$.
Since a paw has only one non-triangle edge, $P$ has exactly one deleted edge, and this edge is the pendant edge of $P$.
Therefore, $P$ has the form $V(P)=\{c,b,x,w\}$, where $c\in C$, $b\in C^{\ast}$, the vertices $c,b,x$ form a triangle, and $xw$ is an edge of $I^{\ast}$.
Particularly, $c$ has a neighbor $x\not \in X_{c}$ in $N_{G-C}(c)$; thus, by definition of $X_{c}$, $|N_{G-C}(c)| > 8k$ and $|X_{c}|=8k$.
Since $\mathcal{Q}\setminus \{P\}$ uses $4(k-1) < 8k$ vertices, there is a vertex $y\in X_c$ that is unused in $\mathcal{Q}$.
Define $P'$ to be the paw on the four vertices $\{c,b,x,y\}$, using the triangle $cbx$ and the pendant edge $cy$.
Then, $P'$ is a good triangle disjoint from all paws in $\mathcal{Q}\setminus \{P\}$.
Replacing $P$ by $P'$ decreases the number of bad paws, which contradicts the choice of $\mathcal Q$.
\end{proof}

Since every edge of $G'$ with both endpoints outside $C^{\ast}$ has been deleted, $C^{\ast}$ is a vertex cover of $G'$.
Moreover, $|C^{\ast}|\le |C|+\sum_{c\in C}|X_c|<16k+16k\cdot 8k=O(k^2)$.
It remains to bound the number of edges in $\Etriangle(C^{\ast},I^{\ast})$.

\begin{lemma}\label{lem:paw-active-bound}
$|\Etriangle(C^{\ast},I^{\ast})|= O(k^2)$.
\end{lemma}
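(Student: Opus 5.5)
The plan is to bound $\Etriangle(C^{\ast},I^{\ast})$ by sorting each edge $uv\in E(G'[C^{\ast}])$ with a common neighbor $x\in I^{\ast}$ according to where $u$ and $v$ lie relative to $C$, using $C^{\ast}=C\uplus(C^{\ast}\setminus C)$ and $C^{\ast}\setminus C\subseteq V(G)\setminus C$. Since $G'$ only lost edges inside $I^{\ast}$, the edges $ux,vx$ (each with an endpoint in $C^{\ast}$) and $uv$ are edges of $G$. So $\{u,v,x\}$ is a triangle of $G$.

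First, consider edges with both endpoints in $C$. There are at most $\binom{|C|}{2}<\binom{16k}{2}=O(k^2)$ of them, so they need no further argument.

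Second, consider edges with both endpoints in $C^{\ast}\setminus C$. I claim there are none. We have $x\in I^{\ast}\subseteq V(G)\setminus C$, so the triangle $\{u,v,x\}$ would lie entirely in $G-C$. This contradicts the fact, established after \Cref{rule:paw-triangle} via \Cref{lem:paw-components}, that $G-C$ is triangle-free.

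Third, consider mixed edges with $u\in C$ and $v\in C^{\ast}\setminus C$. Here $v,x\in N_{G-C}(u)$ and $vx\in E(G)$, so $vx$ is an edge inside $N_{G-C}(u)$. Since $x\in I^{\ast}$, we have $x\notin X_u$. If $|N_{G-C}(u)|\le 8k$, then $X_u=N_{G-C}(u)\ni x$, which is impossible. Hence $X_u$ is a vertex cover of $N_{G-C}(u)$. It must cover the edge $vx$, and $x\notin X_u$, so $v\in X_u$. Charging each such edge $uv$ to the pair $(u,v)$ with $v\in X_u$, the number of mixed edges is at most $\sum_{u\in C}|X_u|\le |C|\cdot 8k<16k\cdot 8k=O(k^2)$.

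Summing the three cases gives $|\Etriangle(C^{\ast},I^{\ast})|=O(k^2)$. The only delicate step is the mixed case, specifically observing that $x\in I^{\ast}$ forces the vertex-cover case of the definition of $X_u$ and hence $v\in X_u$. The rest follows directly from triangle-freeness of $G-C$ and $|C|<16k$.
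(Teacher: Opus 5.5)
Your proof is correct and follows essentially the same route as the paper. Triangle-freeness of $G-C$ rules out edges with no endpoint in $C$, edges inside $C$ are bounded by $\binom{|C|}{2}$, and each mixed edge $uv$ is charged to $v\in X_u$ because the witness $x\in I^{\ast}$ cannot lie in $X_u$. Your explicit treatment of the case $|N_{G-C}(u)|\le 8k$ just spells out what the paper compresses into ``either $v$ or $x$ is in $X_c$''.
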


\begin{proof}
Since $G-C$ is triangle-free, for each edge in $\Etriangle(C^{\ast},I^{\ast})$, at least one of its endpoints is in $C$.
We split into two cases:

\proofsubparagraph*{Case 1: Both endpoints lie in $C$.}
There are at most $\binom{|C|}{2}=O(k^2)$ such edges because $|C|<16k$.

\proofsubparagraph*{Case 2: Exactly one endpoint lies in $C$.}
Let $cv\in \Etriangle(C^{\ast},I^{\ast})$ with $c\in C$ and $v\in C^{\ast}\setminus C$, witnessed by a vertex $x\in I^{\ast}$.
Then, $v,x\in N_{G-C}(c)$, and thus, either $v$ or $x$ is in $X_c$.
Since $x\in I^{\ast}$, we have $v\in X_c$.
Particularly, the number of such edges is at most $\sum_{c\in C}|X_c|\leq |C|\cdot 8k= O(k^2)$.
Combining the two cases completes the proof.
\end{proof}

The graph $G'$ together with the partition $V(G')=C^{\ast}\uplus I^{\ast}$ satisfies all required properties.
Thus, we have the following.
\begin{lemma}\label{lem:paw-stage1}
Given an instance $(G,k)$ of \textup{\textsc{Paw-Packing}}, one can compute in polynomial time an equivalent instance $(G',k')$ together with a partition $V(G')=C\uplus I$ such that $I$ is independent and $k'\leq k$, $|C|=O(k^2)$, and $|\Etriangle(C,I)|=O(k^2)$.
\end{lemma}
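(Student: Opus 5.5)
The plan is to assemble the first-stage pipeline for paws from the steps already developed in \Cref{subsec:paw-stage1}, checking at each stage that the parameter only decreases and that the sizes are as claimed.

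\textbf{Setup.} Start from a maximal paw packing $\mathcal P$. If $|\mathcal P|\ge k$, output the disjoint union of $k$ paws; this is a trivial instance satisfying all bounds. Otherwise set $C:=\bigcup V(P)$, so $|C|<4k$.

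\textbf{Triangle components.} Delete isolated triangle components of $G-C$. Then apply the expansion lemma with $q=1$ and \Cref{rule:paw-triangle}, which is safe by \Cref{lem:paw-triangle-safe}.
\begin{itemize}
\item If $|A|\ge k$, return the trivial yes-instance.
\item Otherwise $k'=k-|A|\le k$, at most $|C|<4k$ triangle components survive, and we move them into $C$.
\end{itemize}
By \Cref{lem:paw-components}, $G-C$ is now triangle-free and $|C|<16k$.

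\textbf{Matching rule.} Exhaust \Cref{rule:paw-matching}, which is safe by \Cref{lem:paw-matching-safe}; each application only decreases $k$. This step needs care. Deleting a vertex $c\in C$ keeps $G-C$ triangle-free, and keeps the rule's threshold meaningful for the current parameter. If the parameter reaches $0$, output a trivial yes-instance. After exhaustion, each $N_{G-C}(c)$ has no matching of size $4k-2$. Hence the endpoints of a maximal matching give a vertex cover of $N_{G-C}(c)$ of size less than $8k$, computable in polynomial time. This lets us define $X_c$ as in the text, padding to size exactly $8k$ when $|N_{G-C}(c)|>8k$.

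\textbf{Final partition.} Form $C^*$ and $I^*$, and delete all edges inside $I^*$; equivalence holds by \Cref{lem:paw-delete-Istar-safe}. The resulting $G'$ has $I^*$ independent, since every edge with both ends in $I^*$ was removed. The bound $|C^*|<16k+128k^2=O(k^2)$ is immediate, and \Cref{lem:paw-active-bound} gives $|\Etriangle(C^*,I^*)|=O(k^2)$.

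\textbf{Main obstacle.} The subtle point is that \Cref{lem:paw-active-bound} is stated for $G$ before the edge deletions, while the lemma concerns $G'$. Deleting edges can only shrink $\Etriangle$, so the bound transfers.

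Every step runs in polynomial time, since the expansion lemma, maximal packings, and matchings are all computable in polynomial time.
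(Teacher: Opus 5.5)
Your proposal is correct and follows essentially the same route as the paper: maximal packing, the expansion-lemma reduction of triangle components (moving the survivors into $C$), exhausting the matching rule, choosing $X_c$ (padded to size $8k$), and deleting the edges inside $I^{\ast}$. You also make explicit several points the paper leaves implicit: the $k=0$ case, the maximal-matching vertex cover of size less than $8k$, and the transfer of the $E_{\triangle}(C^{\ast},I^{\ast})$ bound to $G'$. On the last point, $E_{\triangle}$ is in fact unchanged, since no deleted edge touches $C^{\ast}$.
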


\subsection{Kernelization from a Vertex Cover}\label{subsec:paw-stage2}

Here, we kernelize a bounded-vertex-cover instance obtained in the first stage.
Let $V(G)=C\uplus I$ and assume $C$ is a vertex cover of $G$, $|C|= O(k^2)$, and $|\Etriangle(C,I)|= O(k^2)$.
We abuse notation so that $\Etriangle$ represents $\Etriangle(C,I)$.
Define the bipartite graph $\mathcal{B}:=(L\uplus I,F)$ as follows.
The left side is $L:=C\uplus \Etriangle$.
For $c\in C$ and $x\in I$, add the edge $cx$ to $F$ if $cx\in E(G)$.
For $e=uv\in \Etriangle$ and $x\in I$, add the edge $ex$ to $F$ if $ux,vx\in E(G)$, i.e., $uvx$ is a triangle.

Apply \Cref{lem:expansion} to $\mathcal{B}$ with $q=1$.
We obtain subsets $\widehat L\subseteq L$ and $\widehat I\subseteq I$, and an edge set $M\subseteq F$ satisfying the four properties guaranteed by \Cref{lem:expansion}.
Set $A:=L\setminus \widehat L$ and $D:=I\setminus \widehat I$, and let $R_D\subseteq D$ be the set of vertices of $D$ incident with $M$.
Since $q=1$, the set $M$ is a matching that saturates $A$ into $R_D$.
For every $\lambda\in A$, let $m_{\lambda}\in R_D$ denote the unique vertex such that $\lambda m_{\lambda}\in M$.
Let $G':=G-(D\setminus R_D)$.
From now on, we prove that $(G',k)$ is a kernel of $(G,k)$.
First, we bound the size.
\begin{lemma}\label{lem:paw-size}
The graph $G'$ satisfies
\[
|V(G')|\le 2|C|+|\Etriangle|,
\qquad
|E(G')|\le \binom{|C|}{2}+|C|\bigl(|C|+|\Etriangle|\bigr).
\]
In particular, $|V(G')|=O(k^2)$ and $|E(G')|=O(k^4)$.
\end{lemma}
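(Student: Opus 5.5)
The plan is to count vertices of $G'$ by splitting $V(G')=C\uplus \widehat I\uplus R_D$, mirroring \Cref{lem:replaceable-isolating-size}. By property~1 of \Cref{lem:expansion} with $q=1$, we have $|\widehat I|\le |\widehat L|$. Since $M$ is a matching that saturates $A=L\setminus\widehat L$ into $R_D$, and every vertex of $R_D$ is incident with exactly one edge of $M$ (by property~3 and the definition of $R_D$), we get $|R_D|=|A|=|L\setminus \widehat L|$. Hence $|\widehat I|+|R_D|\le |L|=|C|+|\Etriangle|$, and adding $|C|$ yields $|V(G')|\le 2|C|+|\Etriangle|$.

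For the edge bound, we use that $C$ remains a vertex cover of $G'$ and that $V(G')\setminus C=\widehat I\cup R_D$ is independent. Every edge of $G'$ therefore either lies inside $C$, which gives at most $\binom{|C|}{2}$ edges, or joins $C$ to $\widehat I\cup R_D$. Each vertex of $C$ has at most $|\widehat I\cup R_D|\le |C|+|\Etriangle|$ neighbours on that side. Together this gives $\binom{|C|}{2}+|C|(|C|+|\Etriangle|)$.

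Finally, we substitute $|C|=O(k^2)$ and $|\Etriangle|=O(k^2)$, both from \Cref{lem:paw-stage1}. This gives $|V(G')|=O(k^2)$ and $|E(G')|=O(k^4)$.

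There is no real obstacle. The only point needing care is the equality $|R_D|=|A|$. It holds because each $\lambda\in A$ has exactly one $M$-edge, and the mates $m_\lambda$ are pairwise distinct since each vertex of $D$ has at most one incident $M$-edge.
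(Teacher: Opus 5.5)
Your proposal is correct and follows the paper's proof essentially line by line. The vertex bound comes from $|\widehat I|\le|\widehat L|$ and $|R_D|=|A|$, giving $|V(G')\cap I|\le|L|=|C|+|E_{\triangle}|$. The edge bound counts at most $\binom{|C|}{2}$ edges inside $C$ plus at most $|C|\cdot|V(G')\cap I|$ edges to the independent side.

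One small precision: the upper bound only needs $|R_D|\le|A|$. That direction follows from property~4 (the $M$-edge at any vertex of $D$ must come from $A$) together with $q=1$. Your closing remark argues the reverse inequality, but this does not affect the correctness of your bound.
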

\begin{proof}
The first property yields $|\widehat I|\le |\widehat L|$.
Since $M$ is a matching that saturates $A$ into $R_D$, $|R_D|=|A|=|L|-|\widehat L|$.
Therefore $|V(G')\cap I|=|\widehat I|+|R_D|\le |\widehat L|+|L|-|\widehat L|=|L|$.
Now $|L|=|C|+|\Etriangle|$, so $|V(G')|=|C|+|V(G')\cap I|\le 2|C|+|\Etriangle|$.
Since $C$ is a vertex cover of $G$, it is also a vertex cover of $G'$.
Hence $|E(G')|\le |E(G'[C])|+|C|\cdot |V(G')\cap I|\le \binom{|C|}{2}+|C|\bigl(|C|+|\Etriangle|\bigr)$.
The final asymptotic bounds follow from the assumptions on $|C|$ and $|\Etriangle|$.
\end{proof}

Next, we prove correctness.
Since $G'$ is a subgraph of $G$, if $(G',k)$ is a yes-instance, $(G,k)$ is a yes-instance.
Thus, we focus on the opposite direction and assume $(G,k)$ is a yes-instance.

Fix a packing $\mathcal P$ of $k$ copies of paws in $G$.
We replace each original copy $P\in \mathcal{P}$ by a copy $P^{\star}$ in $G'$.
The replacement will satisfy three conditions:
\begin{enumerate}
    \item $V(P^{\star})\cap C\subseteq V(P)\cap C$,
    \item $V(P^{\star})\cap \widehat I\subseteq V(P)\cap \widehat I$, and
    \item every vertex of $V(P^{\star})\cap R_D$ is equal to $m_{\lambda}$ for some $\lambda\in (V(P)\cap C)\cup (E(P)\cap \Etriangle)$.
\end{enumerate}

We first explain why these conditions are sufficient.
\begin{lemma}\label{lem:paw-replacement-principle}
Let $\mathcal P$ be a collection of pairwise vertex-disjoint copies of paws in $G$.
Assume that for every $P\in \mathcal P$ there exists a copy $P^{\star}$ of a paw in $G'$ that satisfies the three conditions above.
Then $\{P^{\star}:P\in \mathcal P\}$ is a collection of pairwise vertex-disjoint copies of paws in $G'$.
\end{lemma}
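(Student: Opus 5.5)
The plan mirrors the proof of \Cref{lem:Ext-disjoint}. Each $P^{\star}$ is already a copy of a paw in $G'$ by assumption, so the only thing to show is pairwise vertex-disjointness. I fix distinct $P,Q\in\mathcal P$ and suppose some vertex $v$ lies in $V(P^{\star})\cap V(Q^{\star})$. Since $V(G')=C\uplus \widehat I\uplus R_D$, I split into three cases according to which part contains $v$.

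If $v\in C$, the first condition gives $v\in V(P)\cap C$ and $v\in V(Q)\cap C$. This contradicts the vertex-disjointness of $P$ and $Q$. The case $v\in\widehat I$ is handled the same way using the second condition.

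The remaining case $v\in R_D$ is the only one needing an argument. By the third condition, $v=m_\lambda$ for some $\lambda\in (V(P)\cap C)\cup(E(P)\cap\Etriangle)$, and also $v=m_{\lambda'}$ for some $\lambda'\in (V(Q)\cap C)\cup(E(Q)\cap\Etriangle)$. Here $m_\lambda$ is defined because $\lambda\in L$, and the condition implicitly requires $\lambda\in A$; I read the condition as stating this. By the third property of \Cref{lem:expansion}, every vertex of $D$ is incident with at most one edge of $M$. Since $\lambda v$ and $\lambda' v$ are both edges of $M$, this forces $\lambda=\lambda'$.

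Now $L=C\uplus\Etriangle$, so I consider the two possibilities for $\lambda$. If $\lambda\in C$, then $\lambda$ is a common vertex of $P$ and $Q$. If $\lambda=uw\in\Etriangle$, then $uw$ is an edge of both $P$ and $Q$, so in particular $u\in V(P)\cap V(Q)$. Either way $P$ and $Q$ share a vertex, which is a contradiction.

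I do not expect a real obstacle here. The one point that needs care is that an edge label in $\Etriangle$ also witnesses a shared vertex, which holds because both of its endpoints belong to the paw.
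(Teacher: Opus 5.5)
Your proof is correct and follows essentially the same argument as the paper. You split a common vertex of $P^{\star}$ and $Q^{\star}$ according to whether it lies in $C$, $\widehat I$, or $R_D$. In the last case you use that a vertex of $D=I\setminus\widehat I$ is incident with at most one edge of $M$, which yields a single label $\lambda$ in both $(V(P)\cap C)\cup(E(P)\cap E_{\triangle})$ and $(V(Q)\cap C)\cup(E(Q)\cap E_{\triangle})$, so $P$ and $Q$ share a vertex.
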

\begin{proof}
Let $P,Q\in \mathcal P$ be distinct.
If a vertex of $P^{\star}\cap Q^{\star}$ lies in $C$, then by the first condition it belongs to both $V(P)$ and $V(Q)$, contrary to the disjointness of $\mathcal P$.
The same argument with the second condition shows that $P^{\star}$ and $Q^{\star}$ are disjoint on $\widehat I$.

Now let $x\in P^{\star}\cap Q^{\star}\cap R_D$.
Since $x\in R_D$, the third property implies that $x$ is incident with exactly one edge of the matching $M$.
Hence there is a unique $\lambda\in A$ with $x=m_{\lambda}$.
By the third condition for $P^{\star}$ and $Q^{\star}$, this label $\lambda$ belongs to both $(V(P)\cap C)\cup (E(P)\cap \Etriangle)$ and $(V(Q)\cap C)\cup (E(Q)\cap \Etriangle)$.
If $\lambda\in C$, then $\lambda\in V(P)\cap V(Q)$, contradicting the disjointness of $\mathcal P$.
If $\lambda=uv\in \Etriangle$, then $u,v\in V(P)\cap C$ and $u,v\in V(Q)\cap C$, again contradicting the disjointness of $\mathcal P$.
Therefore $P^{\star}$ and $Q^{\star}$ are disjoint on $R_D$ as well.
\end{proof}

It remains to construct the replacement.
\begin{lemma}\label{lem:replace-paw}
Let $P$ be a copy of a paw in $G$.
Then there exists a copy $P^{\star}$ of a paw in $G'$ such that:
\begin{enumerate}
    \item $V(P^{\star})\cap C\subseteq V(P)\cap C$,
    \item $V(P^{\star})\cap \widehat I\subseteq V(P)\cap \widehat I$, and
    \item every vertex of $V(P^{\star})\cap R_D$ is equal to $m_{\lambda}$ for some $\lambda\in (V(P)\cap C)\cup (E(P)\cap \Etriangle)$.
\end{enumerate}
\end{lemma}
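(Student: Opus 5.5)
The plan is to keep $V(P)\cap V(G')$ in place and replace each vertex of $V(P)\cap D$ by a matched representative attached to a label of $P$. Since $I$ is independent, every vertex of $P$ lying in $I$ has all its $P$-neighbours in $C$. If $V(P)\cap D=\emptyset$, set $P^\star:=P$. The key observation is the following: for $x\in V(P)\cap D$, every label $\lambda\in L$ adjacent to $x$ in $\mathcal B$ lies in $A$ by Property~4 of \Cref{lem:expansion}, so it has a mate $m_\lambda\in R_D\subseteq V(G')$. Moreover, labels coming from $P$ are $\mathcal{B}$-adjacent to $x$ in the following sense.
\begin{itemize}
    \item Each $P$-neighbour $c\in C$ of $x$ is a label of this kind.
    \item If $x$ lies on the triangle of $P$, its two triangle neighbours $u,v$ lie in $C$, and $uv\in E(P)$ is an edge of $\Etriangle$ with $x$ as witness. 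Hence $uv\in A$.
\end{itemize}
Distinct labels have distinct mates, since $M$ is a matching.

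The case analysis is by the role of the vertices of $V(P)\cap D$ in the paw; write the paw as triangle $abt$ with pendant $tw$ ($t$ of degree $3$).
\begin{enumerate}
    \item \emph{The pendant leaf $w$ lies in $D$.} Then $t\in C$. Replace $w$ by $m_t$, which is adjacent to $t$. We need $m_t\notin V(P)$; if $m_t$ already lies in $P$, then $m_t\in V(P)\cap R_D$ is a vertex allowed by condition~3 anyway, and we must pick differently, as discussed below.
    \item \emph{A triangle vertex $a$ of degree $2$ lies in $D$.} Then $b,t\in C$, and we replace $a$ by $m_{bt}$, which forms a triangle with $b,t$.
    \item \emph{The vertex $t$ lies in $D$.} Then $a,b,w\in C$. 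Replace $t$ by $m_{ab}$, which is adjacent to $a,b$, and use the pendant edge $aw$ or $bw$ if it exists. Otherwise, rebuild the paw: $m_{ab}$ is adjacent to $a$ and $b$, and we need a pendant at $a$ or $b$. A cleaner option is the paw on triangle $a,b,m_{ab}$ with pendant $a\,m_a$. Since $a\in C$ is adjacent to $t\in D$, the label $a$ is in $A$, so $m_a$ exists. It differs from $m_{ab}$ because the two labels are distinct. It is not in $C$ or $\widehat I$, so conditions~1 and~2 hold. Here $w$ is simply dropped, which is allowed since conditions~1--3 are only inclusions.
\end{enumerate}
At most one independent vertex lies on the triangle, and $t$, $w$ cannot both lie in $I$. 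So the sub-cases are: one or two $D$-vertices among $\{a,w\}$ (or $\{b,w\}$), or $t$ alone in $D$. When two $D$-vertices are replaced simultaneously (say $a$ and $w$), the mates used are $m_{bt}$ and $m_t$, which are distinct.

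The main obstacle is collision with vertices of $P$ itself that are kept. A mate $m_\lambda$ could coincide with a vertex of $P$ in $R_D$ that we keep. To avoid this, I would replace \emph{every} vertex of $V(P)\cap D$, not only those in $D\setminus R_D$. All new vertices are then mates of distinct labels, every kept vertex lies in $C\cup\widehat I$, and mates lie in $R_D$, which is disjoint from $C\cup \widehat I$. This makes the vertices of $P^\star$ distinct.

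Conditions~1 and~2 hold because the kept vertices are exactly $V(P)\cap(C\cup\widehat I)$. Condition~3 holds because every vertex of $V(P^\star)\cap R_D$ is a mate $m_\lambda$ of a label $\lambda$ in $(V(P)\cap C)\cup(E(P)\cap\Etriangle)$. It remains to verify edge by edge that each case yields a paw.
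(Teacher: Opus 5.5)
Your proposal is correct and follows essentially the paper's argument. Property~4 puts every label $\mathcal B$-adjacent to a vertex of $V(P)\cap D$ into $A$, and replacing all of $V(P)\cap D$ by mates of distinct labels of $P$, which lie in $R_D$ and hence avoid the kept vertices in $C\cup\widehat I$, gives conditions 1--3. The only difference is that the paper treats every case with a triangle vertex in $D$ uniformly by the paw on $\{a,b,m_{ab},m_a\}$, where $a,b$ are the two $C$-vertices of the triangle and the fourth vertex is dropped; this makes your separate sub-cases for a degree-$2$ triangle vertex in $D$, possibly together with the leaf, unnecessary.
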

\begin{proof}
Let $X:=V(P)\cap D$.
We distinguish two cases.

\proofsubparagraph*{Case 1. $X=\{x\}$, where $x$ is the leaf of $P$.}

Let $V(P)=\{a,b,c,x\}$, where $c\in C$ be the unique neighbor of $x$ in $P$.
Since $x\in D$, the fourth property ensures $c\in A$; thus, $m_c\in R_D$ is defined.
Then $V(P^{\star}):=\{a,b,c,m_c\}$ contains a paw with the triangle $abc$ and pendant $cm_c$.

The first two conditions are immediate.
The third condition follows from $c\in C$.

\proofsubparagraph*{Case 2. Some vertex of $X$ lies on the triangle of $P$.}

Let $V(P)=\{a,b,x,t\}$, where $abx$ is a triangle and $x\in X$.
Since $x\in I$ and $abx$ is a triangle, $a,b\in C$, and thus, $ab\in \Etriangle$.
Since $x\in D$, the fourth property ensures $ab\in A$; thus, $m_{ab}\in R_D$ is defined.
Moreover, again from $x\in D$ and the fourth property, $a\in A$; thus, $m_a\in R_D$ is defined.
Then $V(P^{\star}):=\{a,b,m_{ab},m_a\}$ contains a paw with the triangle $abm_{ab}$ and pendant $am_a$.

The first two conditions are immediate.
The third condition follows from $ab\in \Etriangle$ and $a\in C$.
\end{proof}

Combining \Cref{lem:paw-size,lem:paw-replacement-principle,lem:replace-paw} immediately yields the following.
\begin{lemma}\label{lem:paw-stage2}
There is a polynomial-time algorithm that, given an instance $(G,k)$ together with a partition $V(G)=C\uplus I$ such that $C$ is a vertex cover of $G$, $|C|= O(k^2)$, and $|\Etriangle(C,I)|= O(k^2)$, outputs an equivalent instance $(G',k)$ such that
\[
|V(G')|\le 2|C|+|\Etriangle(C,I)|,
\qquad
|E(G')|\le \binom{|C|}{2}+|C|\bigl(|C|+|\Etriangle(C,I)|\bigr).
\]
In particular, $(G',k)$ has $O(k^2)$ vertices and $O(k^4)$ edges.
\end{lemma}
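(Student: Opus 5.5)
The proof of \Cref{lem:paw-stage2} assembles the three preceding lemmas; the only real content is checking that everything fits. First, the algorithm builds $\Etriangle=\Etriangle(C,I)$. This takes polynomial time: for each edge $uv\in E(G[C])$ and each $x\in I$, test whether $ux,vx\in E(G)$. It then forms the bipartite graph $\mathcal B=(L\uplus I,F)$ with $L=C\uplus\Etriangle$. Next it applies \Cref{lem:expansion} with $q=1$, which is polynomial by that lemma, and finally outputs $G'=G-(D\setminus R_D)$ with the same parameter $k$.

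For the size bounds I would cite \Cref{lem:paw-size} directly. It gives $|V(G')|\le 2|C|+|\Etriangle|$ and the stated edge bound. Plugging in $|C|=O(k^2)$ and $|\Etriangle|=O(k^2)$ yields $O(k^2)$ vertices and $O(k^4)$ edges.

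For equivalence, one direction is immediate because $G'$ is an induced subgraph of $G$. For the other, fix a packing $\mathcal P$ of $k$ paws in $G$. For each $P\in\mathcal P$ I take the copy $P^\star$ given by \Cref{lem:replace-paw}, and then \Cref{lem:paw-replacement-principle} shows that $\{P^\star\}$ is a packing of $k$ paws in $G'$.

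The one step I would check carefully is that \Cref{lem:replace-paw} really covers every paw $P$ and really produces a subgraph of $G'$. The cases split as follows.
\begin{itemize}
\item If $V(P)\cap D=\emptyset$, set $P^\star:=P$. Every vertex of $P$ lies in $C\cup\widehat I\subseteq V(G')$, and conditions 1--3 hold trivially.
\item If $V(P)\cap D\neq\emptyset$, note that $D\subseteq I$ is independent. So either some vertex of $D$ lies on the triangle of $P$ (Case 2 of the lemma), or the only $D$-vertices are leaves, and a paw has exactly one leaf (Case 1).
\end{itemize}
In Case 1 the remaining three vertices $a,b,c$ of $P$ may include a vertex of $\widehat I$. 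That is allowed, since the lemma keeps $a,b,c$ and condition 2 only requires containment in $V(P)\cap\widehat I$.

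In Case 2, the vertices $a,b,m_{ab},m_a$ are distinct because $m_{ab}\neq m_a$: $M$ is a matching with $ab\neq a$ in $L$. Each $m_\lambda$ lies in $R_D\subseteq V(G')$. The required adjacencies hold because $\lambda m_\lambda\in F$, which means $m_{ab}$ is adjacent to both $a$ and $b$, and $m_a$ is adjacent to $a$.

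Once these degenerate cases are confirmed, the three lemmas combine to prove the statement.
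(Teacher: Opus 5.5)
Your proposal is correct and follows the paper's approach: the paper proves this lemma by combining \Cref{lem:paw-size,lem:paw-replacement-principle,lem:replace-paw}, exactly as you do. Your separate treatment of $V(P)\cap D=\emptyset$ (taking $P^\star:=P$) is a worthwhile addition, because the paper's proof of \Cref{lem:replace-paw} only covers the two cases with $V(P)\cap D\neq\emptyset$ and leaves this trivial case implicit.
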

\begin{proof}[Proof of \Cref{thm:paw-result}]
\Cref{lem:paw-stage1} reduces an arbitrary instance $(G,k)$ to an equivalent instance $(G',k')$ with $k'\le k$ together with a partition $V(G')=C\uplus I$ such that $C$ is a vertex cover of size $O(k^2)$ and $|\Etriangle(C,I)|=O(k^2)$.
Applying \Cref{lem:paw-stage2} to this instance yields an equivalent instance on $O(k^2)$ vertices.
\end{proof}
}

\section{Conclusion}

In this paper, we obtained kernels for \textsc{$H$-Packing} beating the generic \textsc{$d$-Set Packing} bound for several new pattern graphs $H$.
In particular, we proved:
\begin{itemize}
    \item for $H=P_5$ and $H=S_{1,2}$, a kernel with $O(k^2)$ vertices and $O(k^3)$ edges,
    \item for $H=S_{d_1,1}$, a kernel with $O(k^2)$ vertices and $O(k^3)$ edges,
    \item for $H=S_{d_1,d_2}$ with $d_1\ge 1$, a kernel with $O(k^4)$ vertices and $O(k^6)$ edges, and
    \item for the paw, a kernel with $O(k^2)$ vertices and $O(k^4)$ edges.
\end{itemize}
By contrast, we did not obtain a smaller kernel for the line $H=S_{0,d}$.
Instead, we proved that for every $d\ge 3$ and every $\varepsilon>0$, \textsc{$S_{0,d}$-Packing} does not admit a compression of size $O(k^{d-\varepsilon})$ unless $\NP\subseteq \coNP/\poly$.
In particular, this shows that kernelization for \textsc{$H$-Packing} is not monotone under taking induced subgraphs.
This suggests that the kernel complexity of \textsc{$H$-Packing} is rather subtle, and that further investigation is needed.
\shortonly{Natural next targets include small cyclic patterns such as $C_4$ and diamond, subdivided stars with longer branches, and small trees that are not subdivided stars.}
\longonly{
Natural next targets for small kernels include:
\begin{itemize}
    \item Small graphs with cycles such as $C_4$ (\Cref{fig:c4}) and diamond (\Cref{fig:diamond}).
    \item Subdivided stars with branches of length $\geq 3$; the smallest example is the six-vertex subdivided star with branch lengths $1,1,3$ (\Cref{fig:s201}).
    \item Small trees that are not subdivided stars; the smallest example is $H$ (\Cref{fig:H}).
\end{itemize}
 
For \textsc{$P_4$-Packing} and \textsc{$P_5$-Packing}, Dell and Marx~\cite{DellM12} asked whether there is a kernel of $O(k^2)$ edges, which matches their lower bound.
Our results partially answer this question by giving an $O(k^2)$-vertex kernel for \textsc{$P_5$-Packing}, but whether the number of edges can also be reduced to $O(k^2)$ remains open.
Moreover, by \Cref{subsec:direct-p5}, it suffices to resolve this question for instances with vertex cover number $O(k)$.
In this sense, the present paper may be viewed as a substantial step toward that goal.

Another promising direction is kernelization of \textsc{Induced $H$-Packing}.
At present, a non-trivially small kernel is known only for the case $H=P_3$, which admits an $O(k)$-vertex kernel~\cite{BessyBTW23}.
It is therefore natural to ask whether there is a four-vertex graph $H$ for which \textsc{Induced $H$-Packing} admits a kernel with a subcubic number of vertices.
}

\longonly{
\begin{figure}[t]
\centering
\begin{tikzpicture}[
    baseline=(current bounding box.center),
    every node/.style={inner sep=0pt, outer sep=0pt},
    line cap=round, line join=round,
    ink/.style={draw=black, line width=1.0pt},
    dot/.style={fill=black, draw=black},
    panelcaption/.style={font=\small, align=center}
]
\def\captionY{-2.05}

\begin{scope}[xshift=0cm, x=0.8cm, y=0.8cm]
    \def\r{1.9pt}
    \coordinate (A) at (0,-0.7);
    \coordinate (B) at (1.4,-0.7);
    \coordinate (C) at (1.4,0.7);
    \coordinate (D) at (0,0.7);
    \draw[ink] (A)--(B)--(C)--(D)--cycle;
    \foreach \p in {A,B,C,D}{\fill[dot] (\p) circle (\r);}
    \node[panelcaption] at (0.7,\captionY) {(a) $C_4$};
\end{scope}
\phantomsubcaption\label{fig:c4}

\begin{scope}[xshift=3.6cm, x=0.8cm, y=0.8cm]
    \def\r{1.9pt}
    \coordinate (L) at (-1,0);
    \coordinate (T) at (0,1);
    \coordinate (R) at (1,0);
    \coordinate (B) at (0,-1);
    \draw[ink] (L)--(T)--(R)--(B)--cycle;
    \draw[ink] (T)--(B);
    \foreach \p in {L,T,R,B}{\fill[dot] (\p) circle (\r);}
    \node[panelcaption] at (0,\captionY) {(b) Diamond};
\end{scope}
\phantomsubcaption\label{fig:diamond}

\begin{scope}[xshift=8.2cm, x=1cm, y=0.8cm]
    \def\r{1.9pt}
    \begin{scope}[rotate=90]
        \coordinate (O) at (0,-1.2);
        \coordinate (L) at (-1.2,-2.4);
        \coordinate (R) at (1.2,-2.4);
        \coordinate (A) at (0,0);
        \coordinate (B) at (0,1.2);
        \coordinate (C) at (0,2.4);
        \draw[ink] (O) -- (L);
        \draw[ink] (O) -- (R);
        \draw[ink] (O) -- (A) -- (B) -- (C);
        \foreach \p in {O,L,R,A,B,C}{\fill[dot] (\p) circle (\r);}
    \end{scope}
    \node[panelcaption] at (0,\captionY) {(c) Branch lengths $1,1,3$};
\end{scope}
\phantomsubcaption\label{fig:s201}

\begin{scope}[xshift=13.0cm, x=0.7cm, y=0.8cm]
    \def\r{1.9pt}
    \coordinate (T1) at (-0.8,1.2);
    \coordinate (M1) at (-0.8,0);
    \coordinate (B1) at (-0.8,-1.2);
    \coordinate (T2) at (0.8,1.2);
    \coordinate (M2) at (0.8,0);
    \coordinate (B2) at (0.8,-1.2);
    \draw[ink] (T1)--(M1)--(B1);
    \draw[ink] (T2)--(M2)--(B2);
    \draw[ink] (M1)--(M2);
    \foreach \p in {T1,M1,B1,T2,M2,B2}{\fill[dot] (\p) circle (\r);}
    \node[panelcaption] at (0,\captionY) {(d) $H$};
\end{scope}
\phantomsubcaption\label{fig:H}
\end{tikzpicture}

\caption{Candidate pattern graphs for future work.}
\end{figure}
}

\bibliography{ref}

@inproceedings{BessyBTW23,
  author       = {St{\'{e}}phane Bessy and
                  Marin Bougeret and
                  Dimitrios M. Thilikos and
                  Sebastian Wiederrecht},
  title        = {{Kernelization for Graph Packing Problems via Rainbow Matching}},
  booktitle    = {Proceedings of the Annual {ACM-SIAM} Symposium on Discrete
                  Algorithms ({SODA} 2023)},
  pages        = {3654--3663},
  publisher    = {{SIAM}},
  year         = {2023},
  doi          = {10.1137/1.9781611977554.CH139}
}

@inproceedings{DellM12,
  author       = {Holger Dell and
                  D{\'{a}}niel Marx},
  title        = {{Kernelization of Packing Problems}},
  booktitle    = {Proceedings of the Annual {ACM-SIAM} Symposium on Discrete
                  Algorithms ({SODA} 2012)},
  pages        = {68--81},
  publisher    = {{SIAM}},
  year         = {2012},
  doi          = {10.1137/1.9781611973099.6},
  note         = {arXiv version available at http://arxiv.org/abs/1812.03155}
}

@article{FominLLSTZ19,
  author       = {Fedor V. Fomin and
                  Tien{-}Nam Le and
                  Daniel Lokshtanov and
                  Saket Saurabh and
                  St{\'{e}}phan Thomass{\'{e}} and
                  Meirav Zehavi},
  title        = {Subquadratic Kernels for Implicit 3-Hitting Set and 3-Set Packing
                  Problems},
  journal      = {{ACM} Trans. Algorithms},
  volume       = {15},
  number       = {1},
  pages        = {13:1--13:44},
  year         = {2019},
  doi          = {10.1145/3293466}
}

@article{Thomasse10,
  author       = {St{\'{e}}phan Thomass{\'{e}}},
  title        = {A $4k^2$ Kernel for Feedback Vertex Set},
  journal      = {{ACM} Trans. Algorithms},
  volume       = {6},
  number       = {2},
  pages        = {32:1--32:8},
  year         = {2010},
  doi          = {10.1145/1721837.1721848}
}

@article{Xiao17,
  author       = {Mingyu Xiao},
  title        = {On a generalization of {N}emhauser and {T}rotter's local optimization theorem},
  journal      = {J. Comput. Syst. Sci.},
  volume       = {84},
  pages        = {97--106},
  year         = {2017},
  doi          = {10.1016/J.JCSS.2016.08.003}
}

@article{CervenyCS24,
  author       = {Radovan Cerven{\'{y}} and
                  Pratibha Choudhary and
                  Ondrej Such{\'{y}}},
  title        = {On kernels for $d$-path vertex cover},
  journal      = {J. Comput. Syst. Sci.},
  volume       = {144},
  pages        = {103531},
  year         = {2024},
  doi          = {10.1016/J.JCSS.2024.103531}
}

@article{Abu-KhzamSet10,
  author       = {Faisal N. Abu{-}Khzam},
  title        = {An improved kernelization algorithm for $r$-Set Packing},
  journal      = {Inf. Process. Lett.},
  volume       = {110},
  number       = {16},
  pages        = {621--624},
  year         = {2010},
  doi          = {10.1016/j.ipl.2010.04.020}
}

@article{BjorklundHKK17,
  author       = {Andreas Bj{\"o}rklund and
                  Thore Husfeldt and
                  Petteri Kaski and
                  Mikko Koivisto},
  title        = {Narrow sieves for parameterized paths and packings},
  journal      = {J. Comput. Syst. Sci.},
  volume       = {87},
  pages        = {119--139},
  year         = {2017},
  doi          = {10.1016/J.JCSS.2017.03.003}
}

@inproceedings{Nederlof25,
  author       = {Jesper Nederlof},
  title        = {{Weighted $k$-Path and Other Problems in Almost $O^*(2^k)$ Deterministic Time via Dynamic Representative Sets}},
  booktitle    = {Proceedings of the {IEEE} Annual Symposium on Foundations of Computer Science ({FOCS} 2025)},
  pages        = {2813--2824},
  publisher    = {{IEEE} Computer Society},
  year         = {2025},
  doi          = {10.1109/FOCS63196.2025.00143}
}

@article{FengWC14,
  author       = {Qilong Feng and
                  Jianxin Wang and
                  Jianer Chen},
  title        = {Matching and Weighted ${P}_{2}$-Packing: Algorithms and Kernels},
  journal      = {Theor. Comput. Sci.},
  volume       = {522},
  pages        = {85--94},
  year         = {2014},
  doi          = {10.1016/J.TCS.2013.12.011}
}

@inproceedings{Zehavi15,
  author       = {Meirav Zehavi},
  title        = {{Mixing Color Coding-Related Techniques}},
  booktitle    = {Proceedings of the 23rd Annual European Symposium on Algorithms ({ESA} 2015)},
  series       = {Lecture Notes in Computer Science},
  volume       = {9294},
  pages        = {1037--1049},
  publisher    = {Springer},
  year         = {2015},
  doi          = {10.1007/978-3-662-48350-3\_86}
}

@article{FellowsKNRRSTW08,
  author       = {Michael R. Fellows and
                  Christian Knauer and
                  Naomi Nishimura and
                  Prabhakar Ragde and
                  Frances A. Rosamond and
                  Ulrike Stege and
                  Dimitrios M. Thilikos and
                  Sue Whitesides},
  title        = {Faster Fixed-Parameter Tractable Algorithms for Matching and Packing
                  Problems},
  journal      = {Algorithmica},
  volume       = {52},
  number       = {2},
  pages        = {167--176},
  year         = {2008},
  doi          = {10.1007/S00453-007-9146-Y}
}

@article{PrietoS06,
  author       = {Elena Prieto{-}Rodriguez and
                  Christian Sloper},
  title        = {Looking at the stars},
  journal      = {Theor. Comput. Sci.},
  volume       = {351},
  number       = {3},
  pages        = {437--445},
  year         = {2006},
  doi          = {10.1016/J.TCS.2005.10.009}
}

@inproceedings{Moser09,
  author       = {Hannes Moser},
  editor       = {Mogens Nielsen and
                  Anton{\'{\i}}n Kucera and
                  Peter Bro Miltersen and
                  Catuscia Palamidessi and
                  Petr Tuma and
                  Frank D. Valencia},
  title        = {{A Problem Kernelization for Graph Packing}},
  booktitle    = {Proceedings of the 35th International Conference on Current Trends
                  in Theory and Practice of Computer Science ({SOFSEM} 2009)},
  series       = {Lecture Notes in Computer Science},
  volume       = {5404},
  pages        = {401--412},
  publisher    = {Springer},
  year         = {2009},
  doi          = {10.1007/978-3-540-95891-8\_37}
}

@article{FellowsGMN11,
  author       = {Michael R. Fellows and
                  Jiong Guo and
                  Hannes Moser and
                  Rolf Niedermeier},
  title        = {A generalization of {N}emhauser and {T}rotter's local optimization
                  theorem},
  journal      = {J. Comput. Syst. Sci.},
  volume       = {77},
  number       = {6},
  pages        = {1141--1158},
  year         = {2011},
  doi          = {10.1016/J.JCSS.2010.12.001}
}

@article{WangNFC10,
  author       = {Jianxin Wang and
                  Dan Ning and
                  Qilong Feng and
                  Jianer Chen},
  title        = {An improved kernelization for ${P}_{2}$-packing},
  journal      = {Inf. Process. Lett.},
  volume       = {110},
  number       = {5},
  pages        = {188--192},
  year         = {2010},
  doi          = {10.1016/J.IPL.2009.12.002}
}

@article{ChenFSWY19,
  author       = {Jianer Chen and
                  Henning Fernau and
                  Peter Shaw and
                  Jianxin Wang and
                  Zhibiao Yang},
  title        = {Kernels for packing and covering problems},
  journal      = {Theor. Comput. Sci.},
  volume       = {790},
  pages        = {152--166},
  year         = {2019},
  doi          = {10.1016/J.TCS.2019.04.018}
}

@article{LiYC22,
  author       = {Wenjun Li and
                  Junjie Ye and
                  Yixin Cao},
  title        = {A $5k$-vertex kernel for ${P}_2$-packing},
  journal      = {Theor. Comput. Sci.},
  volume       = {910},
  pages        = {1--13},
  year         = {2022},
  doi          = {10.1016/J.TCS.2022.01.032}
}

@inproceedings{FellowsHRST04,
  author       = {Mike Fellows and
                  Pinar Heggernes and
                  Frances A. Rosamond and
                  Christian Sloper and
                  Jan Arne Telle},
  editor       = {Juraj Hromkovic and
                  Manfred Nagl and
                  Bernhard Westfechtel},
  title        = {{Finding $k$ Disjoint Triangles in an Arbitrary Graph}},
  booktitle    = {Proceedings of the 30th International Workshop on Graph-Theoretic
                  Concepts in Computer Science ({WG} 2004)},
  series       = {Lecture Notes in Computer Science},
  volume       = {3353},
  pages        = {235--244},
  publisher    = {Springer},
  year         = {2004},
  doi          = {10.1007/978-3-540-30559-0\_20}
}

@article{LokshtanovMS18,
  author       = {Daniel Lokshtanov and
                  D{\'{a}}niel Marx and
                  Saket Saurabh},
  title        = {Known Algorithms on Graphs of Bounded Treewidth Are Probably Optimal},
  journal      = {{ACM} Trans. Algorithms},
  volume       = {14},
  number       = {2},
  pages        = {13:1--13:30},
  year         = {2018},
  doi          = {10.1145/3170442}
}

@book{Cygan2015,
  author       = {Marek Cygan and
                  Fedor V. Fomin and
                  Lukasz Kowalik and
                  Daniel Lokshtanov and
                  D{\'{a}}niel Marx and
                  Marcin Pilipczuk and
                  Michal Pilipczuk and
                  Saket Saurabh},
  title        = {Parameterized Algorithms},
  publisher    = {Springer},
  year         = {2015},
  doi          = {10.1007/978-3-319-21275-3},
  isbn         = {978-3-319-21274-6}
}

@book{FominLSZ19,
  author       = {Fedor V. Fomin and
                  Daniel Lokshtanov and
                  Saket Saurabh and
                  Meirav Zehavi},
  title        = {Kernelization: Theory of Parameterized Preprocessing},
  publisher    = {Cambridge University Press},
  year         = {2019},
  doi          = {10.1017/9781107415157},
  isbn         = {9781107057760}
}

@article{KirkpatrickH83,
  author       = {David G. Kirkpatrick and
                  Pavol Hell},
  title        = {On the Complexity of General Graph Factor Problems},
  journal      = {{SIAM} J. Comput.},
  volume       = {12},
  number       = {3},
  pages        = {601--609},
  year         = {1983},
  doi          = {10.1137/0212040},
  bibsource    = {dblp computer science bibliography, https://dblp.org}
}

@inproceedings{KirkpatrickH78,
  author       = {David G. Kirkpatrick and
                  Pavol Hell},
  title        = {{On the Completeness of a Generalized Matching Problem}},
  booktitle    = {Proceedings of the Annual {ACM} Symposium on Theory of Computing ({STOC} 1978)},
  pages        = {240--245},
  publisher    = {{ACM}},
  year         = {1978},
  doi          = {10.1145/800133.804353}
}

@article{Yuster07,
  author       = {Raphael Yuster},
  title        = {Combinatorial and computational aspects of graph packing and graph
                  decomposition},
  journal      = {Comput. Sci. Rev.},
  volume       = {1},
  number       = {1},
  pages        = {12--26},
  year         = {2007},
  doi          = {10.1016/J.COSREV.2007.07.002},
  bibsource    = {dblp computer science bibliography, https://dblp.org}
}
\end{document}